\documentclass[10pt,twocolumn,fleqn]{article}
\usepackage{geometry}
\usepackage[sc]{mathpazo}
\usepackage[mathscr]{eucal}
\usepackage{enumerate}
\usepackage{booktabs}
\usepackage{tabularx}
\usepackage{hyperref}
\usepackage{multirow}
\usepackage{cite}
\usepackage{mathptmx} 
\usepackage{amsmath,amsfonts,amssymb,amsthm}
\usepackage{caption}
\usepackage{subfigure}
\usepackage{tgpagella}
\usepackage{graphicx}
\usepackage{xcolor}
\usepackage[textwidth=3cm,textsize=tiny,shadow]{todonotes}
\newcommand\cM{{\mathcal M}}

\newcommand\scM{{\mathscr M}}

\newcommand\mvector{\boldsymbol}

\newcommand\vf{\mvector{f}}

\newcommand\vp{\mvector{p}}

\newcommand\vq{\mvector{q}}

\newcommand\vu{\mvector{u}}
\newcommand\vv{\mvector{v}}

\newcommand\vx{\mvector{x}}
\newcommand\vy{\mvector{y}}

\newcommand\vP{\mvector{P}}

\newcommand\vX{\mvector{X}}

\newcommand\vGamma{\mvector{\Gamma}}

\newcommand\field{\mathbb}

\newcommand\R{\field{R}}

\newcommand\C{\field{C}}
\newcommand\Z{\field{Z}}

\newcommand\Q{\field{Q}}

\newcommand\diag{\operatorname{diag}}

\newcommand\id{\operatorname{\mathrm{Id}}}

\newcommand\rmd{\mathrm{d}}

\newcommand\rmi{\mathrm{i}\mspace{1mu}}
\newcommand\rme{\mathrm{e}}

\newcommand\pder[2]{\dfrac{\partial #1 }{\partial #2}}

\newcommand\abs[1]{\lvert #1 \rvert}

\usepackage{mathtools}
\theoremstyle{plain}
\newtheorem{theorem}{Theorem}
\newtheorem{lemma}[theorem]{Lemma}
\newtheorem{proposition}[theorem]{Proposition}

\newtheoremstyle{note}{\topsep}{\topsep}{\slshape}{}{\scshape}{}{ }{}
\theoremstyle{note}

\theoremstyle{remark}
\newtheorem{remark}[theorem]{Remark}
\numberwithin{equation}{section}
\numberwithin{theorem}{section}
\usepackage{enumitem}

\setlist[description]{font=\normalfont}
\usepackage{scalerel,stackengine}
\stackMath
\newcommand\reallywidehat[1]{%
\savestack{\tmpbox}{\stretchto{%
  \scaleto{%
    \scalerel*[\widthof{\ensuremath{#1}}]{\kern.1pt\mathchar"0362\kern.1pt}%
    {\rule{0ex}{\textheight}}
  }{\textheight}%
}{2.4ex}}%
\stackon[-6.9pt]{#1}{\tmpbox}%
}

\title{Integrability of non-homogeneous Hamiltonian systems with gyroscopic coupling}
\author{
Wojciech Szumi{\'n}ski$^{1}$ and Andrzej J. Maciejewski$^{2}$ \\
Institute of Physics$^{1}$ and  Janusz Gil Institute of Astronomy$^{2}$ \\
 University of Zielona G\'ora, Licealna 9, PL-65--417,  Zielona G\'ora, Poland
\\ e-mail: w.szuminski@if.uz.zgora.pl
}

\begin{document}
\maketitle
\begin{abstract}
We study the integrability of a two-dimensional Hamiltonian system with a
gyroscopic term and a non-homogeneous potential composed of two homogeneous
components of different degrees. The model describes the motion of a particle in
a plane under the combined influence of a central (Kepler-type) potential, a
uniform magnetic field, and a superposition of homogeneous forces. By
combining the Levi--Civita regularization with the so-called coupling constant
metamorphosis transformation, and employing differential Galois theory, we
derive analytical necessary conditions for integrability in the Liouville sense.  They put restrictions on the degrees of homogeneity of the potential terms and their values in particular points.  The obtained results
encompass and generalize several classical galactic and astrophysical models,
including the generalized Hill model, the H\'enon--Heiles and
Armbruster--Guckenheimer--Kim systems, providing a unified framework for
studying non-homogeneous  Hamiltonians. We demonstrate the
effectiveness of the derived integrability obstructions by proving the
non-integrability of these models in the presence of a uniform rotational field.
The numerical analysis via the Poincar\'e cross-sections further confirms the
analytical results, illustrating the transition from regular to chaotic dynamics
as the rotational and non-homogeneous terms are introduced. Moreover, we show
that, without the Kepler-type term, a generalized non-homogeneous extension of
the exceptional potential remains integrable. The explicit forms of the first integrals are given.

\paragraph{Declaration} The article has been published by \textit{Nonlinear Dynamics} in~\cite{SzuminskiMaciejewski2026}, and the final version is available at: \textbf{\href{https://doi.org/10.1007/s11071-026-12309-x}{https://doi.org/10.1007/s11071-026-12309-x} }
\end{abstract}

\section{Motivation and description of the system}
\label{sec:intro}
The problem of determining whether a given dynamical system is integrable or not
is one of the central topics in the theory of differential equations and
Hamiltonian systems. While the integrability of homogeneous potentials has been
extensively studied and is now relatively well
understood~\cite{Morales:00::,Maciejewski:04::, Szuminski:15::} much less is
known about systems that include additional
non-homogeneous~\cite{Yoshida:88::,Mondejar:99::}, or gyroscopic
terms~\cite{2023NonDy.111..275P}, which naturally arise in many physical
contexts.  Such systems often serve as simplified models of galactic
dynamics~\cite{Caranicolas_1989, Caranicolas_1990a},  
the motion of charged particles in magnetic fields~\cite{Kubu_2024},  
or nonlinear oscillations in rotating frames~\cite{Innanen_1985,Lanchares_2021},  
where the interplay between symmetry and rotation gives rise to complex dynamical structures.  
For a more detailed discussion of these phenomena, see the monograph~\cite{Contopoulos_2002}.

In particular, the inclusion of a gyroscopic (Coriolis-like) term fundamentally alters 
the dynamics, leading to new mechanisms of resonance and chaos~\cite{Hill_1905,Vallado_2007, Giacaglia_2019}. 
At the same time, adding a non-homogeneous contribution to the potential 
breaks the scaling symmetry that usually facilitates analytical integration~\cite{Ito_1987}. 
The combined effect of these two perturbations --- rotation and non-homogeneity --- 
poses a challenging question regarding the persistence or loss of integrability.  

In this paper, we address this problem by analysing a class of two--degree--of--freedom Hamiltonian system governed by Hamiltonian of the form
\begin{align}
\label{eq:Hamek}
	H_\mu=\frac{1}{2}\left(p_1^2+p_2^2\right)
	+\omega(q_2p_1-q_1p_2)
	-\frac{\mu}{r}
	+V(q_1,q_2),
\end{align}
where the potential $V$ is expressed as
\begin{equation}
\label{eq:VV}
V(q_1,q_2)=V_k(q_1,q_2)+V_m(q_1,q_2).
\end{equation}
Hamilton equations determined by the Hamilton function~\eqref{eq:Hamek} have the form
\begin{equation}
\begin{aligned}
\dot q_1 &= p_1 + \omega q_2, 
&\qquad 
\dot p_1 &= \omega p_2 - \frac{\mu q_1}{r^{3}} - \frac{\partial V}{\partial q_1},
\\[1mm]
\dot q_2 &= p_2 - \omega q_1, 
&\qquad
\dot p_2 &= -\omega p_1 - \frac{\mu q_2}{r^{3}} - \frac{\partial V}{\partial q_2}.
\end{aligned}
\label{eq:Ham-eq}
\end{equation}
Here, \(r=\sqrt{q_1^2+q_2^2}\) denotes the distance from the origin.  
The functions \(V_i(q_1,q_2)\) are assumed to be homogeneous of rational degrees \(i\in\mathbb{Q}\setminus\{-1,0\}\), with \(k\neq m\).  
The parameter \(\mu\) measures the strength of the central (Kepler-type) potential, while \(\omega\) represents the frequency associated with the gyroscopic term.  
Throughout the main part  of the paper, we assume $\mu\omega\neq 0$, and under this assumption, we will derive necessary conditions for the Liouville integrability of the system.
The degenerate case $\mu=0$, is treated separately in the final part of the paper.

Hamiltonian~\eqref{eq:Hamek} describes the motion of a particle in a plane under the combined influence of a central potential \(-\mu/r\), two homogeneous potentials \(V_k\) and \(V_m\), and the gyroscopic term \(\omega(q_2p_1-q_1p_2)\) corresponding to motion in a uniformly rotating reference frame. 
The gyroscopic term introduces coupling between the canonical coordinates and momenta, generating effective magnetic-like forces~\cite{Royer_2011,Brandao_2015} that significantly alter the dynamics, consequently, the integrability properties of the system~\cite{Combot_2022,2023NonDy.111..275P}.

Physically, Hamiltonian~\eqref{eq:Hamek} represents a broad class of two-dimensional galactic and astrophysical models describing the motion of a test particle in a rotating frame~\cite{Lanchares_2021,Inarrea_2015}, see also the book~\cite{Binney_Tremaine_2008}. 
The term \(-\mu/r\) accounts for the gravitational attraction of a central mass, whereas the homogeneous components \(V_k\) and \(V_m\) model deviations from spherical symmetry in the galactic or stellar potential. 
The presence of the rotational term \((\omega\neq 0)\) produces characteristic effects such as resonance capture, bifurcations of periodic orbits, and the formation of stochastic layers that separate regions of regular motion\cite{Elmandouh2016,Salas_2022,Lacomba_2012}.

Moreover, Hamiltonian~\eqref{eq:Hamek} admits an equivalent electrodynamic interpretation. 
It can be written in the compact vector form
\begin{equation}
\label{eq:HH1}
H_\text{EM}
=\frac{1}{2}\bigl(\boldsymbol{p}-\boldsymbol{A}(q_1,q_2)\bigr)^2
-\frac{\mu}{r}
+V(q_1,q_2),
\end{equation}
where \(\boldsymbol{A}(q_1,q_2)\) is the vector potential of a uniform magnetic field 
\(\boldsymbol{B}=\nabla\times\boldsymbol{A}=2\omega\,\hat{\boldsymbol{z}}\). 
For the symmetric gauge \(\boldsymbol{A}=(-\omega q_2,\,\omega q_1,\,0)\), 
Hamiltonian~\eqref{eq:HH1} reduces exactly to~\eqref{eq:Hamek} with 
\[
V(q_1,q_2)=\frac{1}{2}\omega^2\left(q_1^2+q_2^2\right)+V_m(q_1,q_2).
\]
This form reveals that the rotation term introduces an additional quadratic contribution to the potential, which breaks its original homogeneity.  
Hence, the study of the non-homogeneous potential~\eqref{eq:VV} is crucial: 
it captures how the combined effects of the magnetic (gyroscopic) field and the nonlinear term \(V_m(q_1,q_2)\) modify the integrability and dynamical structure of the system.  
In this sense, the same mathematical framework describes the planar motion of a charged particle in a constant perpendicular magnetic field, subject to a central Coulomb potential and an additional external potential \(V(q_1,q_2)\).

Such Hamiltonians generalize several classical models in celestial mechanics and galactic dynamics, including the H\'enon--Heiles system~\cite{Henon:64::}, the Armbruster--Guckenheimer--Kim galactic model~\cite{Armbruster_1989}, and the planar Hill or restricted three-body problems~\cite{Hill_1905,Beletsky_2001,Chauvineau_1990}. 
In all these cases, the interplay between the central potential, the anisotropic perturbations, and the gyroscopic term gives rise to rich nonlinear behaviour, ranging from integrable regimes to fully developed chaos. 
Understanding how the gyroscopic term modifies the integrability of these systems is therefore essential for explaining the stability of stellar orbits, the morphology of rotating galaxies, and the onset of chaotic transport in gravitational and electromagnetic systems.

The principal goal of this work is to determine how the presence of the gyroscopic term alters the integrability conditions of generalized galactic-type Hamiltonians. 
In particular, we aim to identify the forms of the potential \(V(q_1,q_2)\) and parameter combinations \((k,m)\) for which system~\eqref{eq:Hamek} may still admit additional meromorphic first integrals. 
This classification bridges the gap between classical non-rotating integrable models and their rotating analogues, providing new insights into the transition from integrable to chaotic dynamics in low-dimensional Hamiltonian systems with rotational symmetry.

The main result of this paper is formulated in the following theorem, which provides necessary conditions for the Liouville integrability of system~\eqref{eq:Hamek}.  
To express it in a compact form, we introduce two auxiliary rational parameters:
\begin{align}
\label{eq:n,l}
n := \frac{2 - k}{m - k}, 
\qquad 
l := \frac{2 + 3k}{m - k},
\end{align}
and define the \textit{integrability coefficients} as
\begin{equation}
\label{eq:coeffs}
\lambda_k := V_k(1, \rmi), 
\qquad 
\lambda_m := V_m(1, \rmi).
\end{equation}
We note that, in general, $\lambda_k, \lambda_m \in \C$.

\begin{theorem}[Main]
\label{th:main_theorem}
Assume that $\mu\,\omega\neq 0$ and $\lambda_k\,\lambda_m\neq0$.  
If system~\eqref{eq:Hamek} is Liouville integrable with meromorphic first integrals, then:
\begin{enumerate}
  \item \label{it:necs1} $l\geq-1$ is an odd, or $l<-1$ is an even integer, or 
  \item \label{it:necs2} $(n+l)(n+l+2)\neq 0$ and either 
  \begin{enumerate}
    \item $n>0$ is an even, or $n<0$ is an odd integer; or
    \item \label{it:necs3} $n+l$ is an even integer, except the case when $l\geq 0$ and $n\leq 0$ are both even integers.
  \end{enumerate}
\end{enumerate}
\end{theorem}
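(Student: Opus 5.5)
We first remove the singularity of the Kepler term and the non-homogeneity by combining the Levi--Civita regularization with coupling constant metamorphosis, and then apply Morales--Ramis theory to a particular solution. The idea is to turn the energy $h$ into a coupling constant and the Kepler constant $\mu$ into the new energy. On the level $H_\mu=h$ we introduce complex coordinates $q_1+\rmi q_2=(u_1+\rmi u_2)^2$ and rescale time by $\rmd t=4r\,\rmd s$. The conformal factor $r=u_1^2+u_2^2$ multiplies every term of the Hamiltonian.

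\begin{itemize}
\item The kinetic part becomes $\tfrac12(P_1^2+P_2^2)$.
\item The Kepler term becomes the constant $-4\mu$, which now plays the role of the energy.
\item The gyroscopic term stays gyroscopic, with $\omega$ replaced by $2\omega$ in suitable normalization.
\item The energy $h$ becomes $-4h\,r$, a homogeneous quadratic potential.
\item The term $V_k$ becomes $4r\,V_k(q(u))$, homogeneous of degree $2k+2$; similarly $V_m$ has degree $2m+2$.
\end{itemize}

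Integrability of \eqref{eq:Hamek} with meromorphic integrals implies integrability of the regularized system on the level $\widetilde H=4\mu$, for generic $h$. We may therefore treat $h,\mu$ as free parameters in the new Hamiltonian
\begin{equation*}
\widetilde H=\tfrac12(P_1^2+P_2^2)+2\omega(u_2P_1-u_1P_2)-4hr+4rV_k+4rV_m .
\end{equation*}

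Next we pass to the null coordinates $z=u_1+\rmi u_2$, $\bar z=u_1-\rmi u_2$. In these coordinates $r=z\bar z$, and the isotropic direction $\bar z=0$ picks out the values of $V_i$ at the point $(1,\rmi)$, which produces $\lambda_k$ and $\lambda_m$. Because $q_1+\rmi q_2=z^2$ and $q_1-\rmi q_2=\bar z^2$, the homogeneous $V_i$ written in $z,\bar z$ has a leading term along this direction proportional to $\lambda_i$, times a power of $z$ fixed by the degree. We then look for an invariant plane, such as $\bar z=0$ with the conjugate momentum $p_{\bar z}$ chosen appropriately, on which the equations reduce to a one-degree-of-freedom equation $\ddot z=f(z)$ involving $\lambda_k$, $\lambda_m$, $\omega$ and the new energy $\mu$. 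The particular solution is expressed through the hyperelliptic curve $\dot z^2=c_0+c_1z^{a}+c_2 z^{b}$. After a change of independent variable $x\propto z^{m-k}$, the exponents combine so that $n=(2-k)/(m-k)$ and $l=(2+3k)/(m-k)$ appear naturally. Roughly, $n$ comes from the ratio of the quadratic and degree-$k$ parts, and $l$ from the ratio involving the new energy term.

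The variational equation along this solution in the normal direction is a second-order linear equation with rational coefficients. It has regular singularities at $x=0$, $x=1$ (after normalization) and $x=\infty$, so the plan is to reduce it to a Gauss hypergeometric equation. The exponent differences there should be $\rho=\tfrac12 l$-type, $\sigma=\tfrac12 n$-type, and a third one of the form $\tfrac12(n+l)$ or $\tfrac12(n+l+2)$. The assumption $\lambda_k\lambda_m\neq0$ guarantees that this reduction is non-degenerate, and $\mu\omega\neq0$ guarantees that all three singular points are genuinely present. By the Morales--Ramis theorem, the identity component of the differential Galois group must be abelian.

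Finally we apply Kimura's theorem on solvability of the hypergeometric equation. Case (A) of Kimura requires one of $\rho\pm\sigma\pm\tau$ to be an odd integer. Translating this through the parity conditions on $l$, $n$ and $n+l$ should give items \ref{it:necs1} and \ref{it:necs2}(a),(b), with the sign-dependent parities arising from which of $\pm$ matches. The "except when $l\geq0$ and $n\leq0$ are both even" clause is where the group is reducible but the identity component is non-abelian, as for a logarithmic local monodromy. We handle this by checking directly, in that subcase, whether the local solution at the relevant singular point contains a logarithmic term.

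The degenerate values $(n+l)(n+l+2)=0$ make an exponent difference vanish, so that the singular point becomes apparent or the local solutions acquire a logarithm. We treat these separately, and they explain the first hypothesis of item \ref{it:necs2}. Kimura's case (B), the finite Schwarz list, must be shown impossible or already covered. We do this using the fact that one exponent difference is $\tfrac12$ by construction, which is typical when the reduction comes from a hyperelliptic substitution.

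The main obstacle I expect is twofold. The first is finding the correct invariant plane and the change of variable that actually yields a Gauss equation with exactly these exponent differences. The second is the bookkeeping of signs and parities in Kimura's conditions, including the logarithmic exceptional subcase. If the direct reduction does not close, I would first try the Yoshida-type approach: for the dominant term alone, the variational equation is a Lamé- or hypergeometric-type equation in $x=z^{m-k}$, and the integer exponents then emerge directly.
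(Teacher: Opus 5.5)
There is a genuine gap. Your whole plan rests on applying Kimura's theorem to the second-order normal variational equation, and that step cannot produce these conditions.

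\textbf{The normal equation gives no obstruction.} Take the particular solution the paper uses: the invariant plane $x_2=y_1=0$ at $h=e=0$, with $z=1+\frac{(1+k)\lambda_m}{(1+m)\lambda_k}x_1^{2(m-k)}$. Along it, the normal variational equation has exponent differences $\rho=\tfrac12$ at the turning point, and $\sigma=\tfrac{l}{2}$, $\tau=\tfrac{3+l}{2}$ at the $V_k$- and $V_m$-dominated ends. Hence $\rho+\sigma-\tau=-1$ identically, and Kimura's case (A) holds for every $k,m$. The equation is reducible, with algebraic solution $z^{3/4}(1-z)^{(2+l)/4}$. The identity component of its Galois group is therefore trivial or additive, hence abelian. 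So Morales--Ramis applied to this equation alone never obstructs anything, and its exponents involve $l$ only, not $n$. What you can extract from it is item 1: this is exactly the condition that the second solution $x_1\psi$ is algebraic (no logarithm). That is a sufficient condition for an abelian identity component, not a necessary one.

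Two of your side claims also fail. Your heuristic that $n$ comes from the quadratic term does not work, because $-4hr$ is switched off by $h=0$. And a logarithmic local monodromy is unipotent, so it generates an additive, abelian group and cannot explain the exceptional clause.

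\textbf{Where the obstruction actually lives.} It is in the full four-dimensional variational system, which is triangular:
\[
X''=r(z)X,\qquad Y''=r(z)Y+s(z)X,\qquad s(z)=\Omega(1-z)^{(n-4)/2}/\sqrt{z},\quad \Omega\propto\omega .
\]
In reduced form, this coupling is the only place where $n$ and $\omega$ enter.

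The coupling exists because, contrary to your bullet list, the conformal factor multiplies the gyroscopic term too. After coupling constant metamorphosis it becomes $2\omega(u_1^2+u_2^2)(u_2v_1-u_1v_2)$, which in null coordinates is $4\omega x_1x_2(x_2y_2-x_1y_1)$. This term vanishes on the invariant plane, so the particular solution is independent of $\omega$. It nevertheless contributes $b=-16\omega x_1\dot x_1$ to the tangential variational equation. Your version $2\omega(u_2P_1-u_1P_2)$ equals $2\rmi\omega(x_1y_1-x_2y_2)$, which does not couple normal and tangential variations, so this mechanism is lost.

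\textbf{What the argument then requires.} First, a Duval--Maciejewski type criterion for the identity component of such a triangular system to be abelian. Set $\psi=\int x_1^{-2}$, $\varphi=\int s\,x_1^2$, $I=\int\psi'\varphi$. If $\varphi$ is algebraic, the identity component is abelian if and only if $\lambda\psi+I$ is algebraic for some $\lambda\in\C$. Item 2(a) is precisely the case where $I$ is already algebraic.

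Second, a proof of when $\lambda F(-\tfrac12,1+\tfrac{l}{2};\tfrac12;z)+zF(\tfrac12,1-\tfrac{n}{2};\tfrac32;z)$ can be algebraic. The paper does this by analytic continuation along commutator loops, which yields a determinant proportional to $(\rme^{\rmi\pi l}-1)(\rme^{\rmi\pi n}-1)(\rme^{\rmi\pi(n+l)}-1)$. It then treats two families separately:
\begin{itemize}
\item the logarithmic cases with $l$ or $n$ even, which is where the exception "$l\ge0$, $n\le0$ both even" arises;
\item the cases $(n+l)(n+l+2)=0$, where $\varphi$ contains $\log(z-1)$ and one tests $\varphi+\lambda\psi$ instead.
\end{itemize}
None of this is reachable through Kimura's theorem on a single second-order equation.
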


Theorem~\ref{th:main_theorem} establishes the principal integrability obstructions for the general case in which both coefficients, $\lambda_k$ and~$\lambda_m$, are nonzero.
As will be demonstrated later, these obstructions are particularly strong and highly effective in practical applications. It is remarkable that they depend solely on the degrees of homogeneity \(k\) and \(m\) of the potential components, and values of the potential at the specific complex point~\((1, \rmi)\).

However, certain degenerate configurations occur when one of these coefficients vanishes, leading to qualitatively different dynamical behaviours that require a separate analysis.

In particular, the situation when either $\lambda_k$ or~$\lambda_m$ is nonzero for $n = 0$ (corresponding to $k = 2$) demands a more detailed investigation.
This is motivated by the existence of a special class of non-homogeneous potentials in physics and astronomy whose lower-degree term is quadratic ($k = 2$).
Classical examples include the mentioned Hill problem, H\'enon-Heiles, and the generalized Armbruster-Guckenheimer-Kim galactic potentials, which share this structure and can be regarded as non-homogeneous potentials with a quadratic leading part.
Therefore, this exceptional case is treated separately, and the following two theorems provide the corresponding obstructions to integrability.

\begin{theorem}
\label{th:main_theorem_k=2_m}
Assume that $\mu\, \omega \neq 0$, $\lambda_k~=~0$, and $\lambda_m~\neq~0$.  
If~$k~\in~\mathbb{Q}$ and $\abs{m} > 2$, then the system~\eqref{eq:Hamek} is not Liouville integrable with meromorphic first integrals.
\end{theorem}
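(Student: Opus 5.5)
The plan follows the route announced in the abstract: Levi--Civita regularization, then coupling constant metamorphosis, then Morales--Ramis theory for a particular solution, with a variational equation reducing to a Riemann $P$-equation or a hypergeometric equation.

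\textbf{Regularization.} First pass to complex coordinates and apply the Levi--Civita map $q_1+\rmi q_2=(x_1+\rmi x_2)^2$, together with the time change $\rmd t=4r\,\rmd s$. Restricted to the level $H_\mu=E$, this turns the system into a natural-type Hamiltonian
\[
K=\tfrac12(P_1^2+P_2^2)+\text{(gyroscopic/quadratic terms in } \omega,E)+4r\,V_k+4r\,V_m-4\mu ,
\]
where $r=x_1^2+x_2^2$ and $4rV_i(x)$ is homogeneous of degree $2i+2$. The roles of the parameters change as follows. The energy $E$ multiplies $r$ and becomes a harmonic coefficient. The Kepler constant $\mu$ becomes the energy of the new system. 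Coupling constant metamorphosis lets us treat the relevant constant as a parameter, so that integrability of the original system at all energies gives integrability of $K$ on the level $K=0$ for generic values of the parameters.

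\textbf{Particular solution.} Next I would exploit $\lambda_k=V_k(1,\rmi)=0$ and $\lambda_m\neq0$. On the isotropic line $q_2=\rmi q_1$, i.e.\ along the direction $x_2=\rmi x_1$ after the square-root map, the $V_k$ term drops out of the restricted equations. One then looks for a particular solution lying in the invariant plane determined by this null direction, so that along it only $V_m$ and the $\omega$-terms survive. Because $r$ vanishes on the isotropic line, the rotation term contributes only a constant phase along it. This yields a one-dimensional equation of the form $\ddot z=c\,z^{2m+1}$ (times the time reparametrization), which is solvable by elliptic or hyperelliptic quadratures on a suitable level.

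\textbf{Variational equation.} Linearize along this solution. The normal variational equation involves $\lambda_m$ and the derivatives of $V_k$ at $(1,\rmi)$. Since $V_k(1,\rmi)=0$, Euler's identity makes the first derivatives of $V_k$ at this point proportional to each other. The change of variable $w=z^{2m+2}$ (Yoshida-type) then transforms the equation into a Gauss hypergeometric equation, or a Riemann $P$-equation, with exponent differences depending only on $m$. Typical values are $\tfrac{1}{2m+2}$ or $\tfrac12$ at the singular points, and a difference involving $\omega$ which becomes free after metamorphosis.

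\textbf{Kimura step and main obstacle.} Finally I would apply Kimura's theorem. The identity component of the differential Galois group must be abelian, so the triple of exponent differences must fall into Kimura's list or satisfy the integer conditions. For $\abs{m}>2$ the exponent difference $1/(2m+2)$ (or $1/(2m-2)$ for negative $m$) is not $\pm1/2$, $\pm1/3$, $\pm1/4$, $\pm1/5$ and so on. The third difference is a non-integer rational, and nothing in $V_k$ enters, so every case of Kimura's table fails. Hence the group is non-abelian, which contradicts the Morales--Ramis theorem.

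I expect the main obstacle to be the middle step. One has to find the right particular solution in the presence of the gyroscopic coupling and verify that the normal variational equation really decouples into a second-order equation. With $\omega\neq0$ the variational system is a priori four-dimensional, and the cancellation coming from $\lambda_k=0$ must be used carefully. If the decoupling fails, I would instead compute the exponents at the singular points of the full four-dimensional system and show that the local monodromy at $w=0$ is non-resonant and of infinite order for $\abs{m}>2$, which already contradicts integrability.
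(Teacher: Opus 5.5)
\textbf{There is a genuine gap, located in your ``variational equation'' and ``Kimura'' steps.}

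Your particular solution lies along the isotropic direction, i.e.\ the paper's invariant plane $\scM_1$. On that plane all $\omega$-terms drop out of the restricted equations. The variational system along it is triangular:
\[
\ddot X_2+a(\tau)X_2=0,\qquad \ddot X_1+a(\tau)X_1=b(\tau)X_2,
\]
with $a=-8\bigl[h-(1+2m)\lambda_m x_1^{2m}\bigr]$ when $\lambda_k=0$, and $b=-16\,\omega\, x_1\dot x_1$.

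\emph{First consequence.} The parameter $\omega$ does not appear in the second-order equation at all. There is no exponent difference ``involving $\omega$''; the gyroscopic term survives only in the coupling $b$.

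\emph{Second consequence.} We have $a=-f'(x_1)$, where $\ddot x_1=f(x_1)=8\bigl[h-\lambda_m x_1^{2m}\bigr]x_1$. Differentiating the particular solution shows that $X_2=\dot x_1$ always solves the second-order equation, which therefore coincides with the tangential variational equation. Whatever level and Yoshida-type substitution you choose, the resulting equation has the algebraic solution coming from $\dot x_1$, so it is reducible.

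In the paper's variables the exponent differences are $\bigl(\tfrac12,\tfrac1m,\tfrac32+\tfrac1m\bigr)$, with $-\rho-\sigma+\tau=1$. This is the odd-integer alternative of Kimura's theorem, not a failure of all its cases. The Galois group is triangular, and its identity component is contained in the additive group $\mathbb{G}_a$, hence abelian for \emph{every} $m$. So Kimura's table gives no obstruction, and the fact that $1/(2m+2)$ is not in the finite lists is beside the point.

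Your fallback also fails. A single local monodromy generates a cyclic group, so even an infinite-order (e.g.\ unipotent) local monodromy is compatible with an abelian identity component. An obstruction needs non-commuting \emph{global} elements.

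\textbf{The missing idea:} the obstruction lives entirely in the coupled four-dimensional triangular system. The paper proceeds as follows.
\begin{enumerate}
\item It takes $h\neq0$, $e=0$ and $z=1-\frac{\lambda_m}{(m+1)h}x_1^{2m}$. This reduces the system to $X''=rX$, $Y''=rY+sX$, with algebraic solution $x_1(z)=z^{3/4}(1-z)^{\frac{m+1}{2m}}$; this is $\dot x_1$ up to the Tschirnhaus factor.
\item It applies the Duval--Maciejewski criterion, Lemma~\ref{lem:necsu2}. Since $\varphi=\int s\,x_1^2$ is algebraic, the identity component is abelian iff $\lambda\psi+I$ is algebraic for some $\lambda\in\C$. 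Here $\psi=-2z^{-1/2}F\bigl(-\tfrac12,1+\tfrac1m;\tfrac12;z\bigr)$ and $I=a(z)+b\sqrt z\,F\bigl(\tfrac12,1-\tfrac1m;\tfrac32;z\bigr)$.
\item It shows by exponent counting that both hypergeometric functions are transcendental for $\abs{m}>2$.
\item It continues $g=\lambda\psi+I$ along the commutator of $\sigma_0,\sigma_1$ and along the analogous commutator with $\sigma_\infty$. Algebraicity forces two linear conditions on $\lambda$. These are compatible only if $\Delta=4\rmi\pi(m+2)\sin^2(2\pi/m)=0$, which fails for $\abs{m}>2$.
\end{enumerate}
That global commutator computation, not Kimura's list, is where the hypothesis $\abs{m}>2$ is actually used.
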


\begin{theorem}
\label{th:main_theorem_k=2}
Assume that $\mu\,\omega \neq 0$, $\lambda_k \neq 0$, and $\lambda_m~=~0$.  
If~$k~=~2$ and $m \in \mathbb{Q}$, then the system~\eqref{eq:Hamek} does not admit any meromorphic first integral functionally independent of the Hamiltonian.
\end{theorem}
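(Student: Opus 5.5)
Following the abstract, the plan is to combine Levi--Civita regularization, coupling constant metamorphosis and Morales--Ramis theory, specialised to $k=2$, $\lambda_m=0$, $\lambda_k\neq0$.

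\emph{Step 1: Levi--Civita transformation.} Put $q_1+\rmi q_2=(u_1+\rmi u_2)^2$, extend the change of coordinates to a canonical one, and rescale time by $r=\abs{u}^2$. On the level $H_\mu=h$ we obtain a new Hamiltonian
\[
K=\tfrac18(P_1^2+P_2^2)-h\abs{u}^2+2\omega\abs{u}^2(u_2P_1-u_1P_2)+\abs{u}^2\bigl(V_k(\cdot)+V_m(\cdot)\bigr),
\]
considered on $K=\mu$. In this form $\mu$ is no longer a parameter but the value of the new energy; this is the coupling constant metamorphosis. A meromorphic first integral of~\eqref{eq:Hamek} that is functionally independent of $H_\mu$ then yields a meromorphic first integral of $K$, independent of $K$, which is defined for all values of $\mu$ and $h$ (with $h$ now a parameter).

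\emph{Step 2: particular solution.} In the $u$ variables, $(1,\rmi)$ corresponds to the isotropic direction $u_1+\rmi u_2\ne0$, $u_1-\rmi u_2=0$, i.e.\ $q_1-\rmi q_2=0$. On the invariant plane $q_1-\rmi q_2=0$, $p_1-\rmi p_2=0$ (expressed in $u$-coordinates), the gyroscopic term becomes a multiple of a single variable. The homogeneous terms reduce to $\lambda_k z^{k}$ and $\lambda_m z^{m}$ in a suitable coordinate $z$. Since $\lambda_m=0$, the $V_m$ part drops out of the restricted dynamics, so the particular solution is governed by $\lambda_k$, $\omega$, $h$ and $\mu$ only. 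With $k=2$ this is an elementary ODE whose solutions are exponentials or elementary functions.

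\emph{Step 3: variational equations.} Linearise along this solution. The normal variational equation contains the contribution of $V_m$ through its second derivatives at $(1,\rmi)$; these need not vanish even though $\lambda_m=0$. Change the independent variable to $z$ (a Yoshida-type change) to rationalise it, and then reduce it to a Whittaker or confluent hypergeometric equation, with parameters depending on $h$, $\mu$, $\omega$ and $\lambda_k$.

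\emph{Step 4: Galois group.} The key advantage over Theorem~\ref{th:main_theorem} is the free parameters $h$ and $\mu$, which vary independently. Apply the Kovacic/Martinet--Ramis criteria: the identity component of the differential Galois group is abelian only if the Whittaker parameters satisfy integrality conditions, e.g.\ $\varkappa\pm\nu\in\tfrac12+\Z$. Because one of these parameters depends continuously on $h$ (or on $\mu/\sqrt{-h}$), these conditions fail for generic values. The Morales--Ramis theorem then gives non-integrability for all $m$.

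\emph{Main obstacle.} I expect Steps 3--4 to be the hardest. One must verify that the normal variational equation really has a parameter that varies continuously with $h$, rather than having the $h$-dependence cancel by scaling. One must also handle the possibility that the $V_m$ contribution vanishes, in which case the equation may become a Bessel or Whittaker equation of special type. I would first treat the generic case with a Whittaker reduction, and then check the degenerate subcase separately using the known Bessel criterion ($\nu\in\tfrac12+\Z$).
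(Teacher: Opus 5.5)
There is a genuine gap in Steps~2--4. The object you plan to analyse, the normal variational equation along the isotropic invariant plane, can never give an obstruction, and the parameters you rely on are not available there.

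First, $(1,\rmi)$ lies on $q_1+\rmi q_2=0$, i.e.\ $u_1+\rmi u_2=0=v_1+\rmi v_2$, not on $q_1-\rmi q_2=0$. On this plane both $u_1^2+u_2^2$ and $v_1^2+v_2^2$ vanish, so the regularized Hamiltonian is identically zero. Three consequences follow:
\begin{itemize}
\item The particular solutions lie on the level $K=0$, so $\mu$ is not a free parameter.
\item $\omega$ drops out of the restricted motion. For $k=2$, $\lambda_m=0$ that motion is $\ddot x_1=f(x_1):=8(h-\lambda_k x_1^{4})x_1$, which is not elementary for a generic energy.
\item The potential enters $K$ multiplied by $u_1^2+u_2^2=2\rmi x_1x_2$, and $\widetilde V$ is even in $x_2$. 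Hence the variational equations only see $\widetilde V(x_1,0)=\lambda_kx_1^{2k}+\lambda_mx_1^{2m}$ and its $x_1$-derivative. With $\lambda_m=0$ no second derivatives of $V_m$ survive, which is exactly why the statement holds for every $m$.
\end{itemize}

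The decisive problem is that the normal variational equation is $\ddot X_2=f'(x_1)X_2$. This is the tangential variational equation of the one-degree-of-freedom restricted system. $X_2=\dot x_1$ solves it and the second solution is a quadrature, so its identity component lies in the additive group. It is therefore Abelian for every $h$ and every energy level, and no Kovacic or Martinet--Ramis condition can fail.

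The obstruction exists only through the gyroscopic coupling $\ddot X_1+aX_1=bX_2$ with $b=-16\omega x_1\dot x_1$. You need a criterion for this coupled, non-normal system, and the paper uses the one in Appendix~A. Let $x_1(z)$ be the algebraic solution, $\psi=\int x_1^{-2}\,\rmd z$, $\varphi=\int s\,x_1^2\,\rmd z$ (algebraic here), and $I=\int\psi'\varphi\,\rmd z$. Then the identity component is Abelian if and only if $\lambda\psi+I$ is algebraic for some $\lambda\in\C$.

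The paper takes the degenerate level $e=\tfrac{8}{3}\sqrt{h^3/\lambda_k}$ and the variable $z=\sqrt{1-2\sqrt{h/\lambda_k}\,x_1^{-2}}$. With these, $\psi$ contains $\operatorname{arctanh}(z/\sqrt{3})$. $I$ contains the same term plus an $\operatorname{arctanh}(z)$ term whose coefficient is nonzero and proportional to $\omega$. No choice of $\lambda$ can cancel the logarithmic branching at $z=\pm1$, so $\lambda\psi+I$ is never algebraic. This coupling argument is the missing idea.
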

\vspace{1em}

In the above theorems by meromorphic first integrals, we understand complex meromorphic functions of variables $(q_1,q_2,p_1,p_2, r)$.

The proofs of Theorems~\ref{th:main_theorem}–\ref{th:main_theorem_k=2} are based on the Morales--Ramis theory~\cite{MR2647643,Morales:99::}, 
which provides one of the most effective analytical tools for studying the (non-)integrability of Hamiltonian systems.  
This approach links the classical idea of linearisation around a particular solution with the modern framework of differential Galois theory, 
establishing a direct correspondence between the algebraic structure of the variational equations 
and the dynamical properties of the original nonlinear system~\cite{Put:03::,Audin:08::}.

The central result in this context was established by Morales and Ramis~\cite{Morales:99::}, 
who proved that Liouville integrability imposes strong algebraic constraints on the Galois group of the variational equations.

\begin{theorem}[Morales--Ramis, 1999]
\label{th:MR}
Let a complex Hamiltonian system with $n$ degrees of freedom be Liouville integrable 
with meromorphic first integrals in a neighbourhood of a non-equilibrium phase curve~$\vGamma$.  
Then, the identity component of the differential Galois group of the variational equations along~$\vGamma$ is Abelian.
\end{theorem}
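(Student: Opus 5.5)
The plan is to follow the classical route of Ziglin, reformulated algebraically as in~\cite{Morales:99::}. Let $F_1=H,F_2,\dots,F_n$ be meromorphic first integrals that are functionally independent and pairwise in involution in a neighbourhood of $\vGamma$. Parametrise $\vGamma$ by $t$ on a Riemann surface $\Gamma$. The variational equations $\dot\xi = A(t)\xi$, with $A(t)=J\,\nabla^2H(\vx(t))$, form a linear Hamiltonian system over the differential field $K$ of meromorphic functions on $\Gamma$. Its Galois group $G$ is an algebraic subgroup of $\mathrm{Sp}(2n,\C)$, and it is $G$ that we must control.

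\textbf{Step 1 (leading terms).} Expand each $F_i$ near $\vGamma$ as $F_i(\vx(t)+\xi)=f_i(t,\xi)+\dots$, where $f_i$ is the lowest-order nonvanishing homogeneous part in $\xi$. Differentiating along the flow shows that $f_i$ is a first integral of the variational equation. Its coefficients lie in $K$, and it is polynomial in $\xi$; if $F_i$ is only meromorphic, $f_i$ is rational in $\xi$. We then check two properties.
\begin{enumerate}
\item The leading terms Poisson-commute with respect to the constant symplectic form on the fibres, since $\{f_i,f_j\}$ is either zero or the leading term of $\{F_i,F_j\}=0$.
\item They can be chosen functionally independent. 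This is Ziglin's lemma: replace $F_j$ by suitable polynomial combinations $P(F_1,\dots,F_n)$ so that the new leading terms are independent, while involutivity is preserved.
\end{enumerate}
The independence step is where I expect the main technical obstacle. In particular, $H$ itself has the linear leading term $\langle \nabla H(\vx(t)),\xi\rangle$, which is nonzero because $\vGamma$ is not an equilibrium, so $H$ must be handled with care. I would follow Ziglin's induction: at each stage, pass to a homogeneous polynomial in the integrals whose leading term is independent of the previously obtained ones.

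\textbf{Step 2 (invariance under $G$).} A rational function of $\xi$ with coefficients in $K$ that is a first integral is invariant under the differential Galois group. To see this, write $\xi=\Phi(t)c$ with $\Phi$ a fundamental matrix. Then $f_i(t,\Phi(t)c)$ is constant in $t$. By the Galois correspondence, the elements of the Picard--Vessiot extension fixed by $G$ are exactly $K$. Expressing $f_i$ as $g_i(c)$ with $g_i$ rational in $c$, we obtain $g_i(\sigma c)=g_i(c)$ for all $\sigma\in G$. So $G$, and a fortiori $G^0$, admits $n$ independent rational invariants $g_1,\dots,g_n$ in involution on $\C^{2n}$.

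\textbf{Step 3 (algebraic lemma).} The Lie algebra $\mathfrak g$ of $G^0$ lies in $\mathfrak{sp}(2n,\C)$. Hence each element is the Hamiltonian vector field $X_h$ of a quadratic form $h$. Invariance of $g_i$ gives $X_h g_i=\{g_i,h\}=0$ for all $i$. On the open dense set where $dg_1,\dots,dg_n$ are independent, the joint level sets are Lagrangian and spanned by $X_{g_1},\dots,X_{g_n}$. The relations $\{g_i,h\}=0$ say that $h$ is constant along these level sets. Therefore $h$ is locally a function of $g_1,\dots,g_n$. For $h_1,h_2\in\mathfrak g$ this gives
\[
\{h_1,h_2\}=\sum_{i,j}\pder{h_1}{g_i}\pder{h_2}{g_j}\{g_i,g_j\}=0,
\]
so $\mathfrak g$ is abelian, and hence $G^0$ is abelian. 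The remaining points are routine: justifying the local functional dependence generically, and noting that meromorphic (rather than holomorphic) integrals only affect Step 1 through rational leading terms.
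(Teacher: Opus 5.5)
The paper does not prove this theorem; it quotes it from Morales--Ruiz and Ramis~\cite{Morales:99::,MR2647643}, and your outline is exactly their standard proof, so it is correct: Ziglin's lemma gives $n$ independent, involutive leading terms; these are invariant under the Galois group via $\xi=\Phi(t)c$; and quadratic Hamiltonians commuting with $n$ independent involutive invariants must commute with each other. The points to tighten are routine: take leading terms of $F_i-F_i|_{\vGamma}$ (otherwise the leading term is the degree-zero constant), choose $\Phi$ symplectic so that the $g_i$ remain in involution, and note that Step~2 uses that $G$ fixes $K$ pointwise and that the coefficients of $g_i$ are constants in $\C$, not the converse statement $L^G=K$.
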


This theorem provides a necessary condition for integrability and serves as the basis of most modern non-integrability proofs.  
In practice, its application follows a relatively standard sequence of steps:  
first, one identifies a particular solution~$\vGamma(t)$  of  equations of motion~\eqref{eq:Ham-eq} generated by Hamiltonian~\eqref{eq:Hamek}.  
Next, the equations of motion are linearised along this trajectory to obtain the variational equations.  
Finally, one analyzes the differential Galois group of these equations.  
If the identity component of this group is shown to be non-Abelian, 
then, by Theorem~\ref{th:MR}, the original system cannot be Liouville integrable.

To prove Theorems~\ref{th:main_theorem}–\ref{th:main_theorem_k=2}, 
we therefore construct an appropriate particular solution of the system~\eqref{eq:Ham-eq} 
and demonstrate that the associated variational equations possess a non-Abelian Galois group.  
The explicit construction of this solution and the derivation of the corresponding variational equations 
are discussed in the next section.

It should be emphasised that, in general, there is no systematic or algorithmic method 
for selecting a trajectory~$\vGamma(t)$ suitable for the application of the Morales--Ramis theory.  
Even when such a trajectory is known, the subsequent analysis of the differential Galois group 
is often highly non-trivial.  
In most cases, the variational equations do not decouple into lower-dimensional subsystems, 
and their Galois groups must be studied case by case using a combination of algebraic and analytic arguments.

A comprehensive exposition of the theory and its applications can be found in 
the works of Morales and Ramis~\cite{Morales:99::,MR2647643} and 
the monograph by Audin~\cite{Audin:08::}.  
For an accessible introduction with worked examples, see also~\cite{Maciejewski:03::a}.  
The Morales--Ramis framework has become a standard tool for detecting non-integrability, 
and it has been successfully applied to a wide variety of problems — 
ranging from the classical three-body problem~\cite{Boucher:03::,Maciejewski:10::,Maciejewski:11::}, $n$-body problem~\cite{Maciejewski_2025} to modern models in galactic dynamics~\cite{Acosta_2018}.  Numerous other applications of this theory in recent years can be found in~\cite{Yagasaki:18::,Huang:18::,Combot:18::,Mnasri:18::,Shibayama:18::,Elmandouh:18::,Szuminski:18a::,Maciejewski:18::,Maciejewski:17::,Szuminski:16::,Szuminski:24::, Szuminski_2025_JSV}.

Following the assumptions of Theorem~\ref{th:MR}, 
we consider the complexified version of our system, 
$(q_1,q_2,p_1,p_2)\in\C^4$, 
with the potential $V(q_1,q_2)$ assumed to be algebraic over $\C(q_1,q_2)$.  
Although the Hamiltonian~\eqref{eq:Hamek} is not strictly meromorphic 
because of the term $-\mu/r=-\mu/\sqrt{q_1^2+q_2^2}$, 
it has been shown in~\cite{Combot:13::,Maciejewski:16::} 
that the Morales--Ramis theory can still be consistently applied to such systems, 
provided that the singularities of the potential are handled within the framework of algebraic differential equations. However, we avoid difficulties just by considering integrability in terms of complex meromorphic functions of variables $(q_1,q_2,p_1,p_2,r)$. Moreover, the application of the Levi-Civita transformation reduced the problem to studying the integrability of systems with a rational  Hamiltonian function.  

The rest of the paper is organized as follows. In Sec.~\ref{sec:particular_solutions} we construct explicit particular solutions of the Hamiltonian $H_\mu$,   using the Levi--Civita regularization combined with the coupling--constant metamorphosis. We then derive the variational equations restricted to an invariant plane and rewrite them as two second-order reduced differential equations: a homogeneous Gauss hypergeometric equation and a non-homogeneous equation sharing the same homogeneous part. In Secs.~\ref{sec:proof_main_theorem}--\ref{sec:proof_lemma_m} we analyse the integrability of the reduced variational equations via differential Galois theory and monodromy of the Gauss hypergeometric equation and its degenerate cases, identifying all admissible configurations for which the identity component of the Galois group is Abelian. These sections contain the proofs of Theorems~\ref{th:main_theorem}--\ref{th:main_theorem_k=2}. Sec.~\ref{sec:app} applies the obtained integrability obstructions to several classical models, including the generalized Hill, the H\'enon--Heiles, and the Armbruster--Guckenheimer--Kim systems, and illustrates the analytical results with representative Poincar\'e cross-sections. Sec.~\ref{sec:exc} departs from the obstruction-based analysis and examines the special case $\mu=0$, where Hamiltonian~\eqref{eq:Hamek} reduces to the rotating Hamiltonian $H_0$. In this regime the regularisation is no longer equivalent to the original dynamics, so Theorems~\ref{th:main_theorem}--\ref{th:main_theorem_k=2} do not apply, and the Morales--Ramis method cannot be used due to the lack of an appropriate particular solution. Nevertheless, we show that the rotating Hamiltonian $H_0$ with a non-homogeneous exceptional potential becomes super-integrable; Sec.~\ref{sec:exc} establishes this by constructing two additional independent first integrals. Sec.~\ref{sec:conclusions} provides concluding remarks and outlines further perspectives. The paper ends with two appendices: Appendix~\ref{app:criterion} formulates the analytic criterion determining when the identity component of the differential Galois group of the reduced variational system is Abelian, while Appendix~\ref{app:monodromy} presents the monodromy analysis of the relevant Gauss hypergeometric equation, including local monodromy matrices, connection formulas, and the logarithmic cases required in several proofs.

\section{Particular solutions and variational equations}
\label{sec:particular_solutions}

As noted above, the Morales--Ramis approach requires the existence of a
non-equilibrium particular solution of the equations of motion.
Since no general method for constructing such solutions is available, we employ
a sequence of canonical transformations adapted to the structure of the
Hamiltonian system under consideration.

We emphasize that these transformations are not introduced solely to regularize
the Kepler-type term $-\mu/r$, as the Morales--Ramis theory is also applicable to
Hamiltonian systems with algebraic potentials.
Their primary purpose is to identify invariant manifolds and to construct an
explicit particular solution of the system~\eqref{eq:Ham-eq} along which the
Morales--Ramis integrability analysis can be effectively carried out.

With this aim, we first apply the Levi--Civita transformation, which yields a
Hamiltonian form suitable for the subsequent application of the coupling
constant metamorphosis. Namely
\begin{equation}
\begin{aligned}
\label{eq:zamianka}
	q_1&=u_1^2-u_2^2, && p_1=\frac{u_1v_1-u_2v_2}{2(u_1^2+u_2^2)},\\
	 q_2&=2u_1u_2,&&
	p_2=\frac{u_1v_2+u_2v_1}{2(u_1^2+u_2^2)}.
\end{aligned}
\end{equation}
Hamiltonian~\eqref{eq:Hamek} in these new coordinates now reads
\begin{equation}
  \label{eq:hamek2}
\begin{split}
\widetilde{K}_\mu=&\frac{v_1^2+v_2^2}{8(u_1^2+u_2^2)}+\frac{\omega}{2}(u_2v_1-u_1v_2) -\frac{\mu}{u_1^2+u_2^2}+ 
	U(u_1,u_2),
	\end{split}
\end{equation}
where $U(u_1,u_2)=V(u_1^2-u_2^2,2u_1u_2)$.

The corresponding equations of motion are as follows
\begin{equation*}
    \begin{split}
        \label{eq:rhs}
     \dot u_1&=\frac{\partial \widetilde{K}_\mu}{\partial v_1}=
     \frac{v_1}{4(u_1^2+u_2^2)}+\frac{\omega}{2} u_2,\\
        \dot u_2&=\frac{\partial \widetilde{K}_\mu}{\partial v_2}=
        \frac{v_2}{4(u_1^2+u_2^2)}-\frac{\omega}{2} u_1,\\  
        \dot v_1&=-\frac{\partial \widetilde{K}_\mu}{\partial u_1}=
        \frac{(v_1^2+v_2^2)u_1}{4(u_1^2+u_2^2)^2}+
        \frac{\omega}{2} v_2-
        \frac{2\mu u_1}{(u_1^2+u_2^2)^2}-\frac{\partial U}{\partial u_1},\\
    \dot v_2&=-\frac{\partial \widetilde{K}_\mu}{\partial u_2}=
    \frac{(v_1^2+v_2^2)u_2}{4(u_1^2+u_2^2)^2}-\frac{\omega}{2} v_1-
    \frac{2\mu u_2}{(u_1^2+u_2^2)^2}-\frac{\partial U}{\partial u_2}.  
    \end{split}
\end{equation*} 
For the next step, we used the following lemma, which was proved in~\cite{Hietarinta:87::}, see also~\cite{Post:2010::,Sergyeyev:12::}.
\begin{lemma}
\label{lem:CCM}
Assume that Hamiltonian generates a Hamiltonian system with $n$ degrees of freedom 
\begin{equation*}
  F(\vq,\vp,\alpha)=F_0(\vq,\vp)-\alpha F_1(\vq,\vp)
\end{equation*}
has a first integral $I(\vq,\vp,\alpha)$ functionally independent of $F$. Then, the system with Hamiltonian 
\begin{equation*}
  G(\vq,\vp,f)=\frac{F_0(\vq,\vp)-f}{F_1(\vq,\vp)}
\end{equation*}
has a first integral $J(\vq,\vp,f)=I(\vq,\vp,G(\vq,\vp,f))$.
\end{lemma}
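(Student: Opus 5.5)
The statement to prove is Lemma~\ref{lem:CCM}, the coupling constant metamorphosis. The plan is to exhibit $J$ explicitly and verify, by the chain rule and Poisson-bracket identities, that $\{J,G\}=0$ on the whole phase space (for fixed $f$). The key observation is that $I$ Poisson-commutes with $F$ identically in the parameter $\alpha$. Hence the quantity $\{I(\cdot,\alpha),F_0-\alpha F_1\}$ vanishes as a function of $(\vq,\vp,\alpha)$, where the bracket is taken in $(\vq,\vp)$ with $\alpha$ frozen.

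First I compute the bracket of the composite function $J(\vq,\vp)=I(\vq,\vp,G(\vq,\vp,f))$ with $G$. By the chain rule,
\[
\{J,G\}=\{I,G\}\big|_{\alpha=G}+\partial_\alpha I\big|_{\alpha=G}\,\{G,G\}=\{I,G\}\big|_{\alpha=G},
\]
where in the first term $\alpha$ is treated as a constant during differentiation and $G$ is substituted afterwards. The second term drops because $\{G,G\}=0$.

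Next I expand $\{I,G\}$ with $G=(F_0-f)/F_1$. By the Leibniz rule, since $f$ is constant,
\[
\{I,G\}=\frac{\{I,F_0\}}{F_1}-\frac{F_0-f}{F_1^2}\{I,F_1\}=\frac{1}{F_1}\Bigl(\{I,F_0\}-G\,\{I,F_1\}\Bigr).
\]
Evaluating at $\alpha=G$ gives
\[
\{J,G\}=\frac{1}{F_1}\{I,F_0-\alpha F_1\}\big|_{\alpha=G}=\frac{1}{F_1}\{I,F\}\big|_{\alpha=G}=0,
\]
because $\{I,F\}\equiv 0$ for every value of $\alpha$. This proves that $J$ is a first integral of the system generated by $G$.

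Functional independence of $J$ from $G$ is not claimed in the statement. If it is needed, it follows by reversing the construction: on the level set $G=\alpha$ we have $F=f$, so $dJ$ restricted to that level set coincides with $dI$, which is independent of $dF$. The only subtle point is the bookkeeping of $\alpha$ as a frozen parameter versus a phase-space function. This is handled by the chain-rule split above, which is the main step to write carefully. One also needs $F_1\neq0$ on the domain considered, so that $G$ is meromorphic there.
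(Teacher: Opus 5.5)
Your argument is correct and complete for the statement as written. The paper gives no proof of its own here---it only cites Hietarinta, Post and Sergyeyev. Your chain of identities $\{J,G\}=\{I,G\}|_{\alpha=G}=F_1^{-1}\{I,F\}|_{\alpha=G}=0$ is essentially the standard direct Poisson-bracket verification of coupling constant metamorphosis.

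The other usual route is via time reparametrization. At points where $G=\alpha$ one has $\rmd G=F_1^{-1}(\rmd F_0-\alpha\,\rmd F_1)=F_1^{-1}\rmd F$. So on the common hypersurface $\{G=\alpha\}=\{F(\cdot,\alpha)=f\}$ (away from $F_1=0$) the Hamiltonian vector fields satisfy $X_G=F_1^{-1}X_F$. The trajectories therefore coincide up to the time change $\rmd t_G=F_1\,\rmd t_F$, and $J=I(\cdot,\alpha)$ is constant along them.

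Your version gives a global identity of meromorphic functions with no level-set bookkeeping. It also makes explicit that $\{I,F\}=0$ must hold identically in $\alpha$, at least on the range of $G$. The dynamical version instead exhibits the orbit correspondence the paper actually exploits: energy $h$ of $\widetilde K_\mu$ corresponds to energy $4\mu$ of $K$. It also shows that an integral known only for a single value $\alpha_0$ still gives an integral of $G$ on the single level $\{G=\alpha_0\}$.

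One correction to your side remark on functional independence, which the lemma indeed does not claim: it is not automatic for a fixed $f$. The condition $\rmd I\wedge\rmd F\neq0$ holds only on an open dense set, and the particular hypersurface $\{F(\cdot,\alpha)=f\}$ can lie entirely inside the degeneracy locus. For instance, let $I_0$ be any integral of $F$ independent of $F$, and set $I:=(F-f_0)^2I_0$. Then $I$ satisfies the hypotheses, but $J(\cdot,f)=(f-f_0)^2I_0(\cdot,G)$ vanishes identically for $f=f_0$.

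So independence of $J$ from $G$ holds only for generic $f$, or after a suitable choice of $I$. This matters when the lemma is used to transfer Liouville integrability at a fixed value of $h$, as in the proof of the main theorem.
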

Let us apply this lemma to the Hamiltonian~\eqref{eq:hamek2}. We have
\begin{equation*}
  \begin{split}
  F_0(\vu,\vv)=&\frac{v_1^2+v_2^2}{8(u_1^2+u_2^2)}+\frac{\omega}{2}(u_2v_1-u_1v_2) +U(u_1,u_2),\\
  F_1(\vu,\vv)=&\frac{1}{4(u_1^2+u_2^2)},
  \end{split}
\end{equation*}
and $\alpha=4\mu$,$f=h$. Denoting $K(\vu,\vv)=G(\vu,\vv,h)$, we obtain
\begin{align}
  \label{eq:Ku}
  K(\vu,\vv)= \frac{1}{2}(v_1^2+v_2^2)+2(u_1^2+u_2^2)\omega(u_2v_1-u_1v_2) \\
  +4(u_1^2+u_2^2)(U(u_1,u_2)-h).
\end{align}

Equations of motion generated by Hamiltonian~\eqref{eq:Ku} admit invariant planes. 
To simplify their forms, we perform the additional
canonical change of the variables
\begin{equation}
\begin{aligned}
	&u_1=\frac{x_1+\rmi x_2}{\sqrt{2}}, && u_2=\frac{\rmi x_1+x_2}{\sqrt{2}},\\
	&v_1=\frac{y_1-\rmi y_2}{\sqrt{2}},&& v_2=\frac{-\rmi y_1+y_2}{\sqrt{2}}.
 \end{aligned}
\end{equation}
After this transformation, the Hamiltonian $K_0$ takes the form
\begin{equation}
	\begin{split}
		\label{eq:Hs}
		\widetilde{K}&=-\rmi(y_1y_2+8h x_1x_2)+4\omega x_1x_2\left(x_2y_2-x_1y_1\right)\\
		& +8\rmi
		x_1x_2\widetilde{V}(x_1,x_2),
	\end{split}
\end{equation}
where $\widetilde{V}(x_1,x_2)=V(x_1^2-x_2^2,\rmi(x_1^2+x_2^2))$.
The corresponding equations of motion are as follows
\begin{equation}
	\begin{split}
		\label{eq:vhs}
	\dot x_1=&-\rmi y_2-4\omega x_1^2x_2,\\
	\dot x_2=&-\rmi y_1+4\omega x_1x_2^2,\\
	\dot y_1=& 4x_2  \omega(2x_1y_1-x_2y_2)  
  + 8\rmi x_2\left(h-\widetilde{V}(x_1,x_2) - x_1\pder{\widetilde{V}(x_1,x_2)}{x_1}\right),\\
		\dot y_2=& 4x_1  \omega(x_1y_1-2x_2y_2)  
  + 8\rmi x_1\left(h-\widetilde{V}(x_1,x_2) - x_2\pder{\widetilde{V}(x_1,x_2)}{x_2}\right).
	\end{split}
\end{equation}
Now, it is evident that the system~\eqref{eq:vhs} possesses two simple
invariant planes, which are given by
\begin{equation}
    \begin{split}
        &\scM_1=\left\{(x_1,x_2,y_1,y_2)\in \C^4\, \big{|}\ x_2=y_1=0\right\},\\
        &\scM_2=\left\{(x_1,x_2,y_1,y_2)\in \C^4\, \big{|}\ x_1=y_2=0\right\}.
    \end{split}
\end{equation}
For further analysis, we restrict the system  to the first plane
\begin{equation}
\begin{aligned}
	\label{eq:vhn1}
	&\dot x_1=-\rmi y_2, \\ & \dot y_2=8\rmi \left[h-\widetilde{V}(x_1,0)\right]x_1.
\end{aligned}
\end{equation}
Knowing that $\widetilde{V}(x_1,0)=V(x_1^2, \rmi x_1^2)$ and $V$ is a sum of two  homogeneous functions $V_k$ and $V_m$, we obtain the differential equation
\begin{equation}
	\label{eq:vhs0}
	\ddot x_1-8\left[h-\lambda_k x_1^{2k}-\lambda_m x_1^{2m}\right]x_1=0,
\end{equation}
where in the last step, we have used the homogeneity property
\begin{equation*}
\begin{split}
&V_i(x_1^2,\rmi x_1^2)=\lambda_i x_1^{2i},\quad \text{where}\qquad \lambda_i:=V_i(1,\rmi
).
\end{split}
\end{equation*}
Assuming that $k,m\not\in\{ -1,0\}$, we find that Eq.~\eqref{eq:vhs0} has 
the first integral of the form
\begin{equation}
	\label{eq:integral}
	I=\frac{1}{2}\dot x_1^2+4x_1^2\left[\frac{\lambda_k}{k+1}x_1^{2k}+\frac{\lambda_m}{m+1}x_1^{2m}-h\right].
\end{equation}
The function $I$ can be treated as the conservation of the energy of the system~\eqref{eq:vhs0}, where $I=e$ is its level. 

Let $\vX=[X_1,Y_2,X_2,Y_1]^T$ denotes the variations of $\vx=[x_1,y_2,x_2,y_1]^T$, then the variational equations
 restricted to $\scM_1$, are as follows
\begin{equation}
	\begin{split}
		\label{eq:war}
	\frac{\rmd}{\rmd \tau}\vX=
	\bold{A}(\tau) \vX,
	\end{split}
\end{equation}  
with a non-constant matrix
\[\quad \bold{A}(\tau)=\begin{bmatrix}
		0,&-\rmi&-4\omega x_1^2&0\\
		a_{12}&0&-8\rmi\omega x_1\dot x_1&4\omega x_1^2\\
		0&0&0&-\rmi\\
		0&0&a_{12}&0
			\end{bmatrix},\]
where
$a_{12}=4\rmi (2h-2(1+2k)\lambda_k x_1^{2k}-2(1+2m)\lambda_m x_1^{2m})$. In the above calculations, we used the Euler identity for homogeneous functions. For instance for $V_k$, we write
\begin{equation}
	x_1\pder{V_k}{x_1}+x_2\pder{V_k}{x_2}=k V_k,
\end{equation}
which enables
\begin{equation}
    \begin{split}
        & x_1^2\pder{V_k}{x_1}(x_1^2,\rmi x_1^2)+\rmi x_1^2\pder{V_k}{x_2}(x_1^2,\rmi
x_1^2)\\ 
&=\left[\pder{V_k}{x_1}(1,\rmi)+\rmi\pder{V_k}{x_2}(1,\rmi)\right]x_1^{2k}=k\lambda_k
x_1^{2k}.
    \end{split}
\end{equation}

Variational equations~\eqref{eq:war} form a system of four first-order
differential equations. For better readability, we rewrite it as a system of 
two
second-order differential equations
\begin{subequations}
\label{eq:var2nd}
\begin{align}\label{eq:var2nda}
	&\ddot X_2+a(\tau)X_2=0,\\ \label{eq:var2ndb}	&\ddot X_1+a(\tau)X_1=b(\tau)X_2,
\end{align}
\end{subequations}
where 
\begin{equation*}
\begin{split}
	&a(\tau)=-8\left[h-(1+2k)\lambda_k x_1^{2k}-(1+2m)\lambda_m x_1^{2m}\right],\\ 
 &b(\tau)=-16\omega\, x_1 \dot x_1.
 \end{split}
\end{equation*}
These coefficients  are functions defined on the hyper-elliptic curve 
\begin{equation}
	\label{eq:curve}
	\dot x_1^2 =2e+8h 
	x_1^2-\frac{8\lambda_k}{k+1}x_1^{2(k+1)}-\frac{8\lambda_m}{m+1}x_1^{2(m+1)}.
\end{equation}
%
%
To simplify further computations, we set  $h=e=0$, and assume
$\lambda_k\lambda_m\neq 0$, that is $V_k(1,\rmi)V_m(1,\rmi)\neq 0$. Thanks to
this, the change of the independent variable
\begin{equation}
\label{eq:change_1}
\tau\to z=1+\frac{(1+k)\lambda_m}{(1+m)\lambda_k}\left(x_1(\tau)\right)^{2(m-k)},
\end{equation}
together with transformation rules for the derivatives
\begin{equation}
\label{eq:chain_rule}
	\begin{split}
		&\frac{\rmd}{\rmd \tau}=\dot z\frac{\rmd}{\rmd z},\qquad \frac{\rmd^2}{\rmd \tau^2}=\dot z^2\frac{\rmd^2}{\rmd z^2}+\ddot z \frac{\rmd}{\rmd z}, 
	\end{split}
\end{equation}
convert the variational equations~\eqref{eq:var2nd} into the following forms
\begin{subequations}
\label{eq:var2rational_rat}
\begin{align}
&\label{eq:var2rational_rat_a}
X_2''+p(z)X_2'+q(z)X_2=0,\\ \label{eq:var2rational_rat_b}
&	X_1''+p(z)X_1'+q(z)X_1=s(z)X_2.
\end{align}
\end{subequations}
Here, $p(z)$, $q(z)$, and $s(z)$ are non-constant rational functions, given by
\begin{equation*}
\label{eq:pp_qq}
\begin{aligned}
p(z)&=\frac{\ddot{z}}{\dot{z}^{2}}
      =\frac{1}{2}\!\left[\frac{1}{z}+\frac{n-l-8}{4(1-z)}\right],\\[4pt]
q(z)&=\frac{a(z)}{\dot{z}^{2}}
      =\frac{1}{32}\!\left[\frac{n^{2}-2nl-15l^{2}}{8(1-z)^{2}}
        +\frac{16+n+11l}{z(1-z)}\right],\\[4pt]
s(z)&=\frac{b(z)}{\dot{z}^{2}}
      =\Omega\,\frac{(1-z)^{\frac{n-4}{2}}}{\sqrt{z}},
      \qquad \Omega\in\C\setminus\{0\},
\end{aligned}
\end{equation*}
where $n,l\in\Q$ are auxiliary parameters previously introduced in~\eqref{eq:n,l}.

Now, we make the classical  Tschirnhaus transformation of dependent variables
\begin{equation}
\label{eq:reduced_change}
\begin{split}
&X_2=X \exp\left[-\frac{1}{2}\int p(z)\rmd z\right],\\
 & X_1=Y\exp\left[-\frac{1}{2}\int p(z)\rmd z\right],
\end{split}
\end{equation}
Thanks to that, we can
rewrite~\eqref{eq:var2rational_rat_a}-\eqref{eq:var2rational_rat_b} to their
reduced forms
\begin{subequations}\label{eq:reduced}
\begin{align}
\label{eq:reduced_a}
X''&=r(z) X,\\ \label{eq:reduced_b}
Y''&=r(z) Y+s(z)X.
\end{align}
\end{subequations}
The coefficients of the above system are 
\begin{equation}
\begin{split}
\label{eq:rr_ss}
r(z)&=\frac{1}{4}\left[\frac{\rho^2-1}{z^2}+
\frac{\sigma^2-1}{(1-z)^2}-\frac{1-\rho^2-\sigma^2+\tau^2}{z(1-z)}\right],\\
s(z)&=\Omega\frac{(1-z)^{\frac{n-4}{2}}}{\sqrt{z}},\quad 
\Omega\in \C\setminus\{0\}
\end{split}
\end{equation}
Here $\rho, \sigma, \tau$ are the differences of the exponents of Gauss
differential equation~\eqref{eq:reduced_a}, with values 
\begin{equation}
\label{eq:delta}
\rho=\frac{1}{2},\quad \sigma=\frac{l}{2},\quad \tau=\frac{3+l}{2}.
\end{equation} 
The respective exponents are given by
\begin{equation*}
  \label{eq:98}
  \rho_{1,2} =\frac{ 1\pm \rho }{2}, \quad \sigma_{1,2} =
 \frac{1\pm \sigma}{2},\quad 
  \tau_{1,2} =\frac{ -1 \pm \tau}{2}.
\end{equation*} 
Equation~\eqref{eq:reduced_a} is reducible as 
\begin{equation}
    -\rho - \sigma +\tau =1,
\end{equation}
see Appendix~B. Its one solution is algebraic, and it has the following form. 
\begin{equation}
\label{eq:X_sol_1}
x_1(z) =z^{3/4} (1-z)^{\frac{2+l}{4 }},
\end{equation}
the second one is given by 
\begin{equation}
	\label{eq:X_sol_2F}
	\begin{split}
	x_2(z)&=x_1(z)\int \frac{1}{x_1(z)^2}\rmd z\\
	&= \sqrt[4]{z} (1-z)^\frac{2+l}{4 } \, F\left(-\frac{1}{2},1+\frac{l}{2 };
  \frac{1}{2};z\right).
	\end{split}
	\end{equation}
Here $F(\alpha,\beta;\gamma;z):={} _2F_1(\alpha,\beta;\gamma;z)$, is the
Gaussian hypergeometric function; for details, see  Appendix~B. 

%
%

%
\subsection{Case $k=2$ and $\lambda_k= 0,$ and  $\lambda_m\neq 0$}
 Let us assume $h\neq 0$. Then  we perform the following change of the independent variable 
\begin{equation}
\tau\longmapsto z=1-\frac{\lambda_m}{(m+1)h}\left(x_1\right(\tau))^{2m},\quad \text{at}\quad e=0.
\end{equation}
This change of variables, together with transformations of
derivatives~\eqref{eq:chain_rule},
convert the system~\eqref{eq:var2nd} to the rational form~\eqref{eq:var2rational_rat}, with the coefficients
\begin{equation}
	\begin{split}
	\label{eq:coeff}
		p(z)&=\frac{\ddot z}{\dot z^2}=\frac{3 z-1}{2 (z-1) z},\\ 
    q(z)&=\frac{a(z)}{\dot
			z^2}=-\frac{m (2 m+3) (z-1)+z}{4 m^2 (z-1)^2 z},\\ 
      s(z)&=\frac{b(z)}{\dot
			z^2}=\frac{\Omega  (1-z)^{\frac{1}{m}-2}}{\sqrt{z}},\quad
		\Omega\in \C\setminus\{0\}.
	\end{split}
\end{equation}
After the Tschirnhaus transformation, we obtain the reduced form of the variational equations~\eqref{eq:reduced}, with the coefficients
\begin{equation}
\begin{split}
\label{eq:rr_ss_k=2_m}
r(z)&=\frac{-3 m^2+(m+2) (5 m+2) z^2-6 (m+2) m z}{16 m^2 (z-1)^2 z^2},\\
s(z)&=\frac{\Omega  (1-z)^{\frac{1}{m}-2}}{\sqrt{z}}.
\end{split}
\end{equation}
With these coefficients, equation~\eqref{eq:reduced_a} is reducible. Its algebraic solution is
\begin{equation}
\label{eq:X_sol_1_k=2_m}
x_1(z)=z^{3/4} (1-z)^{\frac{m+1}{2 m}}.
\end{equation}
The second solution is $x_2(z)=x_1(z)\psi(z)$ where 
\begin{equation}
\label{eq:psi_k=2_m}
\psi(z)=\int \frac{1}{x_1(z)^2}\rmd z = -\frac{2}{\sqrt{z}} \, 
F\left(-\frac{1}{2},1+\frac{1}{m};\frac{1}{2};z\right)
\end{equation}
Moreover, integrals $\varphi(z)$ and $I(z)$ defined in~\eqref{eq:psiphi} take
the forms
\begin{equation}
\label{eq:phi_k=2_m}	
\begin{split}
\varphi(z)=&-\frac{m \Omega  (1-z)^{2/m} (m+2 z)}{2 (m+2)},\\
I(z)=&\int \varphi(z)\psi(z)\rmd z
= a(z)+b\sqrt{z}F\left(\frac{1}{2},1 -\frac{1}{m};\frac{3}{2};z\right)
\end{split}
\end{equation}
where $a(z)$ is an algebraic function and $b$ is a non-zero constant (their
explicit forms are irrelevant for our further considerations).
%
\subsection{Case $k=2$ and $\lambda_k\neq 0,$ and  $\lambda_m=0$}
 Let us assume $h\neq 0$. Then  we perform the following
change of the independent variable 
\begin{equation}
\tau\to z=\sqrt{1-2\sqrt{\frac{h}{\lambda_k}}x_1^{-2}(\tau)},\quad \text{at}\quad e=\frac{8}{3}\sqrt{\frac{h^3}{\lambda_k}}.
\end{equation}
This change of variables, combined with the derivative transformations given in~\eqref{eq:chain_rule},
recasts the system~\eqref{eq:var2nd} into the rational form~\eqref{eq:var2rational_rat}, with the coefficients \begin{equation}
	\begin{split}
	\label{eq:coeff2}
		p(z)&=\frac{\ddot z}{\dot z^2}=\frac{2z}{z^2-3},\\ q(z)&=\frac{a(z)}{\dot
			z^2}=\frac{3(z^4-2z^2-19)}{(z^2-1)^2(z^2-3)^2},\\ s(z)&=\frac{b(z)}{\dot
			z^2}=\frac{\Omega z}{(z^2-3)(z^2-1)^2},\quad
		\Omega\in \C\setminus\{0\}.
	\end{split}
\end{equation}
Performing the Tschirnhaus transformation of dependent variables~\eqref{eq:reduced_change}, we obtain the reduced form of the variational equations~\eqref{eq:reduced}, with the coefficients
\begin{equation}
\begin{split}
\label{eq:rr_ss_k=2}
r(z)&= -\frac{6(z^4-2z^2-9)}{(z^2-1)^2(z^2-3)},\\
s(z)&=\frac{3\Omega z}{(z^2-3)(z^2-1)^2}.
\end{split}
\end{equation}
With these coefficients, equation~\eqref{eq:reduced_a} is reducible. Its algebraic solution is
\begin{equation}
\label{eq:X_sol_1_k=2}
x_1(z)=\frac{z(z^2-3)}{(z^2-1)^{3/2}}.
\end{equation}
The second solution is $x_2(z)=x_1(z)\psi(z)$ where
\begin{align}
\label{eq:psi_kalign=2}
\psi(z)=\int \frac{1}{x_1(z)^2}\rmd z = \frac{-4z^4+11z^2-3}{9z(z^2-3)^2}\\-\frac{5}{9\sqrt{3}} \operatorname{arctanh}\left(\frac{z}{\sqrt{3}}\right)
\end{align}
Moreover, integrals $\varphi(z)$ and $I(z)$ defined in~\eqref{eq:psiphi} take the forms
\begin{equation}
\label{eq:phi_k=2}	
\begin{split}
\varphi(z)=&\Omega\frac{ \left(-2 z^6+9 z^4-12 z^2+3\right)}{4 \left(z^2-1\right)^4},\\
I(z)=&\int \varphi(z)\psi(z)\rmd z
=\frac{1}{288} \Omega  \left[\frac{3 \left(z^4+7 z^2-24\right)}{z \left(z^2-3\right)^2} \right.\\
& \left.-18
    \operatorname{arctanh}(z)+5 \sqrt{3} \operatorname{arctanh} \left(\frac{z}{\sqrt{3}}\right)\right].
\end{split}
\end{equation}
%
%
\
\section{Proofs of Theorems~\ref{th:main_theorem}--\ref{th:main_theorem_k=2}}
\label{sec:proof_main_theorem}
The proof of Theorem~\ref{th:main_theorem} is based on the following lemma.
\begin{lemma}
\label{lem:necsu}
For $l,n\in \Q$, the  identity component of the differential Galois group of the
system~\eqref{eq:reduced} is Abelian if and only if 
\begin{enumerate}
  \item \label{it:necs1} $l\geq-1$ is  an odd, or $l<-1$ is an even integer,
   or 
  \item \label{it:necs2} $(n+l)(n+l+2)\neq 0$ and either 
  \begin{enumerate}
    \item $n>0$ is an even, or $n<0$ is an odd integer; or
  \item \label{it:necs3} $n+l$ is an even integer, except the case when $l\geq 0$ and $n\leq 0$ are both even integers.
  \end{enumerate}
\end{enumerate}
\end{lemma}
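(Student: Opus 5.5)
We would work directly with the triangular system~\eqref{eq:reduced} and use the standard criterion for such systems, formulated in Appendix~\ref{app:criterion}. The homogeneous part~\eqref{eq:reduced_a} is a reducible Riemann $P$-equation with exponent differences $\rho=\tfrac12$, $\sigma=\tfrac l2$, $\tau=\tfrac{3+l}{2}$. Its solutions are $x_1$ from~\eqref{eq:X_sol_1} and $x_2=x_1\psi$ with $\psi=\int x_1^{-2}\,\rmd z$. Variation of constants then gives the particular solution of~\eqref{eq:reduced_b}:
\[
Y=x_2\int \varphi\,\rmd z-x_1\int\varphi\psi\,\rmd z,\qquad \varphi=s\,x_1 .
\]
The Picard--Vessiot extension of the full system is therefore built from algebraic functions, $\psi$, and the two integrals $\int\varphi$ and $I=\int\varphi\psi$. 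The identity component of the Galois group of~\eqref{eq:reduced_a} is a subgroup of the triangular group. It is Abelian (additive or trivial) precisely when either $\psi$ is algebraic, or $\psi$ is transcendental but $I$ lies in the field generated by algebraic functions and $\psi$ (in the relevant cases, $I=c\psi^2/2+\text{algebraic}$ suffices).

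\textbf{Case 1: $\psi$ algebraic.} Here $\int\varphi$ and $I$ are abelian integrals, so the group is Abelian automatically. To decide when $\psi$ is algebraic, we would use the representation $\psi\sim z^{-1/2}F(-\tfrac12,1+\tfrac l2;\tfrac12;z)$ from~\eqref{eq:X_sol_2F}. By the monodromy/logarithm analysis of Appendix~\ref{app:monodromy}, $\psi$ is algebraic iff no logarithm appears at $z=1$ or $z=\infty$. Since $\sigma=l/2$, a logarithm can occur only when $l\in\Z$. Checking the cases where the hypergeometric series terminates (the parameter $1+l/2$, or its counterpart at $\infty$, being a non-positive integer) together with the parity of $l$ should give exactly item~\ref{it:necs1}: $l\ge -1$ odd, or $l<-1$ even. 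For $l\notin\Z$ we expect $\psi$ to be algebraic only if the monodromy is finite, which fails since $\rho=\tfrac12$ and $\tau-\sigma=\tfrac32$. We would verify this with Kimura's table, or directly from the local exponents.

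\textbf{Case 2: $\psi$ transcendental.} This is the main obstacle. We would compute $\varphi=s x_1=\Omega z^{1/4}(1-z)^{(n+l-6)/4}$ and express $I$ as $\int z^{a}(1-z)^{b}\,F(\dots)\,\rmd z$. Integrating by parts, or reducing as in~\eqref{eq:phi_k=2_m}, we would try to write
\[
I=\text{algebraic}+c\,\psi^2+d\,z^{\alpha}(1-z)^{\beta}F(\alpha',\beta';\gamma';z).
\]
The group is then Abelian iff the new hypergeometric term is algebraic (or zero) modulo $\psi$. We would compute the local monodromy of $I$ around $0$, $1$, $\infty$ using the connection formulas of Appendix~\ref{app:monodromy}, and check whether the commutator of the monodromy of $I$ with the unipotent monodromy of $\psi$ gives a non-commuting element of the unipotent radical. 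The exponent of $(1-z)$ in $\varphi$ depends on $n+l$. Its integrality and parity should produce condition~\ref{it:necs3}. The degenerate values $n+l=0$ and $n+l=-2$ make the coefficient $c$, or the residue at $z=1$, vanish or create an extra logarithm, and this is why $(n+l)(n+l+2)\ne0$ is required. Integrality of the remaining parameter, related to $n$, with its parity should give alternative (a). Finally, the excluded configuration with $l\ge0$ and $n\le0$ both even would be handled by showing that a terminating series produces a genuine logarithmic term in $I$ not proportional to $\psi^2$. Sufficiency in each listed case would be shown by exhibiting $I$ explicitly as algebraic plus a multiple of $\psi^2$, or by showing that the monodromy group is commutative and unipotent.
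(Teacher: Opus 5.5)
Your Case~1 is the paper's Proposition~\ref{prop:reduced}, but Case~2 has genuine gaps.

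\textbf{Setup errors.} The integrand in variation of constants is $s x_1^2$, not $s x_1$. Since $x_1^2\psi'=1$, one gets $Y=x_2\int s x_1^2\,\rmd z-x_1\int s x_1^2\psi\,\rmd z$. The function $s x_1^2=\Omega z(1-z)^{(n+l-2)/2}$ has primitive $\varphi=-2\Omega[2+(n+l)z](1-z)^{(n+l)/2}/\bigl((n+l)(n+l+2)\bigr)$, which is algebraic exactly when $(n+l)(n+l+2)\neq0$. Your exponent $(n+l-6)/4$ is wrong even for $s x_1$, whose exponent is $(2n+l-6)/4$.

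\textbf{Wrong sufficiency target.} Your field-membership criterion is correct once $I$ is read as $\int s x_1^2\psi\,\rmd z=\varphi\psi-\int\varphi\psi'\,\rmd z$. But when $\varphi$ is algebraic, a term $c\psi^2$ with $c\neq0$ is impossible: differentiating forces $s x_1^2=c\psi'$, i.e.\ $\varphi-c\psi$ constant. The operative condition is Lemma~\ref{lem:necsu2}: $\lambda\psi+\int\varphi\psi'\,\rmd z$ must be algebraic for some constant $\lambda$. So your sufficiency plan aims at a form that never occurs.

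\textbf{No necessity mechanism.} Because $\lambda$ is free, any single monodromy obstruction can be absorbed. The paper therefore uses two commutator loops $\rho_1,\rho_\infty$ acting on $F(-\frac12,1+\frac l2;\frac12;z)$ and $F(\frac12,1-\frac n2;\frac32;z)$, and requires the determinant $\Delta\propto(\rme^{\rmi\pi l}-1)(\rme^{\rmi\pi n}-1)(\rme^{\rmi\pi(n+l)}-1)$ to vanish. That is where ``$n+l$ even'' comes from as a necessary condition, not from an integrality property of $\varphi$. Integrality of $(n+l)/2$ only explains sufficiency: $\varphi$ is then rational and $\int\varphi\psi'\,\rmd z$ reduces to $c\psi$ plus an algebraic function. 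The subcases with $l$ or $n$ even also need a separate logarithmic analysis at $z=1$, which you do not supply.

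\textbf{The excluded configuration.} This step cannot be carried out, because there the identity component is in fact Abelian. Take $l=2l'\ge0$ and $n=-2n'\le0$ and put $z=t^2$. Both $\psi'\,\rmd z$ and $\varphi\psi'\,\rmd z$ become even rational differentials in $t$, with zero residue at $t=0$ and opposite residues at $t=\pm1$. Hence $\psi$ and $I=\int\varphi\psi'\,\rmd z$ both lie in $\C(t)+\C\ln\frac{1+t}{1-t}$, and $\lambda\psi+I$ is algebraic for a suitable $\lambda$.

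For instance, $n=0$, $l=2$ gives
$\psi=-\frac2t+\frac{t}{1-t^2}+3\operatorname{arctanh}t$ and $I=\frac{\Omega}{t}-2\Omega\operatorname{arctanh}t$.
Then $I+\frac{2\Omega}{3}\psi$ is rational in $t$, and Lemma~\ref{lem:necsu2} gives an Abelian identity component, contradicting the ``except'' clause. No terminating-series argument can produce the logarithm you want.

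The paper's own proof of this case rests on $s_{11}\neq0$. But $\sqrt z$ is single-valued near $z=1$ and spans the invariant line of the unipotent local monodromy there, so $\cM_{\sigma_1}(u_1)-u_1\in\C\sqrt z$ and $s_{11}=\widehat s_{11}=0$. For example, for $l=0$ one has $u_1=1-t\operatorname{arctanh}t$ and $\cM_{\sigma_1}(u_1)=u_1+\pi\rmi\sqrt z$. The paper's bracket then collapses to $(\lambda s_{21}+\widehat s_{21})\sqrt z$, which vanishes for a suitable $\lambda$.

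The exception should be dropped. As a consequence, the case $\lambda_k\lambda_m\neq0$ of the H\'enon--Heiles and quartic applications ($n=0$ with $l=8$ and $l=4$) is not settled by these variational equations.
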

We will prove it in the next section. 

First, we show the following fact.
\begin{theorem}
  \label{thm:Kmu}
  Under the assumptions of Theorem~\ref{th:main_theorem}, if the system governed by Hamiltonian
  \eqref{eq:Ku} is integrable in the Liouville sense with meromorphic first integrals, then:
\begin{enumerate}
  \item \label{it:necs1} $l\geq-1$ is  an odd, or $l<-1$ is an even integer,
   or 
  \item \label{it:necs2} $(n+l)(n+l+2)\neq 0$ and either 
  \begin{enumerate}
    \item $n>0$ is an even, or $n<0$ is an odd integer; or
  \item \label{it:necs3} $n+l$ is an even integer, except the case when $l\geq 0$ and $n\leq 0$ are both even integers.
  \end{enumerate}
\end{enumerate}
\end{theorem}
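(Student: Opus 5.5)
Assume the system generated by $K$ in \eqref{eq:Ku} is Liouville integrable with meromorphic first integrals. We apply the Morales--Ramis Theorem~\ref{th:MR} along the particular solution on the invariant plane $\scM_1$, and then pass to the reduced variational system \eqref{eq:reduced}. The conclusion will then follow directly from Lemma~\ref{lem:necsu}.

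\textbf{Step 1: the particular solution.} After the linear canonical change $(u,v)\mapsto(x,y)$, the Hamiltonian $K$ becomes $\widetilde K$ in \eqref{eq:Hs}. Since the change is linear and symplectic (up to a constant factor), meromorphic first integrals of $K$ give meromorphic first integrals of $\widetilde K$, and integrability is preserved. On $\scM_1$ the restricted system \eqref{eq:vhn1} reduces to \eqref{eq:vhs0}. We take a non-constant solution $x_1(\tau)$ lying on the level $h=e=0$. Such a solution exists because, with $\lambda_k\lambda_m\neq0$, the curve \eqref{eq:curve} is a non-degenerate algebraic curve. This gives a non-equilibrium phase curve $\vGamma$.

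\textbf{Step 2: variational equations.} Theorem~\ref{th:MR} gives that the identity component $G^\circ$ of the differential Galois group of \eqref{eq:war} along $\vGamma$ is Abelian. The variational system \eqref{eq:war} is equivalent to \eqref{eq:var2nd}.

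\textbf{Step 3: passing to the reduced system.} The independent-variable change \eqref{eq:change_1}, $\tau\mapsto z$, is a finite branched covering of the curve $\vGamma$ onto $\mathbb{P}^1$. This is because $z$ is a rational function of a power of $x_1$, with rational exponents. By the standard result on algebraic changes of the independent variable (Morales--Ramis), such a covering preserves the identity component of the Galois group. Hence \eqref{eq:var2rational_rat} has an Abelian identity component. The Tschirnhaus transformation \eqref{eq:reduced_change} multiplies both $X_1$ and $X_2$ by the same factor $\exp(-\tfrac12\int p)$. Since $p$ has only simple poles with rational residues, this factor is algebraic over $\C(z)$, so the transformation does not change $G^\circ$ either. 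Therefore the identity component of the Galois group of \eqref{eq:reduced} is Abelian.

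\textbf{Step 4: conclusion.} Lemma~\ref{lem:necsu}, in its ``only if'' direction, now yields exactly the listed alternatives for $n$ and $l$.

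\textbf{Main obstacle.} The delicate point is the justification in Step 3 that $G^\circ$ is invariant under the change $\tau\mapsto z$. For that change we must check that it is a genuine finite covering. In particular, the coefficient $s(z)$ contains $\sqrt z$ and $(1-z)^{(n-4)/2}$, so it is algebraic but not rational over $\C(z)$. We will handle this by working over the algebraic extension $\C(z,\sqrt z,(1-z)^{1/N})$ for suitable $N$. Finite algebraic extensions of the base field leave $G^\circ$ unchanged, so this suffices; the analysis in Lemma~\ref{lem:necsu} is carried out in that setting.
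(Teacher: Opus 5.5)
Your proposal is correct and takes the same route as the paper: apply Morales--Ramis along the particular solution in $\scM_1$ with $h=e=0$, then use the ``only if'' direction of Lemma~\ref{lem:necsu}. Your Step~3 adds something the paper's two-line proof leaves implicit, namely the check that the change $\tau\mapsto z$, the passage to the finite algebraic extension containing $s(z)$, and the algebraic Tschirnhaus factor all preserve the identity component of the Galois group.
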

\begin{proof}
  The system governed by Hamiltonian \eqref{eq:Ku} has a family of particular solutions defined by \eqref{eq:vhn1}, or equivalently by \eqref{eq:vhs0}, and the corresponding variational equations are given by~\eqref{eq:var2nd}. If it is integrable, then according to the Morales-Ramis Theorem~\ref{th:MR}, the identity component of the differential Galois group of the variational equations along the mentioned particular solution is Abelian. Invoking Lemma~\ref{lem:necsu}, we obtain the thesis of the theorem. 
\end{proof}
Now, the proof of Theorem~\ref{th:main_theorem} is simple.
\begin{proof}{[Proof of Theorem~\ref{th:main_theorem}]} 
Let us assume that the system defined  Hamiltonian~\eqref{eq:Hamek} is integrable. Then it admits an additional  first integral, which is a meromorphic function of $(q_1,q_2,p_1,p_2,r)$. After the Levi-Civita transformation, it will be a meromorphic function of $(u_1,u_2,v_1,v_2)$ which is an additional first integral system defined by Hamiltonian~\eqref{eq:hamek2}. Thus, by 
by Lemma~\ref{lem:CCM}, the system governed by Hamiltonian \eqref{eq:Ku} is integrable. But 
 by 
 Theorem~\ref{thm:Kmu} states that it is not integrable. The contradiction finishes the proof.  
\end{proof}
We pass now to the proof of Theorem~\ref{th:main_theorem_k=2_m}. The proof is based on the following lemma which we prove in Section~\ref{sec:proof_lemma_m}.
\begin{lemma}
\label{lem:necsu_m}
If $\abs{m}>2$ then  the  identity component of the differential Galois group of the
system~\eqref{eq:reduced} with coefficients defined by~\eqref{eq:rr_ss_k=2_m} is not Abelian.  
\end{lemma}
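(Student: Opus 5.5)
The plan is to apply the abelianity criterion of Appendix~\ref{app:criterion} to the reduced system~\eqref{eq:reduced} with coefficients~\eqref{eq:rr_ss_k=2_m}. The first equation~\eqref{eq:reduced_a} is reducible, with the algebraic solution $x_1$ from~\eqref{eq:X_sol_1_k=2_m} and the second solution $x_2=x_1\psi$, where $\psi$ is given by~\eqref{eq:psi_k=2_m}. The inhomogeneous equation~\eqref{eq:reduced_b} is then solved by variation of constants. Its solutions lie in the Picard--Vessiot extension of $K=\C(z)(x_1,\sqrt{z},(1-z)^{1/m})$ generated by $\psi$, by $\int\varphi\,\rmd z$, and by $I=\int\varphi\psi\,\rmd z$. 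Since $\varphi$ in~\eqref{eq:phi_k=2_m} is algebraic, the integral $\int\varphi\,\rmd z$ is algebraic as well, so it does not contribute a transcendental generator.

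The Galois group therefore sits in a triangular (solvable) group, and its identity component is unipotent. On the generators it acts by $\psi\mapsto\psi+c$ and $I\mapsto I+c\,\Phi+d$ up to algebraic corrections, where $\Phi=\int\varphi\,\rmd z$ is algebraic. The criterion then says that the identity component is Abelian exactly when $I$ and $\psi$ are not ``independently'' transcendental in the relevant sense. Concretely, I would aim to show that $I$ is not of the form $c\,\psi\,\Phi+$ (an element algebraic over $K(\psi)$), so that the commutator of two elements with different ratios $d/c$ acts nontrivially on the solution vector $(x_1,x_2,X_1\text{-components})$.

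To decide this, I would write both transcendental pieces as incomplete Beta integrals. Up to algebraic factors, $\psi\sim\int z^{-3/2}(1-z)^{-1-1/m}\,\rmd z$ and $I\sim\int z^{-1/2}(1-z)^{-1/m}\,\rmd z$, matching $F(-\tfrac12,1+\tfrac1m;\tfrac12;z)$ and $\sqrt z\,F(\tfrac12,1-\tfrac1m;\tfrac32;z)$. I would then use the monodromy data of Appendix~\ref{app:monodromy}: local monodromies at $z=0$ and $z=1$, together with the connection formulas, applied to the vector $(1,\psi,I)$ over a suitable algebraic cover that kills the finite monodromy of $\sqrt z$ and $(1-z)^{1/m}$. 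On that cover the monodromy group is contained in the identity component, so it suffices to exhibit two loops whose monodromy matrices do not commute.

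The main obstacle is this last step. I would have to show that the periods of the two Beta integrals are non-proportional. These periods are $B(-\tfrac12,-\tfrac1m)$-type values for $\psi$ and $B(\tfrac12,1-\tfrac1m)$-type values for $I$. The comparison is of the translation vectors $(c,d)$ attached to commutators of loops around $0$ and $1$ lifted to the cover. The hypothesis $\abs{m}>2$ is exactly what should guarantee that the exponents $\pm 1/m$ and $1\pm1/m$ are not half-integers or integers. That excludes the degenerate situations in which one integral becomes algebraic or logarithmically collapses onto the other, and the ratio of Gamma-function factors is then not compatible with commutation. I would first check the small cases $m=\pm3,\pm4$ explicitly to confirm that the commutator is nonzero before arguing in general via the Gamma-function expressions.
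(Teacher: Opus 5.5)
Your overall route is the paper's. Since $\varphi$ is algebraic, Lemma~\ref{lem:necsu2} reduces the question to showing that $\lambda\psi+I$ is transcendental for every $\lambda\in\C$, and this is decided by the monodromy of two incomplete Beta integrals. However, that decisive step is exactly what you leave as the ``main obstacle'', and as you set it up it would fail. Your integrand for $I$ is wrong. With $I=\int\varphi\psi'\,\rmd z$ as in~\eqref{eq:psiphi}, one has $\varphi\psi'\propto(m+2z)z^{-3/2}(1-z)^{\frac1m-1}$, and
\[
\sqrt z\,F\left(\tfrac12,1-\tfrac1m;\tfrac32;z\right)=\tfrac12\int_0^z t^{-1/2}(1-t)^{\frac1m-1}\,\rmd t .
\]
It is not $\int z^{-1/2}(1-z)^{-1/m}\,\rmd z$, which equals $2\sqrt z\,F(\tfrac12,\tfrac1m;\tfrac32;z)$. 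With your integrand, both integrands lie in $z^{1/2}(1-z)^{-1/m}\C(z)$, and integration by parts gives
\[
\int z^{-1/2}(1-z)^{-1/m}\,\rmd z+\frac{2m}{m^2-4}\int z^{-3/2}(1-z)^{-1-1/m}\,\rmd z\ \text{ algebraic}.
\]
So your periods would be proportional, a suitable $\lambda$ would exist, and you would conclude that the identity component is Abelian. In particular, ``$\abs{m}>2$ keeps the exponents away from integers and half-integers'' is not the mechanism behind the lemma. No comparison of Gamma-factor ratios built on your integrands can succeed.

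What makes the lemma true is how the two correct integrands behave around $z=1$. The integrand $z^{-3/2}(1-z)^{-1-1/m}$ of $\psi$ picks up the factor $\rme^{-2\pi\rmi/m}$, while the integrand $z^{-1/2}(1-z)^{\frac1m-1}$ of $I$ picks up $\rme^{2\pi\rmi/m}$. These differ exactly when $\sin(2\pi/m)\neq0$, which holds for $\abs{m}>2$. This is what the paper's determinant $\Delta=4\rmi\pi(m+2)\sin^2(2\pi/m)$ records.

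There is an alternative to computing $\Delta$. Trace an algebraic primitive of $\lambda\psi'+I'$ down to $\C(z,\sqrt z,(1-z)^{1/m})$ and project onto the Kummer eigenspace of $z^{1/2}(1-z)^{1/m}$. This forces $I$ alone, and hence $\sqrt z\,F(\tfrac12,1-\tfrac1m;\tfrac32;z)$, to be algebraic, contradicting Proposition~\ref{lem:F_m_hat}.

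You also still owe a proof that $\psi$ is transcendental. If it were algebraic, Lemma~\ref{lem:necsu0} would give an Abelian identity component and Lemma~\ref{lem:necsu2} would not apply. The paper settles this in Proposition~\ref{lem:F_m} by the exponent/degree argument, which your proposal only gestures at.

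Finally, your Galois action $I\mapsto I+c\Phi+d$ and your condition ``$I$ not of the form $c\psi\Phi+\dots$'' come from reading $I$ as $\int\varphi\psi\,\rmd z$. With the correct $I$, it is itself an integral of an algebraic function, and the target is simply the condition of Lemma~\ref{lem:necsu2}.
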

With the above lemma, the proof of Theorem~\ref{th:main_theorem_k=2_m} is similar as the proof of Theorem~\ref{th:main_theorem} and we left it to the reader.

Proof of Theorem~\ref{th:main_theorem_k=2} needs more effort.
\begin{proof}{[Proof of Theorem~\ref{th:main_theorem_k=2}]}
 As in the proof of Theorem~\ref{th:main_theorem}, first we investigate integrability  the system defined by
 Hamiltonian~\eqref{eq:Hamek}. To this end  is integrable, variational equations along a particular
 solution defined by \eqref{eq:vhn1}. 
They are given by~\eqref{eq:reduced}, with
coefficients defined by~\eqref{eq:rr_ss_k=2}. To show that the identity
 component of its differential Galois group is not Abelian, we apply
Lemma~\ref{lem:necsu2}. Taking. into account that for the considered case
functions $\psi(z)$ and $I(z)$ are given by~\eqref{eq:psi_k=2}
and~\eqref{eq:phi_k=2}, we find that function $g(z)$ defined by~\eqref{eq:gzA2}
has the following form 
\begin{equation*}
\label{eq:g_k=2}
g(z)= R(z) +(\lambda c_1 + c_2)  \operatorname{arctanh}\left(\frac{z}{\sqrt{3}}\right) +c_3 \operatorname{arctanh}(z)
\end{equation*}
where $R(z)$ is a rational function, and $c_1$, $c_2$ and $c_3$ are non-zero constants. For arbitrary $\lambda\in\C$, this function is not algebraic. 
To see this, we note that if $g(z)$ is algebraic, then the function
\begin{equation*}
\widetilde{g}(z)=a_1 \operatorname{arctanh}\left(\frac{z}{\sqrt{3}}\right) +a_2 \operatorname{arctanh}(z),
\end{equation*}
is algebraic for some $a_1,a_2\in\C$, not both zero. This is impossible because
an algebraic function does not have irregular singularities. If $a_1\neq0$ then
$ \operatorname{arctanh}(z/\sqrt{3})$ as well as $\widetilde{g}(z)$ has two irregular singular
points at $z=\pm \sqrt{3}$. Thus, necessarily $a_1=0$. In our case $a_1= \lambda
c_1 + c_2$, so we have to fix $\lambda= -c_2/c_1$. However, if $a_2\neq0$, then
$ \operatorname{arctanh}(z)$  and $\widetilde{g}(z)$ has two irregular singular points at
$z=\pm 1$. Thus, necessarily $a_2=0$. This is a contradiction which proves our
statement. Thus, the identity component of the differential Galois group of the
variational equations is not Abelian, which finishes the proof of
Theorem~\ref{th:main_theorem_k=2}. 
\end{proof}
\section{Main lemma}
\label{sec:main_lemma}
In this section, we formulate and prove the Lemma~\ref{lem:necsu}  which plays a key role in the
proof of Theorem~\ref{th:main_theorem}. It specifies all cases when the identity
component of the system's differential Galois group of the
system~\eqref{eq:reduced} is Abelian.
%
%

To prove this lemma, we use the criteria formulated in Appendix~A. To this end
we must compute the integrals $\psi(z)$, $\varphi(z)$, and $I(z)$ defined
in~\eqref{eq:psiphi}. The functions $r(z)$ and $s(z)$ for the system
\eqref{eq:reduced} are given by~\eqref{eq:rr_ss}. The integral $\psi(z)$ is
\begin{equation}
  \label{eq:psi}
  \psi(z)= -\frac{2}{\sqrt{z}}\,F\left(-\frac{1}{2},\,1+\frac{l}{2};\,\frac{1}{2};\,z\right).
\end{equation}
If this function is algebraic, then the two solutions of
equation~\eqref{eq:reduced_a} are algebraic, and the identity component of the
differential Galois group of system~\eqref{eq:reduced} is Abelian; see
Lemma~\ref{lem:necsu0}. Note that $\psi(z)$ is algebraic if and only if the
solution $x_2(z)$ given by~\eqref{eq:X_sol_2F} is algebraic.

We show first the following facts.
\begin{proposition}
  \label{prop:reduced}
  Solution~\eqref{eq:X_sol_2F} is algebraic if and only if either $l\geq -1$ is
  an odd integer, or $l<-1$ is an even integer. In this case, the
  identity component of the differential Galois group of the 
  system~\eqref{eq:reduced} is Abelian.
\end{proposition}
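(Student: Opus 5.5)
Since $x_1(z)=z^{3/4}(1-z)^{(2+l)/4}$ is algebraic, $x_2=x_1\psi$ is algebraic if and only if $\psi$ is. By~\eqref{eq:psi}, this happens if and only if $F(-\tfrac12,1+\tfrac l2;\tfrac12;z)$ is an algebraic function of $z$. I would therefore reduce the proposition to the algebraicity of this reducible Gauss hypergeometric function.

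The first step is to read off the exponent differences of the Riemann scheme of $F(\alpha,\beta;\gamma;z)$ with $\alpha=-\frac12$, $\beta=1+\frac l2$, $\gamma=\frac12$. They are $1-\gamma=\tfrac12$ at $0$, $\gamma-\alpha-\beta=-\tfrac l2$ at $1$, and $\beta-\alpha=\tfrac{3+l}{2}$ at $\infty$, matching~\eqref{eq:delta}. Since $\alpha=-\frac12$ is a half-integer and $\gamma-\alpha=1$, the equation is reducible. The monodromy group is then triangular, with the algebraic solution $x_1$ spanning the invariant line. Algebraicity of $x_2$ is therefore equivalent to finiteness of the monodromy, i.e.\ to two conditions: all local exponent differences are rational, which holds since $l\in\Q$; and there are no logarithmic terms at any of the three singular points.

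The second step is the logarithm analysis, using the local monodromy matrices and logarithmic criteria from Appendix~\ref{app:monodromy}.
\begin{itemize}
\item At $z=0$ the exponent difference $\tfrac12$ is not an integer, so no logarithm can appear there.
\item A logarithm can occur only at $z=1$, where the difference is $l/2$, or at $\infty$, where it is $(3+l)/2$, and only when that difference is an integer.
\item If neither $l/2$ nor $(3+l)/2$ is an integer, the local monodromies are diagonalizable of finite order. The commutator of two upper-triangular matrices with finite-order diagonals is then unipotent. Equivalently, $\psi'=x_1^{-2}=z^{-3/2}(1-z)^{-(2+l)/2}$ has a non-zero residue obstruction only in the integer cases. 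The cleaner route, however, is the direct one: with $x_1^{-2}$ a product of powers, $\psi$ is an incomplete beta integral $\int z^{-3/2}(1-z)^{-1-l/2}\,dz$.
\end{itemize}
An incomplete beta integral $\int z^{a-1}(1-z)^{b-1}dz$, here with $a=-\tfrac12$ and $b=-\tfrac l2$, is algebraic if and only if one of $a$, $b$, $a+b$ is a non-positive integer in an appropriate sense, or more generally if the integrand has an algebraic antiderivative. I would apply the classical criterion (Chebyshev's theorem on binomial differentials, the substitution $z=t^2$ reducing it to the binomial form $\int t^{-2}(1-t^2)^{-1-l/2}dt$). That criterion states the integral is elementary-algebraic precisely when one of the following is an integer: $p=-1-\tfrac l2$, $(m+1)/n=-\tfrac12$, or $(m+1)/n+p=-\tfrac{3+l}{2}$. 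Since $-\tfrac12$ is never an integer, the cases are $l$ even or $l$ odd.

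The third step, which I expect to be the main obstacle, is the sign analysis. Chebyshev's criterion gives elementariness, and the result may still contain $\log$ or $\operatorname{arctanh}$ terms. So one must check when the antiderivative is actually algebraic, i.e.\ has zero residues.
\begin{itemize}
\item For $l$ even, the integrand is $t^{-2}(1-t^2)^{-1-l/2}$, a rational function in $t$. Its antiderivative is algebraic iff the residues at $t=\pm1$ vanish. If $l<-1$ (so $l\le -2$), the integrand is a polynomial times $t^{-2}$, with no residue, so the integral is algebraic. If $l\ge0$ is even, partial fractions produce an $\operatorname{arctanh} t$ term with non-zero coefficient, which I would verify by computing the residue explicitly or via the logarithmic criterion at $z=1$ in Appendix~\ref{app:monodromy}.
\item For $l$ odd, substitute $1-t^2=s^2t^2$ (or use the case at $\infty$). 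The residue of the resulting rational integrand vanishes exactly when $l\ge-1$, and for $l\le-3$ a logarithm appears.
\end{itemize}
Putting the cases together gives: $l\ge-1$ odd or $l<-1$ even. In these cases both solutions of~\eqref{eq:reduced_a} are algebraic, and the final claim follows from Lemma~\ref{lem:necsu0}.
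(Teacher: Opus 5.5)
Your proof works once one false sentence is removed, and it takes a different route from the paper.

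\textbf{Comparison with the paper.} The paper treats $x_2$ as a solution of the Fuchsian equation. An algebraic solution must have the form $z^{e_0}(z-1)^{e_1}p(z)$ with $\deg p=-(e_0+e_1+e_\infty)$, and this forces $l\in\Z$. In fact the admissible exponents give $\deg p\in\{-1-\tfrac{l}{2},\tfrac{1+l}{2}\}$, so the two families already appear through $\deg p\ge 0$. The paper then settles the ``if and only if'' by quoting the logarithm criterion of Iwasaki et al.\ (Lemma~4.3.7). You instead use that $x_1^{-2}=z^{-3/2}(1-z)^{-1-l/2}$ is a pure product of powers, so $\psi$ is an incomplete beta integral. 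Chebyshev's theorem on binomial differentials then gives $l\in\Z$. The substitutions $z=t^2$ and $1-t^2=s^2t^2$ turn the logarithm question into residues of the rational integrands $t^{-2}(1-t^2)^{-1-l/2}$ and $-s^{-1-l}(1+s^2)^{(1+l)/2}$. In these forms the sign conditions are simply polynomiality of the second factor. Your route is elementary and needs no black-box log criterion. The paper's is uniform for reducible Gauss equations, which is why it is reused for Propositions~\ref{prop:Fn} and~\ref{lem:F_m}.

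\textbf{The sentence to delete.} Your second step asserts that algebraicity is equivalent to rational exponents plus no logarithms at $0,1,\infty$. This is false here.
\begin{itemize}
\item For $l=\tfrac13$ all three local monodromies are diagonalisable of finite order. Yet $\psi$ is not even elementary, by your own Chebyshev step.
\item The reason is a non-trivial unipotent commutator in the global monodromy: the entry $c_{21}\neq 0$ of Section~\ref{ssec:algebraic}. No local condition detects it.
\item Taken literally, your criterion would make $x_2$ algebraic for every non-integer $l$. The non-integer case rests entirely on Chebyshev.
\item For the same reason, drop the ``residue obstruction'' remark. Also drop ``non-positive integer in an appropriate sense'': for $b=-l/2$ the algebraic case is $b\in\Z_{>0}$.
\end{itemize}

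\textbf{The non-vanishing checks.} You deferred two non-vanishing claims, but they carry the ``only if'' direction, so they must be proved. Both are quick.
\begin{itemize}
\item For $l=2l'\ge 0$, iterate $\tfrac{1}{t^2(1-t^2)^{k+1}}=\tfrac{1}{t^2(1-t^2)^{k}}+\tfrac{1}{(1-t^2)^{k+1}}$. This gives $t^{-2}(1-t^2)^{-1-l'}=t^{-2}+\sum_{k=1}^{l'+1}(1-t^2)^{-k}$. The reduction formula shows each $\int(1-t^2)^{-k}\,\rmd t$ contains $\operatorname{arctanh}t$ with a positive coefficient, so the total coefficient is non-zero.
\item For $l=-2N-1\le -3$, the $s$-integrand is $-1+\bigl(1-s^{2N}(1+s^2)^{-N}\bigr)$. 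The bracket is positive on $\R$ and $O(s^{-2})$. Hence its antiderivative is $A\arctan s+R(s)$ with $R$ rational and $R(\pm\infty)=0$. Then $A\pi=\int_{\R}\bigl(1-s^{2N}(1+s^2)^{-N}\bigr)\,\rmd s>0$, so $A\neq 0$.
\end{itemize}
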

\begin{proof}
  If  solution~\eqref{eq:X_sol_2F} is algebraic, then it has the form 
  \begin{equation}
    \label{eq:x2a}
    x_2(z)= z^{e_0}(z-1)^{e_1}p(z) 
  \end{equation}
  where $p(z)$ is a polynomial of degree $d=-(e_0+e_1+e_\infty)$, and
  $e_0$,$e_1$ and $e_\infty$ are exponents at the points $0$, $1$, and $\infty$
  respectively, see~\cite[Ch.4]{Iwasaki:91::}. From this fact, we deduce that
  $l$ has to be an integer. This is only a necessary condition. If $l$ is an
  integer, then solution~\eqref{eq:X_sol_2F} can have a logarithmic term. The
  necessary and sufficient condition for the presence of logarithmic term is
  given in~\cite[Lemma 4.3.7]{Iwasaki:91::}. Using them, we find they do not
  appear if and only if $l$ satisfies the given assumptions.  

  If both solutions of equation~\eqref{eq:reduced_a} are algebraic, then, as the
  equation is reducible, its differential Galois group, as well as the monodromy
  group, is contained in the diagonal subgroup of $\mathrm{SL}(2,\C)$. Then by
  Theorem~3.2 in \cite{Duval:09::}, the identity component of the differential
  Galois group of the system~\eqref{eq:reduced} is Abelian.  
\end{proof}

The form of integral $\varphi(z)$  defined by~\eqref{eq:psiphi} depends on the
values of $n$ and $l$. Namely, if $(n+l) (n+l+2)\neq 0$, then $\varphi(z)$ is
algebraic
\begin{equation}
  \varphi(z) = -2\Omega\frac{  [2+z (n+l)](1-z)^{\frac{n+l}{2}} }{(n+l) (n+l+2)}.
\end{equation}
In this case, we need also the integral $I(z)$, which is given by
\begin{equation}
  \label{eq:I}
  I(z) = 
	a(z) + b \sqrt{z} F\left(\frac{1}{2},1-\frac{n}{2};\frac{3}{2};z\right),  
\end{equation}
where $a(z)$ is an algebraic function and $b$ is a non-zero complex number
(The explicit form of these coefficients is irrelevant for further
calculations).

The case when $(n+l) (n+l+2)= 0$ correspond to $k=-2$ or $m=-2$. 
If $l=-n$, then
\begin{equation}
  \label{eq:phi1}
  \varphi(z)=-a (z+\log (z-1))
\end{equation}
and if $l=-n-2$, then
\begin{equation}
  \label{eq:phi2}
  \varphi(z)=a \left(\log (z-1)-\frac{1}{z-1}\right).
\end{equation}
In both cases, function $\varphi(z)$ is not algebraic.



  


The proof of Lemma~\ref{lem:necsu} depends  on the condition   $(n+l)
(n+l+2)\neq 0$. 

\subsection{The algebraic case }
\label{ssec:algebraic}

This section assumes that $(n+l) (n+l+2)\neq 0$.  

As for cases considered in this section, $\varphi(z)$ is algebraic, we will use
criterion given in Lemma~\ref{lem:necsu2}. Since integrals $\psi(z)$ and $I(z)$
are given by~\eqref{eq:psi} and~\eqref{eq:I}, it is clear that, after the
rearrangement of terms, as a function $g(z)$ in Lemma~\ref{lem:necsu2} we can
take 
\begin{equation}
	\label{eq:is alg}
	g(z):= 
		\lambda F(\alpha,\beta;\gamma;z)+
  zF(\widehat\alpha,\widehat\beta;\widehat\gamma;z),
\end{equation} 
where  
\begin{equation}
	\label{eq:para0}
	\begin{split}
		(\alpha,\beta,\gamma) = &\left(-\frac{1}{2},1+\frac{l}{2 },\frac{1}{2}\right),\\
		(\widehat\alpha,\widehat\beta,\widehat\gamma) = &
		\left(\frac{1}{2},1-\frac{n}{2},\frac{3}{2}\right).
	\end{split}
\end{equation}
We have to check if there exists $\lambda\in\C$ such that $g(z)$ is
algebraic. Let us notice that if $F(\widehat\alpha,\widehat\beta;\widehat\gamma;z)$ is algebraic, then  $g(z)$ is algebraic as well for $\lambda =0$. In these cases, we distinguish by the following proposition.
\begin{proposition}
\label{prop:Fn}
  The hypergeometric function $F\left(\tfrac{1}{2},1-\tfrac{n}{2};\tfrac{3}{2};z\right)$ is algebraic if and only if  either $n$ is positive and even, or $n$ is negative and odd.
\end{proposition}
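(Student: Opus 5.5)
Set $a=\tfrac12$, $b=1-\tfrac n2$, $c=\tfrac32$. I would prove the statement through the classical criterion for a hypergeometric function to be algebraic: $F(a,b;c;z)$ is algebraic if and only if either the Gauss equation is reducible and all its solutions are algebraic, or it is irreducible and its monodromy group is finite (Schwarz's list). The exponent differences of the Gauss equation are
\[
1-c=-\tfrac12,\qquad c-a-b=\tfrac{n}{2},\qquad a-b=\tfrac{n-1}{2}.
\]
By Kimura's criterion, reducibility holds exactly when one of $a,b,c-a,c-b$ is an integer. Here $a=\tfrac12$ and $c-a=1$. So the equation is \emph{always} reducible, because $c-a=1\in\Z$. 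Schwarz's list is therefore not needed, and only the reducible case has to be treated.

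For the reducible case I would argue directly from Euler's integral:
\[
\sqrt z\,F\left(\tfrac12,1-\tfrac n2;\tfrac32;z\right)=\tfrac12\int_0^z t^{-1/2}(1-t)^{\frac n2-1}\,\rmd t ,
\]
which is an incomplete beta function, up to a constant. Substituting $t=s^2$ turns this into $\int_0^{\sqrt z}(1-s^2)^{\frac n2-1}\,\rmd s$. So the statement becomes: the integral $\int(1-s^2)^{\nu}\,\rmd s$ with $\nu=\tfrac n2-1$ is algebraic exactly under the stated conditions on $n$.

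The sufficiency direction is explicit computation.
\begin{itemize}
\item If $n>0$ is even, then $\nu$ is a nonnegative integer and the integrand is a polynomial.
\item If $n<0$ is odd, then $\nu$ is a negative half-integer, $\nu\le-\tfrac32$. Here $\int(1-s^2)^{-3/2}\rmd s=s(1-s^2)^{-1/2}$, and a reduction formula shows the lower powers also integrate to algebraic functions with no $\arcsin$ term. I would verify this by the Chebyshev criterion on binomial differentials, or by induction using $\frac{\rmd}{\rmd s}\bigl[s(1-s^2)^{\nu+1}\bigr]$.
\end{itemize}

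For necessity I would use the Chebyshev (Liouville) theorem on elementary and algebraic integrals of binomial differentials $s^{0}(1-s^2)^{\nu}\,\rmd s$. Alternatively, I would argue with monodromy and residues: the integral is algebraic iff no logarithm or transcendental period appears.
\begin{itemize}
\item If $\nu\notin\tfrac12\Z$, then going around $s=\pm1$ gives non-trivial additive periods. Equivalently, the monodromy of the reducible Gauss equation has an infinite unipotent part, since the unipotent component is the nonzero complete beta value. Hence the integral is transcendental. So $n$ must be an integer.
\item If $n\le 0$ is even, then $\nu$ is a negative integer. Partial fractions give a nonzero $\log\frac{1+s}{1-s}$ term, so the integral is transcendental.
\item If $n>0$ is odd, then $\nu\ge-\tfrac12$ is a half-integer. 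The integral contains a nonzero multiple of $\arcsin s$, which is transcendental.
\end{itemize}

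The main obstacle is showing that these coefficients (the log residue, the arcsin coefficient, the beta period) really are nonzero. I would handle this with explicit recursions in $\nu$: the reduction formula
\[
(2\nu+1)\int(1-s^2)^{\nu}=s(1-s^2)^{\nu}+2\nu\int(1-s^2)^{\nu-1}
\]
propagates a nonzero $\arcsin$ or $\operatorname{arctanh}$ coefficient from the base cases $\nu=-\tfrac12$ and $\nu=-1$. For the non-half-integer case, I would use $B\bigl(\tfrac12,\nu+1\bigr)\neq0$ together with nontriviality of the local monodromy at $s=1$.
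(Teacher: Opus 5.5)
Your proof is correct, but it takes a genuinely different route from the paper.

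\textbf{The paper's route.} The paper omits the proof and refers to the argument for Proposition~\ref{prop:reduced}. The equation is reducible with the algebraic solution $z^{-1/2}$, so an algebraic $F$ must have the form $z^{e_0}(z-1)^{e_1}p(z)$, with $\deg p=-(e_0+e_1+e_\infty)$ and exponents taken from $\{0,-\tfrac12\}$, $\{0,\tfrac n2\}$, $\{\tfrac12,1-\tfrac n2\}$. This exponent bookkeeping forces $n\in\Z$. The logarithmic criterion of Iwasaki et al.\ (Lemma~4.3.7) then selects the admissible integers.

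\textbf{Your route.} You use $c-a=1$ to turn $\sqrt z\,F$ into $\int_0^{\sqrt z}(1-s^2)^{n/2-1}\,\rmd s$, so the question becomes one about integrating a binomial differential. Chebyshev's theorem disposes of $n\notin\Z$. For integer $n$, the reduction formula carries a nonzero $\arcsin$ or $\operatorname{arctanh}$ coefficient. Its transfer factor, $2\nu/(2\nu+1)$ upward from $\nu=-\tfrac12$ or its inverse downward from $\nu=-1$, never vanishes along those chains. The vanishing of $2\nu+1$ at $\nu=-\tfrac12$ is exactly what removes the $\arcsin$ for negative odd $n$.

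\textbf{Comparison.} Your argument is self-contained, gives explicit closed forms, and shows transparently why precisely these $n$ work. The paper's argument is shorter given the cited structure theorem for algebraic solutions and the log-term lemma, and it is uniform with Proposition~\ref{prop:reduced}.

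\textbf{Two small remarks.}
\begin{enumerate}
\item The general criterion in your first paragraph is not correct as stated. In a reducible equation $F$ can be algebraic while the other solution is not, e.g.\ when $F$ is a polynomial. You never use it, and here it is harmless because $F$ is independent of the algebraic solution $z^{-1/2}$.
\item In your monodromy alternative, a single loop around $s=1$ acts by $G\mapsto \rme^{2\pi\rmi\nu}(G-C)+C$, with $C=\tfrac12 B(\tfrac12,\nu+1)$. This map has finite order since $\nu\in\Q$. The unbounded translation $G\mapsto G+2C(1-\rme^{2\pi\rmi\nu})$ comes only from combining the loops around $s=1$ and $s=-1$. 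For $\nu<-1$, the constant $C$ must be read as the analytically continued value $\sqrt{\pi}\,\Gamma(\nu+1)/\bigl(2\Gamma(\nu+\tfrac32)\bigr)$. This translation is, up to sign, the paper's $\widehat c_{21}$.
\end{enumerate}
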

The proof of this proposition is similar to the proof 
Proposition~\ref{prop:reduced}, so we omit it.

To verify if $g(z)$ is algebraic, we investigate its analytic
continuations along closed paths with a common one point. If it is an algebraic
function, then such continuations can give only a finite number of different
values. As $g(z)$ is a linear combination of two hypergeometric functions, we have to analyse the analytical continuations of both of them. 
Thus, we have to analyse two  Gauss hypergeometric equations with respective  parameters
$(\alpha,\beta,\gamma)$ and
$(\widehat\alpha,\widehat\beta,\widehat\gamma)$. The respective bases of
local solutions  in a neighbourhood of  singularity $z=0$ are 
\begin{align*}
  u_1(z) = & F(\alpha,\beta;\gamma;z),                               \qquad\qquad
  u_2(z)= \sqrt{z},  \\ \widehat{u}_1(z) =
           & F(\widehat\alpha,\widehat\beta;\widehat\gamma;z), \qquad\qquad
  \widehat{u}_2(z)=  \frac{1}{\sqrt{z}}. 
\end{align*}
We shall check how these local solutions change during analytical continuation
along certain closed loops. Notions of analytical continuation and local and
global monodromy and their calculations for the hypergeometric equation are
presented in Appendix B.

We take two loops  $ \sigma_0$ and $\sigma_1$  with one common point $z_0$
encircling  counter-clockwise singularities $z=0$ and $z=1$, respectively,  see
Fig.~\ref{fig:mono} in Appendix A. By $M_{\sigma_0} $ and $M_{\sigma_1}$ we denote the
respective monodromy matrices. 

Let us assume that neither $l$ nor $n$ is an even integer. Then the
An explicit form of monodromy matrices is defined by~\eqref{eq:M_zero} and
\eqref{eq:10}. For further considerations, we take  the commutator loop
\begin{equation*}
  \rho_1 =
  \sigma_0\sigma_1\sigma_0^{-1}\sigma_1^{-1}
\end{equation*}
and the
corresponding monodromy matrix
\begin{equation}
  \label{eq:11} C := M_{\rho_1}= M_{\sigma_1}^{-1} M_{\sigma_0}^{-1}
  M_{\sigma_1} M_{\sigma_0} 
\end{equation}
We will also need the following commutator%
\begin{equation}
  \label{eq:12} D:=
  M_{\rho_{\infty}}=M_{\sigma_\infty}^{-1} M_{\sigma_0}^{-1}
	M_{\sigma_\infty} M_{\sigma_0}  
\end{equation}

For the sets of parameters~\eqref{eq:para0}, the respective
commutator matrices are denoted by $C$, $\widehat C$ and $D$, $\widehat D$.
All of them are unipotent and lower triangular, thus they have the form
\[
\begin{aligned}
  C&=\begin{bmatrix}
    1 & 0 \\
    c_{21} & 1
  \end{bmatrix},\quad
  \widehat C=\begin{bmatrix}
    1 & 0 \\
    \widehat c_{21} & 1
  \end{bmatrix},\\[1em]
  D&=\begin{bmatrix}
    1 & 0 \\
    d_{21} & 1
  \end{bmatrix},\quad
  \widehat D=\begin{bmatrix}
    1 & 0 \\
    \widehat d_{21} & 1
  \end{bmatrix}.
\end{aligned}
\]
The analytical continuations of $F(\alpha,\beta;\gamma;z)$ and
  $F(\widehat\alpha,\widehat\beta;\widehat\gamma;z)$ along loop $\rho_1$ give
\begin{equation*}
  \begin{split}
    \cM_{\rho_1}( F(\alpha,\beta;\gamma;z)) = &
    F(\alpha,\beta;\gamma;z) +c_{21} \sqrt{z},\\
    \cM_{\rho_1}(F(\widehat \alpha,\widehat \beta;\widehat \gamma;z))=& 
    F(\widehat \alpha,\widehat \beta;\widehat \gamma;z)
    +\widehat {c}_{21}\frac{1}{\sqrt{z}}.
  \end{split}
\end{equation*}
Hence
\begin{align*}
  \cM_{\rho_1}(g(z)) = \lambda \cM_{\rho_1}( F(\alpha,\beta;\gamma;z)) + z
  \cM_{\rho_1}(F(\widehat \alpha,\widehat \beta;\widehat \gamma;z))\\= g(z) +\sqrt{z}\Delta_1,
\end{align*}
where
\begin{equation*}
  \Delta_1 :=  \lambda c_{21} + \widehat  c_{21}.
\end{equation*}
If $\Delta_1\neq0$ then continuation along loops $\rho_1^{N}$ give finitely many
  values of $g(z)$ as we have
\begin{equation*}
  \cM_{\rho}(g(z)) = g(z) +N\sqrt{z}\Delta_1, \quad\text{for}\quad N\in \Z.
\end{equation*}
Thus, if $g(z)$ is algebraic, then $\Delta_1=0$. But this one condition is not
sufficient. In fact, $\Delta_1=0$ for $\lambda= -\widehat c_{21}/ c_{21}$.
This is why we should consider the commutator $D$. Similar reasoning with
analytical continuation along loop $\rho_{\infty}$ gives $M_{\rho_\infty}(g(z)) = g(z) +\sqrt{z}\Delta_\infty$, where $
\Delta_{\infty}:= \lambda d_{21} + \widehat d_{21}$. 

Summarizing, if $g(z)$ is algebraic, then 
$\Delta_1=0$ and $\Delta_{\infty}=0$, that is, there exists $\lambda\in\C$ such that 
\begin{equation*}
      \lambda c_{21} + \widehat c_{21} =0 , \qquad 
      \lambda d_{21} + \widehat d_{21} = 0.
\end{equation*}
It is possible if and only if the following
determinant
\begin{equation}
  \label{eq:20}
  \Delta=  \det
    \begin{bmatrix}
      c_{21} & \widehat c_{21} \\ d_{21} & \widehat d_{21}
    \end{bmatrix}
\end{equation}
vanishes.
The explicit form of the non-trivial elements of the commutator matrices
  $C$, $\widehat C$, $D$, and $\widehat D$ are following 
\begin{equation*}
  \label{eq:c21d21}
    c_{21}= \rme^{-\rmi \pi l}d_{21}, \qquad
    \widehat c_{21}= -\frac{\sqrt{\pi } \left(1-\rme^{\rmi
          n  \pi }\right) \mathrm{\Gamma}
      \left(\frac{n}{2}\right)}{\mathrm{\Gamma} \left(\frac{1+n}{2}\right)}, 
\end{equation*}
\begin{equation*}
    d_{21}=\frac{2 \sqrt{\pi } \left(1-\rme^{ \rmi \pi l}\right)
      \mathrm{\Gamma} \left(-\frac{l}{2}\right)}{\mathrm{\Gamma} \left(-\frac{l+1}{2
          }\right)}, 
   \qquad {\widehat {d}_{21}} = \rme^{-\rmi \pi n}\, \widehat  c_{21}. 
\end{equation*}  
With these formulae, we obtain 
\begin{equation}
  \label{eq:det}
  \Delta = R \rme^{-\rmi \pi  (l+n)} \left(\rme^{\rmi \pi  l}-1\right) \left(\rme^{\rmi \pi  n}-1\right)
    \left(\rme^{\rmi \pi  (l+n)}-1\right),
\end{equation}
where $R$ is  given by     
\begin{equation}
  \label{eq:R}
  R = -\frac{2 \pi  \mathrm{\Gamma}
   \left(-\frac{l}{2}\right) \mathrm{\Gamma} \left(\frac{n}{2}\right)}{\mathrm{\Gamma}
   \left(-\frac{l+1}{2}\right) \mathrm{\Gamma}
   \left(\frac{n+1}{2}\right)}.
\end{equation}
%
By our assumptions, $R\neq 0$. Hence, if $g(z)$ is algebraic, then $\Delta = 0$.
Since neither $l$ nor $n$ is an even integer, from \eqref{eq:det} we deduce that
$\Delta = 0$ if and only if $n+l$ is an even integer. This completes the proof
of Lemma~\ref{lem:necsu} in the case $(n+l)(n+l+2)\neq 0$ with neither $n$ nor
$l$ being an even integer.

We have to investigate the cases when $l$ or $n$ is an even integer. At first,
let us assume that $l=2l'$ is an even integer. Moreover, we assume also that  $n$
is neither an even positive nor a negative and odd integer. This guarantees that
the function
$\widehat{u}_1(z)=F\left(\tfrac{1}{2},1-\tfrac{n}{2};\tfrac{3}{2};z\right)$ is
not algebraic, see Proposition~\ref{prop:Fn}. 

If $l=2l'<-1$, then the statement 1 of Lemma~\ref{lem:necsu} follows from
Proposition~\ref{prop:reduced}. Thus, we assume that $l=2l'\geq 0$ is an even
integer, so $(\alpha,\beta,\gamma)=(-\tfrac{1}{2},1+l',\tfrac{1}{2})$. For these parameters  the respective local monodromy matrix at $z=1$ is $\widetilde M_{\sigma_1} =\begin{bsmallmatrix}
    1 & 0 \\ 2\pi \rmi& 1
  \end{bsmallmatrix}$. Hence, for any non-zero integer $\nu$ we have ${\widetilde M_{\sigma_1}}^\nu =\begin{bsmallmatrix}
    1 & 0 \\ 2\pi \rmi \nu & 1
  \end{bsmallmatrix}$, see~\eqref{eq:15}.
As $ M_{\sigma_1} =P^{-1} \widetilde M_{\sigma_1} P$, where $P$ is given by~\eqref{eq:16}, we have 
\begin{equation}
  \label{eq:M_one_nu}
  M_{\sigma_1}^\nu = P^{-1} {\widetilde M_{\sigma_1}}^\nu P =
  \id + \Delta S, \quad \Delta = 2\pi \rmi \nu,
\end{equation}
and  matrix $S$ is given by
\begin{equation}
  \label{eq:S}
  S=[s_{ij}]= \begin{bmatrix}
    s  & p_{12}^{-1}\\
   -p_{12} s^2 & -s 
  \end{bmatrix}, \quad s= (2-\ln(4)),
\end{equation}
see \eqref{eq:16}.

As $n$ is not an even integer, the local monodromy of the hypergeometric equation with
parameters $(\widehat\alpha,\widehat\beta,\widehat\gamma)$, at singularity $z=1$
is $\widetilde M_{\sigma_1}=\diag(1,\rme^{\rmi\pi n})$ see~\eqref{eq:M_one}.
Because $n$ is a rational number, there exists a positive integer $\nu$ such that
$n \nu$ is an even integer. Then $ {\widetilde M_{\sigma_1}}^\nu=\id$, and so
the global monodromy matrix ${\widehat M_{\sigma_1}}$ satisfies ${\widehat
M_{\sigma_1}}^\nu=\id$.

Let $\rho := \sigma_1^\nu$ be the loop that winds $\nu$ times around the
singularity at $z=1$. The analytical continuations of
$u_1(z)=F(\alpha,\beta;\gamma;z)$ and
$\widehat{u}_1(z)=F(\widehat\alpha,\widehat\beta;\widehat\gamma;z)$ along loop
$\rho$ give
\begin{equation}
  \label{eq:monodromy_rho_l}
    \begin{split}
      \cM_{\rho}( u_1(z)) = &
      u_1(z) +\Delta \left[s_{11} u_1(z)
      +s_{21}\sqrt{z}\right],\\
      \cM_{\rho}(\widehat{u}_1(z))=& \widehat{u}_1(z).
    \end{split}
\end{equation}
Hence
\begin{align*}
  \cM_{\rho}(g(z)) = \lambda \cM_{\rho}( F(\alpha,\beta,\gamma;
  z)) + z
  \cM_{\rho}(F(\widehat \alpha,\widehat \beta,\widehat \gamma;
  z))\\= g(z) +\sqrt{z}\Delta_1,
\end{align*}
Note that we do not take $\lambda=0$ because $u_1(z)$ is not an algebraic function,
see remark after Lemma~\ref{lem:necsu2}. Continuation along loop $\rho^N$, where
$N$ is arbitrary integer,  gives 
\begin{equation*}
  \cM_{\rho^N}(g(z)) =g(z) +\lambda N\Delta \left[s_{11} u_1(z)+s_{21}\sqrt{z}\right].
\end{equation*}
Thus, continuations of the function $g(z)$ can give an arbitrary number of different values, so it is not algebraic.

Because $l$ is not an even integer,  the local monodromy matrix of the
hypergeometric equation with parameters $(\alpha,\beta,\gamma)$ at singularity
$z=1$ is $\widetilde M_{\sigma_1}=\diag(1,\rme^{-\rmi\pi l})$,
see~\eqref{eq:M_one}. As $l$ is a rational number, there exists a positive
integer $\nu$ such that $l \nu$ is an even integer. Then $ {\widetilde
M_{\sigma_1}}^\nu=\id$, and so the global monodromy matrix $ M_{\sigma_1}$
satisfies $M_{\sigma_1}^\nu=\id$.

For parameters $(\widehat{\alpha},\widehat{\beta},\widehat{\gamma})$, the
respective local monodromy matrix at $z=1$ is $\widetilde M_{\sigma_1}
=\begin{bsmallmatrix} 1 & 0 \\ 2\pi \rmi& 1 \end{bsmallmatrix}$. Hence,
${\widetilde M_{\sigma_1}}^\nu =\begin{bsmallmatrix} 1 & 0 \\ 2\pi \rmi \nu
& 1 \end{bsmallmatrix}$, see~\eqref{eq:15}. As $ M_{\sigma_1} =P^{-1}
\widetilde M_{\sigma_1} P$, where $P$ is given by~\eqref{eq:17}, we have 
\begin{equation}
  \label{eq:M_one_nu}
  M_{\sigma_1}^\nu = P^{-1} {\widetilde M_{\sigma_1}}^\nu P =
  \id + \Delta S, \quad \Delta = 2\pi \rmi \nu,
\end{equation}
and  matrix $S$ is given by
\begin{equation}
  \label{eq:S}
  S=[\widehat{s}_{ij}]= \begin{bmatrix}
    s  & p_{12}^{-1}\\
   -p_{12} s^2 & -s 
  \end{bmatrix}, \quad s= -2\ln(2),
\end{equation}
see \eqref{eq:17}. 
As in the previous case, 
we take a loop $\rho =\sigma_1^\nu$ encircling $\nu$ times singularity $z=1$.
The analytical continuations of $u_1(z)=F(\alpha,\beta;\gamma;z)$ and
  $\widehat{u}_1(z)=F(\widehat\alpha,\widehat\beta;\widehat\gamma;z)$ along loop $\rho$ give
\begin{equation*}
    \begin{split}
      \cM_{\rho}( u_1(z)) = & 
      u_1(z) ,\\
      \cM_{\rho}(\widehat{u}_1(z))=& \widehat{u}_1(z) +\Delta \left[\widehat{s}_{11} \widehat{u}_1(z)
      +\frac{\widehat{s}_{21}}{\sqrt{z}}\right].
    \end{split}
\end{equation*}
Hence
\begin{equation*}
  \cM_{\rho}(g(z)) = g(z) +\Delta \left[\widehat{s}_{11} u_1(z)+\frac{\widehat{s}_{21}}{\sqrt{z}}\right].
\end{equation*}
Hence, by the same arguments as in the previous case, the function $g(z)$  is not
algebraic. 

It is left to investigate the case when $l=2l'\geq 0$ and $n=-2n'\leq0$ are both
even integers. We consider continuation along loop $\rho=\sigma_1^\nu$, where $\nu$ is an arbitrary integer, Just using our
above reasoning, we get 
\begin{equation*}
    \begin{split}
      \cM_{\rho}( u_1(z)) = &
      u_1(z) +\Delta \left[s_{11} u_1(z)
      +s_{21}\sqrt{z}\right],\\
      \cM_{\rho}(\widehat{u}_1(z))=& \widehat{u}_1(z) +\Delta \left[\widehat{s}_{11} \widehat{u}_1(z)
      +\frac{\widehat{s}_{21}}{\sqrt{z}}\right].
    \end{split}
\end{equation*}
where $\Delta=2\pi \rmi \nu$. Hence
\begin{align*}
  \cM_{\rho}(g(z)) &= g(z) \\ & +  \Delta \left[\lambda\left( s_{11} u_1(z)+s_{21}\sqrt{z}\right)+\widehat{s}_{11} z\widehat{u}_1(z)+\frac{\widehat{s}_{21}z}{\sqrt{z}}\right].
\end{align*}
If $g(z)$ is algebraic, then the function in the square bracket has to vanish identically. Here, the difficulty  is connected with the fact that it depends on three parameters $l'$, $n'$, and~$\lambda$.

Let us consider the following function
\begin{equation*}
  h(z) = c_1 u_1(z) + c_2 \sqrt{z} + \widehat{c}_1z\widehat{u}_1(z) + \widehat{c}_2 \frac{z}{\sqrt{z}},
\end{equation*}
where $c_1,c_2,\widehat{c}_1,\widehat{c}_2$ are complex constants. We want to find constants that $h(z)$ vanishes identically. Expanding $h(z)$ into the Puiseux  series around $z=0$, we get
\begin{equation*}
  h(z) = c_1 + (c_2 + \widehat{c}_2) z^{\frac{1}{2}}  + \ldots,
\end{equation*}
In our case 
\begin{equation*}
  c_1 = \lambda s_{11} =-\lambda(2-\ln(4)) \dfrac{(-1)^{l'}\sqrt{\pi } }{l'! \mathrm{\Gamma} \left(-l'-\frac{1}{2}\right)}. 
  \end{equation*}
Because $\lambda\neq 0$, and $l'$ is an integer,  we have $c_1\neq 0$. Thus,
$h(z)$ does not vanish identically, and so $g(z)$ is not algebraic for arbitrary
$\lambda\neq 0$ and arbitrary non-negative $l'$ and $n'$. This finishes the
proof of Lemma~\ref{lem:necsu} for the case when $(n+l) (n+l+2)\neq 0$.
\subsection{The logarithmic case}
\label{ssec:logarithmic}
Now we consider the cases when $(n+l) (n+l+2)= 0$. If   the identity component
of the differential Galois group of the system~\eqref{eq:reduced} is  Abelian,
then  According to Lemma~\ref{lem:necsu}, we have to check cases if there exist
$\lambda\in\C$ such that function $\varphi(z)+\lambda\psi(z)$ is algebraic. In
consider cases function $\varphi(z)$ has the form  given by~\eqref{eq:phi1}, or
by~\eqref{eq:phi2}, and functions $\psi(z)$ is defined in~\eqref{eq:psi}, that is 
\begin{equation*}
  \psi(z)=
  -\frac{2 }{\sqrt{z}}F\left(-\frac{1}{2},1+\frac{l}{2 };\frac{1}{2};z\right) = -\frac{2 }{\sqrt{z}} u_1(z) .
\end{equation*}
As already mentioned, independently of the choice of $\varphi(z)$, we have to
check if there exist $\lambda\in\C$ such that function
\begin{equation}
  \label{eq:phi1+psi}
  g(z)=\ln(1-z)+\lambda \psi(z)    
\end{equation}
is algebraic. We have to consider only those values of $l$ which do not satisfy
condition  1 of Lemma~\ref{lem:necsu}.     

We consider the continuation of this function along loops around $z=1$. Obviously,
\[\cM_{\sigma_1}(\ln(1-z)) = \ln(1-z) + 2\pi \rmi,\] and \[\cM_{\sigma_1}(\psi(z))
= -\tfrac{2 }{\sqrt{z}}\cM_{\sigma_1}(u_1(z)).\] But $\cM_{\sigma_1}(u_1(z))$ we
already investigated. So, if $l$ is not an even integer, then the local
monodromy matrix at $z=1$ is diagonal
$\widetilde{M}_{\sigma_1}=\diag(1,\rme^{-\rmi\pi l})$, see~\eqref{eq:M_one}.
Because $l$ is a rational number, there exists a positive integer $\nu$ such that
$l \nu$ is an even integer. Then $ {\widetilde M_{\sigma_1}}^\nu=\id$. Thus, we
take a loop $\rho=\sigma_1^\nu$ and we obtain $\cM_{\rho}(\psi(z)) = \psi(z)$.
In result
\begin{equation*}
  \cM_{\rho}(g(z)) = g(z) + 2\pi \rmi \nu, \qquad \nu\in \Z.
\end{equation*}
We conclude that $g(z)$ is not algebraic. 
   
If $l=2l'\geq 0$ is an even integer, then 
\begin{equation*}
  \cM_{\rho}( u_1(z)) = 
      u_1(z) +\Delta \left[s_{11} u_1(z)
      +s_{21}\sqrt{z}\right],
\end{equation*}
where $\Delta=2\pi\mathrm{i}\nu$, see \eqref{eq:monodromy_rho_l}. 
Consequently
\begin{equation*}
  \cM_{\rho}(\psi(z)) = \psi(z) + \Delta\left[s_{11}\psi(z) -2 s_{21}\right],
\end{equation*}
and 
\begin{equation*}
  \cM_{\rho}(g(z)) = g(z) + \Delta\left[\lambda s_{11}\psi(z) +(1-2\lambda s_{21})\right].
\end{equation*}
If $g(z)$ is algebraic, then the function in the square bracket has to vanish identically. But 
\begin{equation*}
  \lambda s_{11}\psi(z) +(1-2\lambda s_{21}) = (1-2\lambda s_{21}) - \frac{2\lambda s_{11}}{\sqrt{z}}\left(1+\cdots\right).
\end{equation*}
Because $\lambda\neq0$, and $s_{11}$ does not vanish, we conclude that the
function in the square bracket cannot vanish identically. Therefore, $g(z)$ is
not algebraic. This finishes the proof of Lemma~\ref{lem:necsu}. 

\section{Proof of Lemma~\ref{lem:necsu_m}}
\label{sec:proof_lemma_m} 
In this section, we consider system~\eqref{eq:reduced} with $r(z)$ and $s(z)$ given by \eqref{eq:rr_ss_k=2_m}. Recall that for the considered case
the integrals $\psi(z)$, $\varphi(z)$, and $I(z)$ are given
by~\eqref{eq:psi_k=2_m} and~\eqref{eq:phi_k=2_m}, that is 
\begin{equation}
\label{eq:psi_phi_I_k=2_m}	
\begin{split}
  \psi(z)=&\int \frac{1}{x_1(z)^2}\rmd z = -\frac{2}{\sqrt{z}} \, 
F\left(-\frac{1}{2},1+\frac{1}{m};\frac{1}{2};z\right), \\
\varphi(z)=&-\frac{m \Omega  (1-z)^{2/m} (m+2 z)}{2 (m+2)},\\
I(z)=&\int \varphi(z)\psi(z)\rmd z
= a(z)+b\sqrt{z}F\left(\frac{1}{2},1 -\frac{1}{m};\frac{3}{2};z\right),
\end{split}
\end{equation}
where $a(z)$ is an algebraic function and $b$ is a non-zero constant (their
explicit forms are irrelevant for our further considerations); function $x_1(z)$
given by~\eqref{eq:X_sol_1_k=2_m} is an algebraic solution of
equation~\eqref{eq:reduced_a}. Because function $\varphi(z)$ is algebraic, we
will use criterion given in Lemma~\ref{lem:necsu2}. For the given forms of
integrals $\psi(z)$ and $I(z)$, it is clear that, after the rearrangement of
terms, as a function $g(z)$ in Lemma~\ref{lem:necsu2} we can take 
\begin{equation}
	\label{eq:is alg}
	g(z):= 
		\lambda F(\alpha,\beta;\gamma;z)+
  zF(\widehat\alpha,\widehat\beta;\widehat\gamma;z),
\end{equation} 
where  
\begin{equation}
	\label{eq:para}
	\begin{split}
		(\alpha,\beta,\gamma) = &\left(-\frac{1}{2},1+\frac{1}{m},\frac{1}{2}\right),\quad
		(\widehat\alpha,\widehat\beta,\widehat\gamma) = 
		\left(\frac{1}{2},1-\frac{1}{m},\frac{3}{2}\right).
	\end{split}
\end{equation}
From Lemma~\ref{lem:necsu0} we know that if integral $\psi(z)$ is algebraic,
then  the identity component of the differential Galois group of
system~\eqref{eq:reduced} is Abelian. Thus, we have to check if it is possible. 
\begin{proposition}
\label{lem:F_m}
If $\abs{m}>2$, then function $f(z)= F(\alpha,\beta;\gamma;z)$ with parameters $(\alpha,\beta,\gamma)$ given by~\eqref{eq:para} is not algebraic.
\end{proposition}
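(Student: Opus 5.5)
The plan is to reduce the algebraicity of $f(z)=F\bigl(-\tfrac12,1+\tfrac1m;\tfrac12;z\bigr)$ to that of an incomplete Beta integral, and then apply Chebyshev's classical theorem on integrals of binomial differentials. The first step uses the structure already exhibited in \eqref{eq:psi_phi_I_k=2_m}: the Gauss equation with parameters \eqref{eq:para} is reducible, because $\gamma-\alpha=1$. In a neighbourhood of $z=0$ it has the local basis $u_1=f(z)$ and $u_2=\sqrt z$; the latter solution comes from $\alpha-\gamma+1=0$.

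Next I would compute the Wronskian. It is $W(z)=z^{-\gamma}(1-z)^{\gamma-\alpha-\beta-1}=z^{-1/2}(1-z)^{-1-1/m}$, so by reduction of order
\begin{equation*}
f(z)=c\,\sqrt z\int z^{-3/2}(1-z)^{-1-\frac1m}\,\rmd z ,\qquad c\neq0 .
\end{equation*}
This is consistent with the formula $\psi(z)=-\tfrac{2}{\sqrt z}f(z)$ in \eqref{eq:psi_phi_I_k=2_m}. Since $\sqrt z$ is algebraic, $f$ is algebraic if and only if the primitive $J(z)=\int z^{p}(1-z)^{q}\,\rmd z$, with $p=-\tfrac32$ and $q=-1-\tfrac1m$, is algebraic.

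Now I would invoke Chebyshev's theorem. For rational $p,q$, the integral of the binomial differential $z^{p}(1-z)^{q}\,\rmd z$ is elementary exactly when one of $p$, $q$, $p+q$ is an integer. If none of these holds, $J$ is not elementary, and in particular it is not algebraic. Here $p=-\tfrac32\notin\Z$. The assumption $\abs{m}>2$ gives $0<\abs{1/m}<\tfrac12$, so $q=-1-\tfrac1m\notin\Z$, and $p+q=-\tfrac52-\tfrac1m$ is not an integer either, since that would require $\tfrac1m\in\tfrac12+\Z$. Hence $J$, and therefore $f$, is not algebraic.

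As a cross-check, or as an alternative that stays closer to the proof of Proposition~\ref{prop:reduced}, I would observe that $f$ is exactly the hypergeometric function appearing there with $l=2/m$. That argument, using the form \eqref{eq:x2a} from~\cite[Ch.4]{Iwasaki:91::}, shows that algebraicity forces $l\in\Z$, whereas $0<\abs{l}<1$ when $\abs{m}>2$. One could also argue by monodromy: at $z=1$ the exponent difference is $-1/m\notin\Z$, and one checks that the $\sigma_0,\sigma_1$ commutator acts on $f$ as $f\mapsto f+c_{21}\sqrt z$ with $c_{21}\neq0$, which gives infinitely many branches. The only step I regard as delicate is verifying the exact hypotheses of Chebyshev's theorem, or equivalently confirming that $c_{21}$ does not vanish. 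I would check the latter from the Gamma-function expression for $d_{21}$ given above, with $l=2/m$: the factors $1-\rme^{\rmi\pi l}$ and $\mathrm{\Gamma}(-l/2)/\mathrm{\Gamma}(-(l+1)/2)$ are nonzero for $0<\abs{l}<1$.
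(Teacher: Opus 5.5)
Your proof is correct. Its main line differs from the paper's.

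The paper stays inside Fuchsian theory. Because $\sqrt z$ solves the reducible Gauss equation, an algebraic solution $f$ independent of it would have to be of the form $z^{e_0}(z-1)^{e_1}p(z)$, with $p$ a polynomial of degree $-(e_0+e_1+e_\infty)$. Checking the admissible exponent combinations shows that for $\abs{m}>2$ only $e_0=\tfrac12$, $e_1=0$, $e_\infty=-\tfrac12$, $d=0$ survives. That choice returns a multiple of $\sqrt z$, which is a contradiction.

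You instead use the Wronskian to reduce the question to algebraicity of $\int z^{-3/2}(1-z)^{-1-1/m}\,\rmd z$. This is, up to constants, the function $\psi$ that Lemma~\ref{lem:necsu0} actually concerns. You then apply Chebyshev's theorem on binomial differentials. Your route buys two things:
\begin{itemize}
\item a stronger conclusion, namely that $f$ is not even elementary;
\item a transparent reading of the hypothesis $\abs{m}>2$ as the three conditions $p,q,p+q\notin\Z$.
\end{itemize}
The cost is importing a theorem whose proof, via Liouville's theorem on integration in finite terms, is much heavier than the paper's finite exponent count.

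If you only need non-algebraicity, the relevant special case is more elementary. A trace argument shows that an algebraic primitive of $z^p(1-z)^q$ must equal $z^{p+1}(1-z)^{q+1}$ times a rational function. Ruling that out is an exponent analysis much like the paper's, so at bottom the two arguments are close.

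Your secondary checks are also sound. Setting $l=2/m$ in Proposition~\ref{prop:reduced} gives $0<\abs{l}<1$, so $l\notin\Z$. For the same reason the non-degeneracy assumptions of Appendix~B hold, and the Gamma-function expression gives $c_{21}\neq0$.
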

\begin{proof}
The proof is similar to that of Proposition~\ref{prop:reduced}.  The Gauss
hypergeometric equation\eqref{eq:33} with parameters $(\alpha,\beta,\gamma)$ is
reducible and it has one algebraic solution $w_1(z)=\sqrt{z}$. If it second
solution $w_2(z)=f(z)$  is linearly independent from $w_1(z)$ and is algebraic
then it it has the form
\begin{equation}
    \label{eq:}
    f(z)= z^{e_0}(z-1)^{e_1}p(z)
  \end{equation}
where $p(z)$ is a polynomial of degree $d=-(e_0+e_1+e_\infty)$, $e_0$, $e_1$ and $e_\infty$ are exponents of the hypergeometric equation. 
Thus
  \begin{equation*}
    \label{eq:e_0_e_1_m}
    e_0 = \left\{0,\frac{1}{2}\right\}, \quad e_1 = \left\{0,-\frac{1}{m}\right\}, \quad
    e_\infty = \left\{-\frac{1}{2},1+\frac{1}{m}\right\}.
  \end{equation*} 
By assumption $\abs{m}>2$. Hence, we have only one possibility:
$e_0=\tfrac{1}{2}$, $e_1=0$, and $e_\infty=-\frac{1}{2}$, $d=0$.  But with this
choice we get $f(z)=c\sqrt{z}=cw_1(z)$. However, by assumption $f(z)$  and
$w_1(z)$ are linearly independent. This contradiction shows that $f(z)$ is not
algebraic. 
\end{proof}
Similarly we prove the following proposition.
\begin{proposition}
\label{lem:F_m_hat}
If $\abs{m}>2$, then function $\widehat{f}(z)= F(\widehat\alpha,\widehat\beta;\widehat\gamma;z)$ with parameters $(\widehat\alpha,\widehat\beta,\widehat\gamma)$ given by~\eqref{eq:para} is not algebraic.
\end{proposition}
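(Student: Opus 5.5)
We mirror the proof of Proposition~\ref{lem:F_m}. The function $\widehat f(z)=F\left(\tfrac12,1-\tfrac1m;\tfrac32;z\right)$ satisfies the Gauss hypergeometric equation with parameters $(\widehat\alpha,\widehat\beta,\widehat\gamma)$. This equation has local exponents
\begin{equation*}
e_0\in\left\{0,-\tfrac12\right\},\quad e_1\in\left\{0,\tfrac1m\right\},\quad e_\infty\in\left\{\tfrac12,1-\tfrac1m\right\},
\end{equation*}
since $1-\widehat\gamma=-\tfrac12$, $\widehat\gamma-\widehat\alpha-\widehat\beta=\tfrac1m$, and the exponents at infinity are $\widehat\alpha,\widehat\beta$. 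The equation is reducible because $\widehat\alpha-(\widehat\gamma-1)=0$, i.e.\ $\widehat\alpha=\widehat\gamma-1$. Its algebraic solution is $w_1(z)=z^{-1/2}$, which matches the local basis $\widehat u_2(z)=1/\sqrt z$ used earlier. Indeed, $z^{1-\widehat\gamma}F(\widehat\alpha-\widehat\gamma+1,\widehat\beta-\widehat\gamma+1;2-\widehat\gamma;z)=z^{-1/2}F(0,\cdot;\cdot;z)=z^{-1/2}$. The function $\widehat f$ is holomorphic at $z=0$ with value $1$, so it is linearly independent of $w_1$.

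Now suppose $\widehat f$ is algebraic. As in \cite[Ch.4]{Iwasaki:91::}, it must have the form $\widehat f(z)=z^{e_0}(z-1)^{e_1}p(z)$, where $p$ is a polynomial of degree $d=-(e_0+e_1+e_\infty)\in\Z_{\ge0}$ and $e_0,e_1,e_\infty$ are chosen from the lists above. For $|m|>2$ we have $0<|1/m|<\tfrac12$, so $1/m$ is not an integer or half-integer. Because $e_0+e_1+e_\infty$ must be an integer, we need $e_1=0$ and $e_\infty=\tfrac12$. The choice $e_\infty=1-\tfrac1m$ with $e_1=\tfrac1m$ is also possible arithmetically, since then $e_1+e_\infty=1$. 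That leaves the candidate sums $0+0+\tfrac12$, $-\tfrac12+0+\tfrac12=0$, $0+1=1$, and $-\tfrac12+1$. Requiring $d\ge0$ to be an integer leaves only $e_0=-\tfrac12$, $e_1=0$, $e_\infty=\tfrac12$, $d=0$. This gives $\widehat f=c\,z^{-1/2}=c\,w_1$, contradicting linear independence.

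The main obstacle is being careful with the mixed choice $e_1=\tfrac1m$, $e_\infty=1-\tfrac1m$: one must check that its total sum is always non-integral or yields a negative degree. The totals are $1$ or $\tfrac12$, giving $d=-1$ or a non-integer, so both are excluded. One should also confirm that the exponent lists are correct for this sign convention. With that done, the argument concludes exactly as in Proposition~\ref{lem:F_m}.
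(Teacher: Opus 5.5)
Your proposal is correct and is essentially the paper's argument; the paper only says the proof mirrors that of Proposition~\ref{lem:F_m}. Your exponent lists are right, and your case check correctly leaves only $e_0=-\tfrac12$, $e_1=0$, $e_\infty=\tfrac12$, $d=0$, i.e.\ $\widehat f=c\,z^{-1/2}$. You (like the paper) leave implicit why $\widehat f$ itself has the product form, and this is easily supplied: if $\widehat f$ is algebraic the monodromy group is finite, hence completely reducible; the local monodromy at $z=0$ has distinct eigenvalues $\pm1$ with eigenvectors $\widehat f$ and $z^{-1/2}$, so the second invariant line must be spanned by $\widehat f$, and since all exponent differences are non-integral for $\abs{m}>2$, that solution has the form $z^{e_0}(z-1)^{e_1}p(z)$.
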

With these two propositions, we are ready to prove Lemma~\ref{lem:necsu_m}.
We will follow reasoning presented in Section~\ref{ssec:algebraic}. 
Thus, we have to investigate two  Gauss hypergeometric equations with respective  parameters
$(\alpha,\beta,\gamma)$ and
$(\widehat\alpha,\widehat\beta,\widehat\gamma)$. The respective bases of
local solutions  in a neighbourhood of  singularity $z=0$ are 
\begin{align*}
  u_1(z) = & F(\alpha,\beta;\gamma;z),                               \qquad\qquad
  u_2(z)= \sqrt{z},  \\ \widehat{u}_1(z) =
           & F(\widehat\alpha,\widehat\beta;\widehat\gamma;z), \qquad\qquad
  \widehat{u}_2(z)=  \frac{1}{\sqrt{z}}. 
\end{align*}
Note, that we have the we have similar situation as in Section~\ref{ssec:algebraic}. The only difference is that other parameters of hypergeometric functions. 

Next, in the same way as in Section~\ref{ssec:algebraic}, we construct the
global monodromy matrices $M_{\sigma_1}$ and $\widehat M_{\sigma_1}$
corresponding to the loop $\sigma_1$ around singularity $z=1$. Then, we compute
the respective commutator matrices $C$ and $\widehat C$, and $D$ and $\widehat
D$. Finally, we compute the determinant $\Delta$ given by~\eqref{eq:20}. We obtain
\begin{equation*}
  \Delta = 4 \rmi\pi  (m+2) \sin ^2\left(\frac{2 \pi }{m}\right).
\end{equation*}
Because $\abs{m}>2$, we have $\Delta\neq 0$. Thus, function $g(z)$
given by~\eqref{eq:is alg} is not algebraic. Therefore, according to
Lemma~\ref{lem:necsu0}, the identity component of the differential Galois group of
 system~\eqref{eq:reduced} is not Abelian. This completes the proof of Lemma~\ref{lem:necsu_m}.
\section{Applications of the integrability obstructions \label{sec:app}}
This section presents the application of the obtained integrability obstructions to the Hamilton equations of motion~\eqref{eq:Ham-eq} governed by Hamiltonian~\eqref{eq:Hamek}, as formulated in Theorems~\ref{th:main_theorem}--\ref{th:main_theorem_k=2}. 
We demonstrate their effectiveness and simplicity of use by performing only basic algebraic computations to establish the non-integrability of the considered Hamiltonians and to identify parameter values for which integrability may still be possible.

In addition, to gain qualitative insight into the dynamics of the studied systems, we complement the analytical approach with a numerical analysis based on Poincar\'e cross-sections. 
This analysis illustrates how variations in the system parameters influence the overall dynamics and integrability, typically leading to the onset of chaotic behavior.

\subsection{The generalized Hill model}
As the first example, let us consider a generalized version of the planar circular Hill problem.  
In its classical form, the Hill problem arises as a limiting case of the restricted three-body problem, 
describing the motion of a massless body in the vicinity of a smaller primary under the gravitational influence of a massive one.  
The model was originally introduced by George W.~Hill~\cite{Hill_1905} in his study of the lunar motion within the Sun--Earth--Moon system, 
and it has since become a cornerstone in celestial mechanics for analyzing local dynamics in rotating frames.  

Over time, several modifications and extensions of the Hill model have been proposed.  
These include~relativistic~and~pseudo-Newtonian generalizations~\cite{Steklain_2006,Steklain_2009,Zotos_2019},
photo-gravitational versions that account for radiation pressure~\cite{Kanavos_2002, Markellos_2000}
and Hill-type approximations applied to stellar dynamics and galactic motion~\cite{Heggie_2001}.  
Extensive numerical investigations have also been performed, 
revealing complex families of periodic, escaping, and chaotic orbits that emerge in different parameter regimes~\cite{Deng_2021,Combot_2022}.

Following these developments, we consider here a generalized Hill system, which extends the classical model by including additional quadratic terms in the potential energy.  
The dynamics is completely governed by the Hamiltonian
\begin{equation}
\label{eq:Hill_gen}
 	H=\frac{1}{2}\left(p_1^2+p_2^2\right)
	+\omega(q_2p_1-q_1p_2)
	-\frac{\mu}{r}
	+ A\,q_1^2+B\,q_2^2,
\end{equation}
where $\mu, A, B \in \mathbb{R}^+$ and $r^2 = q_1^2 + q_2^2$.  
Here, the parameters $A$ and $B$ introduce anisotropy into the gravitational field, 
representing the effects of tidal and rotational deformations of the potential.  
The gyroscopic term $\omega(q_2p_1 - q_1p_2)$ accounts for the Coriolis and centrifugal forces in the rotating reference frame, 
while the central term $-\mu/r$ models the gravitational attraction of the dominant primary body.  

This generalized form of the Hill Hamiltonian bridges the gap between the classical lunar Hill problem 
and modern galactic or stellar-dynamical models that include rotation and anisotropic perturbations.  

In the recent paper~\cite{2023NonDy.111..275P}, the authors proved the non-integrability of the generalized Hill model for the parameter values $A=-1$ and $B=\tfrac{1}{2}$. 
They applied the differential Galois approach together with the Kovacic algorithm in dimension four, recently formulated in~\cite{Combot:18b::}, to analyse the structure of the differential Galois group of the variational equations, which in this case are of the fourth order. 
Through an advanced and highly technical computation, they proved that the generalized Hill system is not Liouville integrable in this configuration. 

The motivation of the present section is to show how the non-integrability of this model can be established in a straightforward and transparent way by applying the analytical criterion formulated in this paper. 
Unlike the earlier work, our approach avoids the complicated analysis of higher-order differential equations and relies solely on simple algebraic operations.

We state the following proposition.
\begin{proposition}\label{prop:hill}
For $\omega\,\mu\neq 0$ and $A\neq B$, the generalized Hill model governed by
 Hamiltonian~\eqref{eq:Hill_gen} is not integrable in the Liouville sense with meromorphic first integrals. 
\end{proposition}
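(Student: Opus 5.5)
The plan is to recognise the generalized Hill Hamiltonian~\eqref{eq:Hill_gen} as a special case of~\eqref{eq:Hamek} and then apply Theorem~\ref{th:main_theorem_k=2} directly. The perturbing potential $V(q_1,q_2)=A\,q_1^2+B\,q_2^2$ is homogeneous of degree two. So I set $V_k:=A\,q_1^2+B\,q_2^2$ with $k=2$, and take $V_m:\equiv 0$ with some auxiliary rational degree $m\notin\{-1,0,2\}$, for instance $m=3$. Then $V=V_k+V_m$ has exactly the form~\eqref{eq:VV}, and the gyroscopic and Kepler terms coincide with those of~\eqref{eq:Hamek}, with $\mu\omega\neq0$ by assumption.

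Next I compute the integrability coefficients~\eqref{eq:coeffs}. We have $\lambda_k=V_k(1,\rmi)=A+B\,\rmi^2=A-B$, which is nonzero precisely because $A\neq B$. Trivially $\lambda_m=V_m(1,\rmi)=0$. Hence we are exactly in the situation $k=2$, $\lambda_k\neq0$, $\lambda_m=0$, $m\in\Q$ of Theorem~\ref{th:main_theorem_k=2}. That theorem then says the system admits no meromorphic first integral functionally independent of $H$, so it is not Liouville integrable with meromorphic first integrals.

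The step I expect to need care is the legitimacy of the artificial splitting with $V_m\equiv0$, since Theorem~\ref{th:main_theorem_k=2} was phrased for a sum of two homogeneous components. I will check this against the derivation rather than the statement. In the case $\lambda_m=0$, the restricted equation~\eqref{eq:vhs0} and the coefficients $a(\tau)$, $b(\tau)$ of~\eqref{eq:var2nd} only ever involve $\lambda_k x_1^{2k}$ and $\omega$. Consequently the change of variable $z=\sqrt{1-2\sqrt{h/\lambda_k}\,x_1^{-2}}$ at $e=\tfrac{8}{3}\sqrt{h^3/\lambda_k}$ (with any $h\neq0$) produces precisely the coefficients~\eqref{eq:rr_ss_k=2}, independently of $m$.

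I must also confirm that the constant $\Omega$ in $s(z)$ is nonzero. This holds because $b(\tau)=-16\omega x_1\dot x_1$ and $\omega\neq0$. Without it the coupling would vanish and the Galois group would be Abelian. With these checks, the non-algebraicity argument for $g(z)$ (the $\operatorname{arctanh}$ singularities at $z=\pm1,\pm\sqrt3$) applies verbatim, and the chain through Lemma~\ref{lem:CCM}, the Levi--Civita transformation and Theorem~\ref{th:MR} transfers non-integrability back to~\eqref{eq:Hill_gen}.
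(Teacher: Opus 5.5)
Your proposal is correct and follows the paper's proof exactly. You write the potential as $V_k=A\,q_1^2+B\,q_2^2$ with $k=2$ and $V_m\equiv0$, compute $\lambda_k=A-B\neq0$ and $\lambda_m=0$, and invoke Theorem~\ref{th:main_theorem_k=2}. Your extra checks are consistent with the derivation in Section~\ref{sec:particular_solutions} and only make explicit what the paper leaves implicit: with $\lambda_m=0$ the variational equations do not depend on $m$, and $\Omega\neq0$ because $\omega\neq0$.
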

\begin{proof}
We can rewrite the potential in Hamiltonian~\eqref{eq:Hill_gen} in the general form~\eqref{eq:VV} by identifying a single homogeneous component of degree $k=2$ and treating the second component as vanishing, i.e.
\[
V_k = A\, q_1^2 + B\, q_2^2,\qquad V_m=0.
\]
According to the definitions~\eqref{eq:n,l}--\eqref{eq:coeffs}, the corresponding parameters and integrability coefficients are
\[
k=2,\quad 
\lambda_k = V_k(1,\rmi) = A-B,\quad 
\lambda_m = V_m(1,\rmi) = 0.
\]
Thus, whenever $A\neq B$, the system satisfies $\lambda_k\neq 0$ and $\lambda_m=0$.  
Under these conditions, the assumptions of Theorem~\ref{th:main_theorem_k=2} are fulfilled.  
Consequently, the Hamiltonian~\eqref{eq:Hill_gen} admits no meromorphic first integral functionally independent of $H$, 
and the system is therefore not Liouville integrable for any anisotropic configuration $A\neq B$. \end{proof}
It is worth noting that the parameter values investigated in earlier studies (for instance, $A=-1$ and $B=\tfrac{1}{2}$) fall precisely within this anisotropic regime.  
Hence, the non-integrability of the Hill problem established here using our analytical integrability criterion is fully consistent with the results previously obtained through the analysis of fourth-order variational equations.  
In contrast to that highly technical approach, our proof relies solely on simple algebraic computations and a direct verification of the necessary conditions formulated in Theorem~\ref{th:main_theorem_k=2}, providing a much more straightforward and transparent demonstration of the system’s non-integrability.

The only integrable configuration corresponds to the radial case, when $\lambda_k=\lambda_m=0$, which implies $A=B$.  
This represents a degenerate isotropic situation in which the Hamiltonian possesses full rotational invariance.  
In this case, the system admits an additional first integral associated with the conservation of angular momentum, and therefore becomes Liouville integrable.

\subsection{Anisotropic polynomial potential}
 	\begin{figure*}[htp]
		\centering
		\subfigure[$A=B=1$ ]{
		\includegraphics[width=0.4\linewidth]{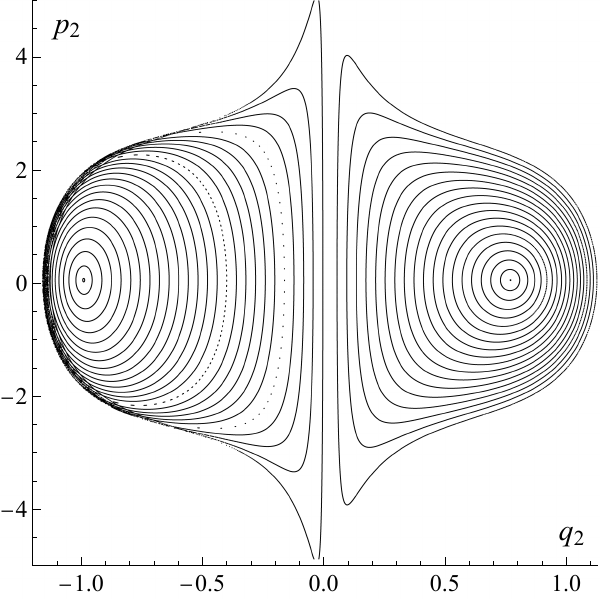}}\hspace{1cm}
		\subfigure[$A=1,B=\tfrac{9}{10}$ ]{
			\includegraphics[width=0.4\linewidth]{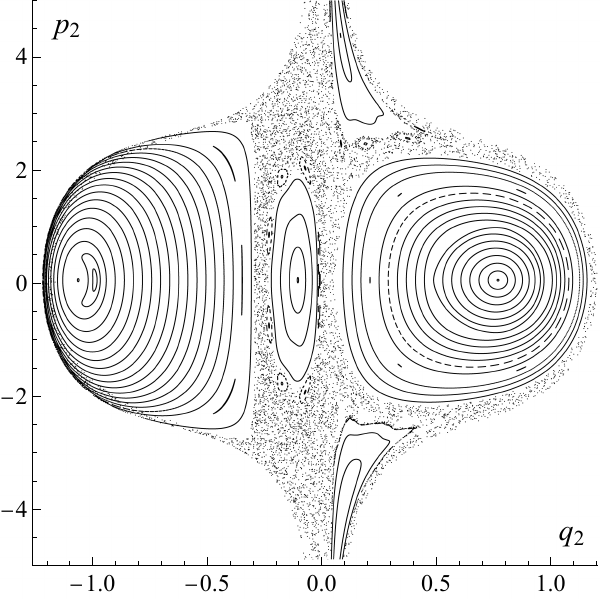}}
			\caption{The Poincar\'e sections of  system~\eqref{eq:Hamek} with  potential~\eqref{eq:v_radial} computed for $\omega= \mu=\tfrac{1}{10}$ and $m=4$ at constant energy level $E=2$. The cross-section plane was specified as $q_1=0$, and the direction was chosen by $p_1>0$.\label{fig:radial0}}
			 	\end{figure*}
					 	\begin{figure*}[htp]
		\centering
		\subfigure[$A=B=1$ ]{
		\includegraphics[width=0.4\linewidth]{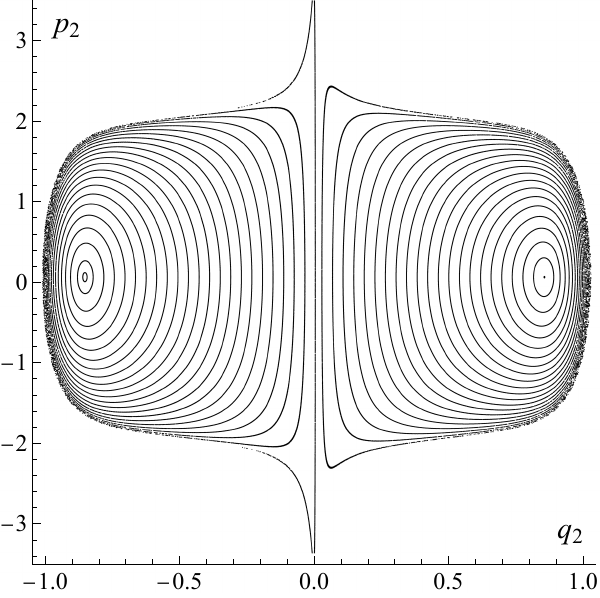}}\hspace{1cm}
		\subfigure[$A=1,B=\tfrac{9}{10}$ ]{
			\includegraphics[width=0.4\linewidth]{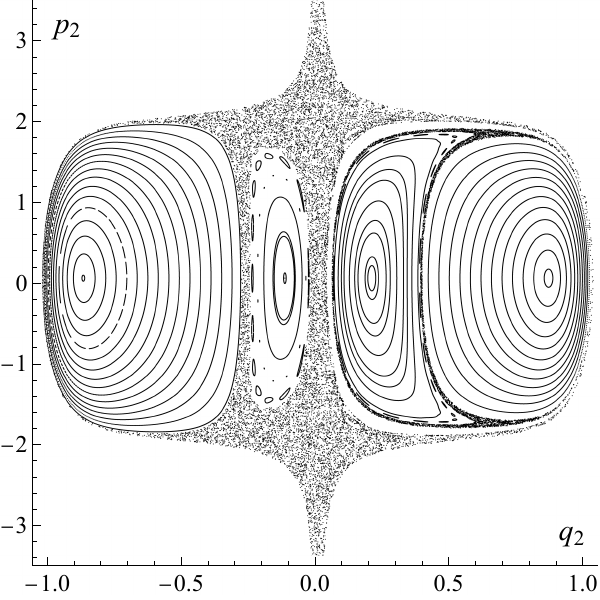}}
			\caption{The Poincar\'e sections of  system~\eqref{eq:Ham-eq} with   potential~\eqref{eq:v_radial} computed for $\omega=\mu=\tfrac{1}{10}$ and $m=6$ at constant energy level $E=2$. The cross-section plane was specified as $q_1=0$, and the direction was chosen by $p_1>0$.\label{fig:radial}}
	\end{figure*}
Let us now consider the Hamiltonian~\eqref{eq:Hamek} with the potential
\begin{equation}
\label{eq:v_radial}
V(q_1,q_2)=\frac{1}{2}\!\left(A\, q_1^2+B\, q_2^2\right)+(q_1^2+q_2^2)^{m/2},
\end{equation}
where $A,B$ are real parameters and $m \in \Q$. 

This potential consists of two physically distinct contributions.
The quadratic part, $\tfrac{1}{2}(A\, q_1^2 + B\, q_2^2)$, describes a two-dimensional anisotropic harmonic oscillator, while the nonlinear radial term, $(q_1^2+q_2^2)^{m/2}$, introduces an isotropic  coupling depending only on the distance from the origin. 

To gain insight into the dynamics of the considered model, we constructed a pair of Poincar\'e cross-sections for two representative values of the exponent, $m=4$ and $m=6$, shown in Figs.~\ref{fig:radial0}--\ref{fig:radial}.
They clearly illustrate the qualitative transition from integrable to non-integrable behavior as the system parameters are varied. 
For $A = B = 1$, the motion is entirely regular: all trajectories lie on smooth, closed invariant curves corresponding to quasi-periodic motion on invariant tori. 
Families of quasi-periodic trajectories enclose two periodic orbits, and no signatures of chaos are observed. 
The system remains perfectly symmetric with respect to the potential wells. 

When the symmetry is slightly broken, for example for $A = 1$ and $B =\tfrac{9}{10}$, the integrability is lost. 
Although some invariant tori persist, forming regular islands around stable periodic orbits, large regions of the phase space become chaotic. 
Trajectories in these areas exhibit irregular scattering, characteristic of deterministic chaos. 
The resulting mixed phase-space structure, where regular and chaotic domains coexist, follows the Kolmogorov–Arnold–Moser (KAM) scenario, demonstrating the gradual destruction of invariant tori under small perturbations.

As shown in Figs.~\ref{fig:radial0}--\ref{fig:radial}, the system is generally non-integrable when $A \neq B$. 
To verify whether this loss of integrability persists for all values of the exponent $m$, we now apply our analytical integrability obstructions. We can prove the following proposition.

\begin{proposition}
\label{th:nonint_quadratic_plus_radial}
Consider the rotating Hamiltonian system~\eqref{eq:Hamek} with the potential~\eqref{eq:v_radial},
so that $k=2$, with real parameters $A,B$ and rational $m\in\mathbb{Q}$.  
Assume that $\mu\,\omega\neq 0$.  
Then, for $A\neq B$, the system is \emph{not} Liouville integrable with meromorphic first integrals for any rational exponent~$m\in \Q$.
\end{proposition}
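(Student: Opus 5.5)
The plan is to reduce the statement to Theorem~\ref{th:main_theorem_k=2}, exactly as in the proof of Proposition~\ref{prop:hill}. I would write the potential~\eqref{eq:v_radial} as $V=V_k+V_m$ with
\[
V_k=\tfrac12\left(A q_1^2+B q_2^2\right),\qquad V_m=(q_1^2+q_2^2)^{m/2},
\]
so that $k=2$ and $V_m$ is homogeneous of degree $m$. The integrability coefficients~\eqref{eq:coeffs} are then
\[
\lambda_k=V_k(1,\rmi)=\tfrac12(A-B),\qquad \lambda_m=V_m(1,\rmi)=(1+\rmi^2)^{m/2}=0 .
\]
The key observation is that the radial term is invisible on the invariant plane $\scM_1$. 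Along $\scM_1$ we have $q_1^2+q_2^2=x_1^4+\rmi^2x_1^4=0$, so $\widetilde V(x_1,0)$ reduces to $\lambda_k x_1^{4}$ alone.

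For $m>0$ with $m\neq 2$, the hypotheses of Theorem~\ref{th:main_theorem_k=2} hold verbatim: $\mu\omega\neq0$, $k=2$, $\lambda_k\neq0$ because $A\neq B$, and $\lambda_m=0$. That theorem then gives the absence of any additional meromorphic first integral.

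The degenerate value $m=2$ needs a separate remark. Here $V_m$ has the same degree as $V_k$, so I would merge the two into one quadratic potential $\tfrac12\bigl((A+2)q_1^2+(B+2)q_2^2\bigr)$. This is the generalized Hill potential, and its coefficient at $(1,\rmi)$ is still $\tfrac12(A-B)\neq0$. So either Proposition~\ref{prop:hill} or Theorem~\ref{th:main_theorem_k=2} with $V_m\equiv0$ applies. The value $m=0$ only adds a constant and is covered the same way.

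The main obstacle is $m<0$, where $V_m$ is singular on the isotropic lines $q_1=\pm\rmi q_2$. Along the chosen particular solution one has $u_1^2+u_2^2=0$, so the invariant-plane construction is not immediately legitimate. I would first check this directly in the Hamiltonian~\eqref{eq:Ku}. There the radial term contributes
\[
4(u_1^2+u_2^2)(q_1^2+q_2^2)^{m/2}=4(u_1^2+u_2^2)^{m+1}\propto (x_1x_2)^{m+1}.
\]
I would then ask two questions: whether $\scM_1$ is still invariant, and whether the variational equations~\eqref{eq:var2nd} are unchanged.

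For $m+1>1$ the derivatives of this term vanish on $x_2=0$ to the required order, so everything proceeds as before. In that case it would also be worth checking whether the term contributes to $a(\tau)$ or $b(\tau)$ at all.

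For negative $m$ the term is singular on $x_2=0$. If it genuinely spoils the construction, I would fall back on the second invariant plane $\scM_2$. Failing that, I would argue that these $m$ lie outside the setting of Theorem~\ref{th:main_theorem_k=2} and would need to be excluded or handled by a different particular solution. I expect this step to require the most care; the remaining cases are a direct substitution into the earlier theorem.
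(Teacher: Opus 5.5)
\textbf{Agreement with the paper for $m>0$.} For $m>0$, $m\neq 2$, you do exactly what the paper does. Its entire proof computes $n=0$, $l=8/(m-2)$, $\lambda_k=\tfrac12(A-B)$, $\lambda_m=0$ and quotes Theorem~\ref{th:main_theorem_k=2}, uniformly in $m$. Your separate treatment of $m=2$ is correct: merging gives a Hill potential whose value at $(1,\rmi)$ is still $\tfrac12(A-B)$, so Proposition~\ref{prop:hill} applies. The same holds for $m=0$. These are values that the paper's standing assumptions ($k\neq m$, degrees $\neq 0$) quietly exclude.

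\textbf{The gap: $m<0$.} Your proposal leaves $m<0$ open, and the fallback you suggest cannot close it.
\begin{itemize}
\item Your formula $\lambda_m=(1+\rmi^2)^{m/2}=0$ holds only for $m>0$. For $m<0$ one gets $0^{m/2}=\infty$, so the hypothesis $\lambda_m=0$ of Theorem~\ref{th:main_theorem_k=2} fails outright. As you noticed, the term proportional to $(x_1x_2)^{m+1}$ in \eqref{eq:Hs} also makes the vector field singular on $\scM_1$.
\item Passing to $\scM_2$ does not help. There $q_1=-x_2^2$ and $q_2=\rmi x_2^2$, so again $q_1^2+q_2^2=0$, and the offending term is symmetric under $x_1\leftrightarrow x_2$.
\item A proof for $m<0$ would need a particular solution off the isotropic cone, which neither you nor the paper provides. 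The paper's one-line argument simply asserts $\lambda_m=0$ for every $m\in\Q$ and has the same hole.
\item Some exclusion is in fact unavoidable. For $m=-1$, $\mu=1$, the term $r^{-1}$ cancels the Kepler term and the Hamiltonian becomes quadratic. This linear system is integrable; generically it splits into two decoupled normal modes. The paper's standing assumption excludes degree $-1$, but the statement as written says every $m\in\Q$.
\end{itemize}

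\textbf{A second problem for $0<m\le 1$ and non-integer $m$.} Your check ``for $m+1>1$ everything proceeds as before'' uses the wrong threshold. The variational equations involve second derivatives, and $\partial_{x_2}^2(x_1x_2)^{m+1}\propto x_1^{m+1}x_2^{m-1}$.
\begin{itemize}
\item For $0<m<1$ this blows up at $x_2=0$.
\item For $m=1$ it yields a nonzero term proportional to $x_1^2X_2$ in the $Y_2$-equation, which alters $b(\tau)$ in \eqref{eq:var2nd}.
\item For non-integer $m$ the Hamiltonian is not even analytic along $\scM_1$, so Theorem~\ref{th:MR} does not apply as it stands.
\end{itemize}
Citing Theorem~\ref{th:main_theorem_k=2} is therefore fully justified only for integer $m\geq 3$, plus $m\in\{0,2\}$ via your Hill reduction. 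The paper glosses over this point as well.
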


\begin{proof}
By identifying the potential~\eqref{eq:v_radial} with the general form~\eqref{eq:VV}, we have
\[
k=2,\quad 
V_k=\frac{1}{2}(A\,q_1^2+B\,q_2^2),\quad 
V_m=(q_1^2+q_2^2)^{m/2} .
\]
Using definitions~\eqref{eq:n,l} and~\eqref{eq:coeffs}, we obtain \[
n=0,\quad 
l=\frac{8}{m-2},\quad 
\lambda_k=\frac{A-B}{2},\quad 
\lambda_m=0.
\]

Hence, for $A\neq B$ we have $\lambda_k\neq 0$ and $\lambda_m=0$.  
The assumptions of Theorem~\ref{th:main_theorem_k=2} are therefore satisfied with $k=2$.  
According to this theorem, the system admits no meromorphic first integral functionally independent of the Hamiltonian.  
Consequently, the Hamiltonian~\eqref{eq:Hamek} with the potential~\eqref{eq:v_radial} is not Liouville integrable for any rational exponent~$m\in \Q$. 
\end{proof}

The only integrable case corresponds to the radial case, when $\lambda_k=\lambda_m=0$, which implies $A=B$.  
This represents a degenerate isotropic situation in which the Hamiltonian possesses rotational invariance.  
We can state the following straightforward observation.

\begin{proposition}\label{prop:calka}
For $\omega,\mu\in\C$ and $k,m\in\Q$, the Hamiltonian system governed by
\begin{equation}
\label{eq:radial}
\begin{split}
H&=\frac{1}{2}\left(p_1^2+p_2^2\right)
+\omega\,(q_2p_1-q_1p_2)
-\frac{\mu}{\sqrt{q_1^2+q_2^2}}\\
&\quad
+\frac{1}{2}\!\left(q_1^2+q_2^2\right)
+(q_1^2+q_2^2)^m,
\end{split}
\end{equation}
is Liouville integrable with the additional first integral
\begin{equation}
\label{eq:ii}
I = q_2 p_1 - q_1 p_2.
\end{equation}
\end{proposition}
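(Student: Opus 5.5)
Since the Hamiltonian~\eqref{eq:radial} is rotationally symmetric, the proof reduces to checking that $I$ Poisson-commutes with $H$ and that the two functions are independent. The plan is to split $H=\tfrac12(p_1^2+p_2^2)+\omega I+W(r^2)$ with
\[
W(\rho)=-\mu\rho^{-1/2}+\tfrac12\rho+\rho^{m},
\]
and to note that $\{I,I\}=0$ trivially, so only the kinetic part and the radial potential need to be handled.

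For the kinetic part, compute directly: $\{I,p_1\}=\partial I/\partial q_1=-p_2$ and $\{I,p_2\}=\partial I/\partial q_2=p_1$. Hence
\[
\{I,\tfrac12(p_1^2+p_2^2)\}=p_1\{I,p_1\}+p_2\{I,p_2\}=-p_1p_2+p_2p_1=0.
\]

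For the potential part, for any function $G(r^2)$ we have
\[
\{I,G\}=-\frac{\partial I}{\partial p_1}\frac{\partial G}{\partial q_1}-\frac{\partial I}{\partial p_2}\frac{\partial G}{\partial q_2}=-2G'\,(q_2q_1-q_1q_2)=0.
\]
Equivalently, $I$ generates rotations in the $(q_1,q_2)$-plane, which leave $r$ invariant. Combining the two parts gives $\{H,I\}=0$.

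Independence is the last step. The differentials $dH$ and $dI$ are linearly independent on an open dense set: $dI$ involves $dp_1$ only through the coefficient $q_2$, whereas $dH$ has $dp_1$-coefficient $p_1+\omega q_2$, so the two are generically not proportional. For instance, at a point with $q_2=0$, $q_1\neq0$, $p_1\neq0$, the differential $dI$ has no $dp_1$ component while $dH$ does. This gives Liouville integrability in the sense used in the paper, with $I$ a meromorphic (polynomial) function of $(q_1,q_2,p_1,p_2,r)$.

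The only point needing care is the branch of $r$ and of $(q_1^2+q_2^2)^m$ for rational $m$. Since $H$ depends on $q$ only through $r$ and $\omega I$, the Poisson bracket computation is valid locally on any branch, so no real obstacle is expected here.
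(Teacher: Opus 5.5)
Your proposal is correct and follows the same route as the paper, which simply appeals to the rotational invariance of~\eqref{eq:radial} and the resulting conservation of the angular momentum $I$. Your explicit computation of $\{H,I\}=0$ and your independence check at a generic point supply exactly the routine verification that the paper leaves implicit.
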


From a geometric perspective, this result confirms that any deviation from axial symmetry immediately breaks angular momentum conservation and leads to the onset of non-integrable dynamics, in full agreement with the numerical evidence from the Poincar\'e sections.

\subsection{The  H\'enon-Heiles galactic potential}
		 	\begin{figure*}[htp]
		\centering
		\subfigure[$a=0,\, b=\tfrac{1}{2},\, A=1,\, B=1,\, E=\tfrac{3}{5}$  ]{
		\includegraphics[width=0.4\linewidth]{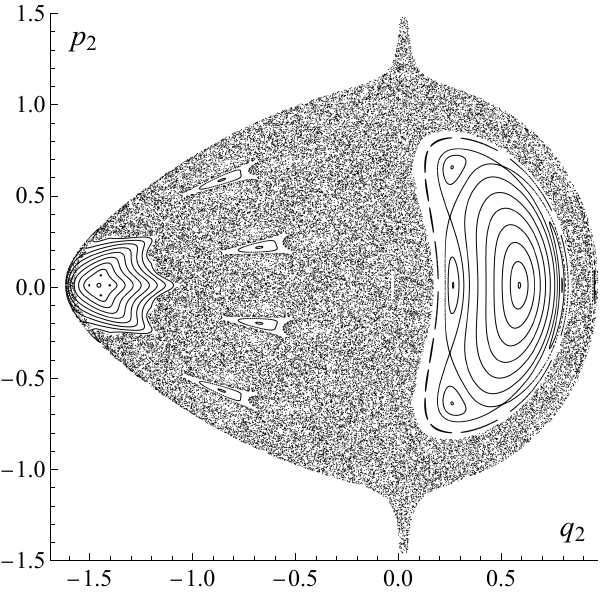}}\hspace{1cm}
		\subfigure[$a=\tfrac{1}{2},\, b=\tfrac{1}{2},\, A=1,\, B=1, \,E=\tfrac{3}{100}$  ]{
			\includegraphics[width=0.4\linewidth]{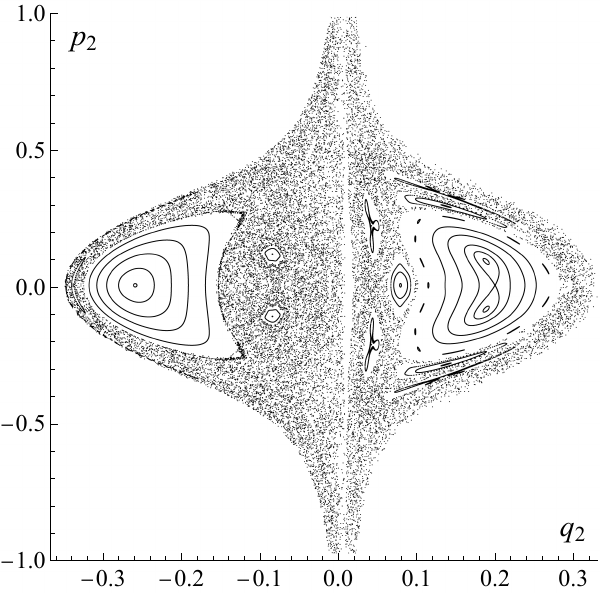}}\\ \vspace{10pt}
					\subfigure[$a=\tfrac{1}{12},\, b=\tfrac{1}{2},\, A=1,\, B=1,\, E=\tfrac{1}{2}$  ]{
			\includegraphics[width=0.4\linewidth]{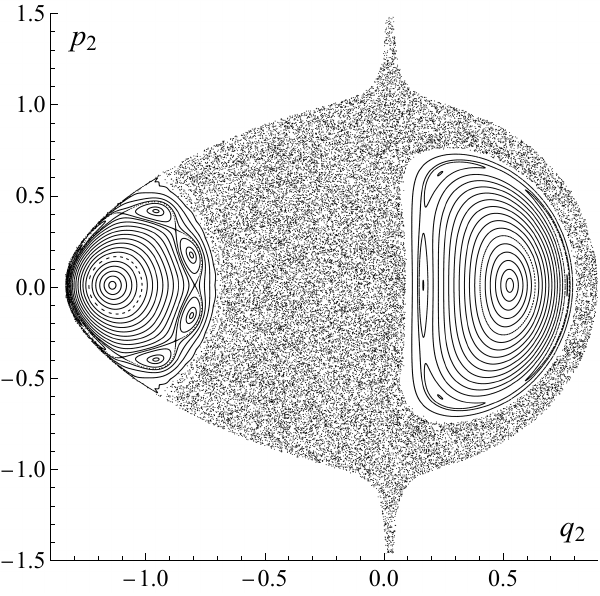}}\hspace{1cm}
			\subfigure[$a=1,\, b=16,\, A=1,\, B=16, \,E=1$  ]{
			\includegraphics[width=0.4\linewidth]{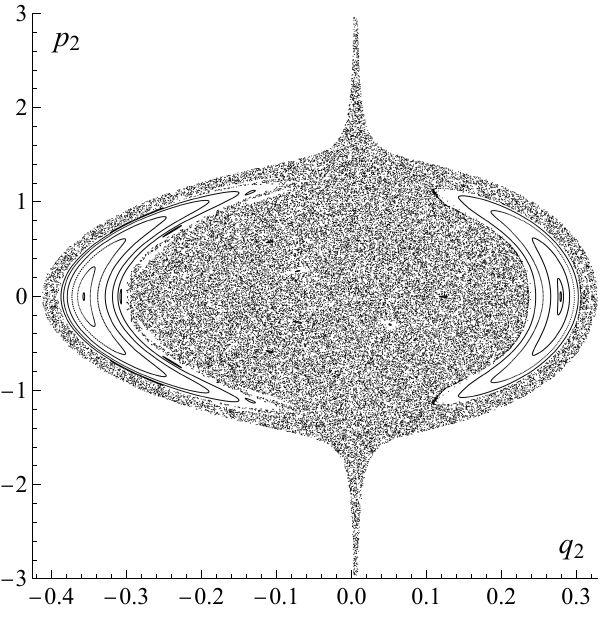}}
			\caption{The Poincar\'e sections of system~\eqref{eq:Ham-eq} with the H\'enon--Heiles potential~\eqref{eq:henon} were computed for 
$\omega=\tfrac{1}{10}$ and $\mu=\tfrac{1}{100}$, with varying parameters $A,B,a,b$ at constant energy levels~$E$.  
The cross-section plane was defined by $q_1=0$, and the direction was chosen according to $p_1>0$.  
The parameter values correspond to the integrable cases of the classical H\'enon--Heiles model given by~\eqref{eq:henon_par}, 
that is, for $\omega=\mu=0$.  
As can be observed, nonzero values of $\omega$ and~$\mu$ destroy the system’s integrability.  
The resulting figures indicate non-integrability through the emergence of chaotic behavior.	\label{fig:henon}}
	\end{figure*}
	
The classical H\'{e}non--Heiles potential constitutes one of the simplest and most celebrated examples of nonlinear Hamiltonian systems that can exhibit both regular and chaotic behaviour, depending on the choice of parameters and the total energy of the system.  
In its general form,  is defined by the following non-homogeneous potential 
\begin{equation}
\label{eq:henon}
V(q_1,q_2)=\frac{1}{2}\left(
	A\, q_1^2+ B\, q_2^2\right)+a\, q_1^2q_2 +\frac{b}{3}q_2^3,
\end{equation}
where $A,B,a,b$ are real parameters.  

Originally introduced in the context of stellar motion in an axisymmetric galactic potential~\cite{Henon:64::}, the H\'{e}non--Heiles model has since become a paradigmatic system in the study of deterministic chaos. It has found numerous applications in different areas of physics, ranging from celestial mechanics and nonlinear oscillations to statistical and quantum mechanics~\cite{Ford:73::}. More recently, it has even been employed as a benchmark model in the development and testing of Hamiltonian neural networks~\cite{Mattheakis:22::}, illustrating its enduring relevance across both theoretical and computational physics.  

The integrability of the H\'{e}non--Heiles Hamiltonian has been extensively studied. There exist precisely four known integrable cases~\cite{Chang:82::,Grammaticos:83::}, corresponding to the parameter ratios
\begin{equation}
\begin{split}
\label{eq:henon_par}
&\text{(1)}\quad a=0, \quad &&A,B,b\in \R,\\ 
&\text{(2)}\quad a=b,\quad &&A=B,\\ 
&\text{(3)}\quad b=6a,\quad  &&A,B\ \in \R,\\ 
&\text{(4)}\quad b=16a, \quad  &&B=16A.
\end{split}
\end{equation} 
for which the system admits an additional independent first integral. In the first three cases, the Hamiltonian becomes separable after an appropriate canonical transformation, which immediately yields an additional first integral. In contrast, in the fourth case the corresponding first integral is a quartic polynomial in the momenta, making this situation substantially more intricate from the dynamical and algebraic viewpoint. 

It was rigorously shown by Ito~\cite{Ito:85::}, and later confirmed and refined by Morales--Ramis theory~\cite{Li:11::}, that these are the only parameter values for which the classical H\'{e}non--Heiles system remains integrable.  
For all other choices of $a$ and $b$, the dynamics exhibit a rich mixture of regular and chaotic trajectories, providing one of the earliest and most iconic examples of the transition to chaos in Hamiltonian systems.  


Recently, increasing attention has been devoted to the study of generalised H\'enon--Heiles potentials, in which a rotational (gyroscopic) term is incorporated into the Hamiltonian~\eqref{eq:Hamek}. Such extended systems constitute a natural framework for analysing the influence of rotational effects on the integrability and global dynamics of non-homogeneous potentials.  
While the classical, non-rotating H\'enon--Heiles models have been extensively investigated over the past decades, the dynamical behaviour and integrability properties of their rotating analogues remain an active and demanding topic of research. Only partial results are currently available. For instance, in~\cite{Lanchares_2021}, the author examined the integrability of the Hamiltonian~\eqref{eq:Hamek} in the special case of a vanishing Kepler term ($\mu = 0$). In contrast, the studies~\cite{Zotos_2020_HHsingular,Navarro_2021_EPJPlus} focused on the combined H\'enon--Heiles and Kepler potentials, yet without accounting for the rotational contribution.  

To illustrate the dynamical consequences of addition of the gyroscopic and Keplerian terms, we analysed the system~\eqref{eq:Ham-eq} with the non-homogeneous H\'{e}non--Heiles potential~\eqref{eq:henon}.  
Figure~\ref{fig:henon} presents several Poincar\'e cross-sections computed for the parameter values~\eqref{eq:henon_par} corresponding to the integrable cases of the classical, non-rotating model.  
As can be clearly seen, the mentioned additional terms largely affect the system's dynamics --- the Poincar\'e sections reveal widespread chaotic regions interspersed with small islands of regular motion.  
Most of the accessible domains of the Poincar\'e planes are densely filled with scattered points, which correspond to non-integrable trajectories.  

Let us now study the integrability of  the Hamiltonian~\eqref{eq:Hamek} with the H\'enon--Heiles potential~\eqref{eq:henon}. The potential $V$ is a sum of two components
\begin{equation}
\label{eq:VkVM_henon}
V_k=\frac12\!\left(A\,q_1^2+B\,q_2^2\right),\qquad 
V_m=a\,q_1^2q_2+\frac{b}{3}\,q_2^3,
\end{equation}
for which $k=2$ and $m=3$.  
Using definitions~\eqref{eq:n,l} and~\eqref{eq:coeffs}, we obtain the corresponding parameters:
\[
n=0,\quad 
l=8,\quad 
\lambda_k=\frac{A-B}{2},\quad 
\lambda_m=\Bigl(a-\frac{b}{3}\Bigr)\rmi.
\]
Now, we state the following proposition.

\begin{proposition}\label{th:int_henon_necessary}
For the rotating Hamiltonian~\eqref{eq:Hamek} with the cubic (H\'enon--Heiles--type) potential~\eqref{eq:henon}, 
assume $\mu\,\omega\neq 0$. If the system is Liouville integrable with meromorphic first integrals, then  
\begin{equation}
\label{eq:warunki_henon}
\lambda_k=\lambda_m=0,
\qquad\text{i.e.}\qquad
A=B\ \ \text{and}\ \ b=3a.
\end{equation}
\end{proposition}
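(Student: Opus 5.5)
Assuming integrability, the plan is to show that the three cases with at least one of $\lambda_k=\frac{A-B}{2}$, $\lambda_m=(a-\frac{b}{3})\rmi$ nonzero each lead to a contradiction, using Theorems~\ref{th:main_theorem}--\ref{th:main_theorem_k=2}. Here $k=2$, $m=3$, $n=0$ and $l=8$, so this is a case analysis on which coefficient vanishes.

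\emph{Case $\lambda_k\lambda_m\neq0$.} This case is covered by Theorem~\ref{th:main_theorem}, so I check its conditions at $n=0$, $l=8$. Condition~1 fails, since $l=8\ge -1$ is even rather than odd. In condition~2 we have $(n+l)(n+l+2)=8\cdot 10\neq0$. Sub-condition~(a) fails because $n=0$ is neither positive even nor negative odd. Sub-condition~(b) requires $n+l=8$ to be even, which it is, but this is exactly the excluded situation $l\ge0$, $n\le0$ both even. Hence no condition holds, and Theorem~\ref{th:main_theorem} rules out integrability.

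\emph{Case $\lambda_k=0$, $\lambda_m\neq0$.} Here $k=2\in\Q$ and $|m|=3>2$, so Theorem~\ref{th:main_theorem_k=2_m} applies directly and gives non-integrability.

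\emph{Case $\lambda_k\neq0$, $\lambda_m=0$.} Here $k=2$ and $m=3\in\Q$, so Theorem~\ref{th:main_theorem_k=2} gives that there is no additional meromorphic first integral. Note that this theorem does not actually depend on $V_m$ beyond $\lambda_m=0$, since the restriction to $\scM_1$ only sees $\lambda_m x_1^{2m}$.

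The three cases together force $\lambda_k=\lambda_m=0$, that is $A=B$ and $b=3a$. The only delicate step is the case $\lambda_k\lambda_m\neq0$: one has to read the exceptional clause of condition~2(b) correctly, taking $n=0$ as a nonpositive even integer. The other two cases are direct applications of the theorems. Finally, I would recompute $\lambda_m=V_m(1,\rmi)=a\rmi+\frac{b}{3}\rmi^3=(a-\frac b3)\rmi$ to confirm the condition $b=3a$.
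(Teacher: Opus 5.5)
Your proposal is correct and follows the paper's proof essentially step for step. The paper uses the same three-way split on $(\lambda_k,\lambda_m)$ with $n=0$, $l=8$: every alternative of Theorem~\ref{th:main_theorem} fails, including the excluded subcase of condition~2(b), and the two degenerate cases are settled by Theorem~\ref{th:main_theorem_k=2_m} (since $|m|=3>2$) and Theorem~\ref{th:main_theorem_k=2} (since $k=2$).
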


\begin{proof}
We distinguish three distinct regimes for the pair $(\lambda_k,\lambda_m)$.

\emph{Case (i):} $\lambda_k\,\lambda_m\neq 0$.  
Then Theorem~\ref{th:main_theorem} applies.  
For $n=0$ and $l=8$, Item~\ref{it:necs1} requires either $l\ge -1$ odd or $l<-1$ even; here $l=8$ is even and $l\ge -1$, so this fails.  
For Item~\ref{it:necs2}, $(n+l)(n+l+2)=8\cdot10\neq 0$, but $n=0$ is neither a positive even nor a negative odd integer; moreover, $n+l=8$ is even and falls into the excluded subcase $\{\,l\ge 0,\ n\le 0\ \text{both even}\,\}$.  
Hence integrability is impossible in this case.

\emph{Case (ii):} $\lambda_k=0$ and $\lambda_m\neq 0$.  
With $k=2$ and $m=3\in\mathbb{Q}$ (thus $|m|>2$), Theorem~\ref{th:main_theorem_k=2_m} yields non-integrability.

\emph{Case (iii):} $\lambda_k\neq 0$ and $\lambda_m=0$.  
With $k=2$ and $m\in\mathbb{Q}$, Theorem~\ref{th:main_theorem_k=2} implies there is no meromorphic first integral functionally independent of the Hamiltonian.

Since in each of the Cases (i)–(iii) integrability is excluded, the only remaining possibility is
$\lambda_k=\lambda_m=0$, i.e., $A=B$ and $b=3a$.  
This proves the stated necessary condition. 
\end{proof}

The above proposition establishes the necessary conditions under which the rotating Hamiltonian~\eqref{eq:Hamek} with the H\'enon--Heiles potential~\eqref{eq:henon} could be Liouville integrable.  
In particular, integrability is possible only in the degenerate configuration $\lambda_k=\lambda_m=0$, corresponding to $A=B$ and $b=3a$.  
All other parameter combinations violate these necessary conditions and therefore lead to non-integrability.  
Consequently, the parameter values~\eqref{eq:henon_par} that correspond to the integrable cases of the classical H\'enon--Heiles model no longer yield integrability once the gyroscopic term $(\omega\neq 0)$ and the Kepler potential term $(\mu\neq 0)$ are included in the Hamiltonian.

The remaining degenerate configuration $A=B$ and $b=3a$ formally satisfies the necessary conditions for integrability but does not guarantee it.  
As illustrated in Fig.~\ref{fig:henon2}, the Poincar\'e section computed for this parameter set at a fixed energy level $E=-\tfrac{1}{100}$ reveals a mixed phase--space structure, where regular invariant curves coexist with scattered points corresponding to chaotic trajectories.  
The progressive destruction of invariant tori and the emergence of stochastic layers near the separatrix clearly indicate the breakdown of integrability and the onset of chaotic dynamics.  
A rigorous proof of non-integrability in this special degenerate case would require an analysis of the higher-order variational equations.
		 	\begin{figure}[t]
		\includegraphics[width=0.8\linewidth]{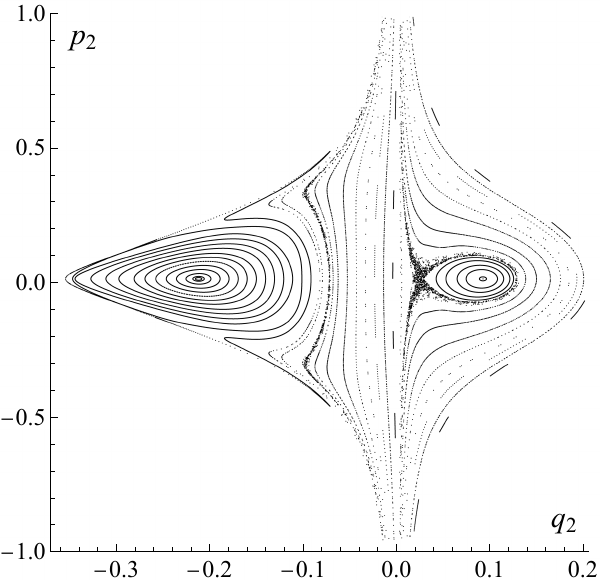}
\caption{The Poincar\'e sections of system~\eqref{eq:Ham-eq} with the H\'enon--Heiles potential~\eqref{eq:henon} were computed for 
$\omega=\tfrac{1}{10}$ and $\mu=\tfrac{1}{100}$, with $A=B=1$ and $a=1,\,b=3$, at the constant energy level~$E=-\tfrac{1}{100}$.  
The cross-section plane is defined by $q_1=0$, with the direction specified by $p_1>0$.  
Although the necessary conditions for integrability are satisfied, the figure clearly demonstrates the system’s non-integrability through the onset of chaotic motion in the vicinity of the separatrix.
\label{fig:henon2}}
	\end{figure}
	
\subsection{The generalized quartic galactic potential}
As a last but not least important example,
let us now examine the integrability of a system governed by a Hamiltonian~\eqref{eq:Hamek} with a generalized quartic galactic potential of the form
\begin{equation}
\label{eq:quartic}
  V(q_{1},q_{2})=\frac{1}{2}\!\left(A\,q_1^2+B\,q_2^2\right)
  +a\,q_1^4+b\,q_1^2 q_2^2+c\,q_2^4,
\end{equation}
where $A,B,a,b,c\in\R$ are real parameters.  

This general form encompasses a broad class of astrophysical and dynamical systems of interest, 
providing a unified theoretical framework for analysing a variety of models. 
Several important special cases can be recovered within this formulation. 
For $a = b$ and $c = a + b$, the potential~\eqref{eq:quartic} reduces to the classical Armbruster--Guckenheimer--Kim (AGK) galactic potential~\cite{Armbruster_1989,Llibre_2019}, which describes the motion of stars in a triaxial galactic field. 
For $A = 1$ and $B = 1/k$, where $k \in (0,1]$, the potential~\eqref{eq:quartic} represents a galactic model describing flattened or elliptical galaxies~\cite{Lacomba_2012}. 
Finally, when $A = B = 1$ and $c = 0$, the potential~\eqref{eq:quartic} corresponds to a generalized Yang--Mills--typepotential~\cite{Kasperczuk_1994,JimenezLara_2011_JMP}, 
which naturally appears in certain field-theoretic and nonlinear oscillator contexts.

From a physical point of view, such quartic potentials describe systems with both harmonic and anharmonic restoring forces,  
where the quadratic terms model small oscillations near equilibrium, and the quartic terms represent nonlinear corrections responsible for coupling and energy exchange between modes.  
In galactic dynamics, they are used to approximate the motion of stars in non-axisymmetric gravitational fields,  
where deviations from a purely harmonic potential account for the observed chaotic structures in stellar orbits.  
Hence, the study of integrability for the Hamiltonian~\eqref{eq:Hamek}  with potential~\eqref{eq:quartic} provides valuable insight into the transition between regular and chaotic motion in realistic galactic models.

We shall rigorously demonstrate that the rotating Hamiltonian~\eqref{eq:Hamek} with the quartic galactic potential is non-integrable for all complex parameter values $A,B,a,b,c\in\mathbb{C}$ satisfying $\mu\,\omega\neq 0$.  

Potential~\eqref{eq:quartic} is the sum of two polynomial terms
\begin{equation}
\label{eq:quartic_VkVM}
V_k=\frac12\!\left(A\,q_1^2+B\,q_2^2\right),\quad 
V_m=a\,q_1^4+b\, q_1^2 q_2^2+c\,q_2^4,
\end{equation}
for which $k=2$ and $m=4$.  
Using definitions~\eqref{eq:n,l} and~\eqref{eq:coeffs}, we obtain the corresponding parameters:
\[
n=0,\quad 
l=4,\quad 
\lambda_k=\frac{A-B}{2},\quad 
\lambda_m=a-b+c.
\]
Now, we state the necessary condition for possible integrability.

\begin{proposition} 
\label{th:int_quartic_necessary}
For the Hamiltonian~\eqref{eq:Hamek} with the quartic potential~\eqref{eq:quartic_VkVM}, 
assume $\mu\,\omega\neq 0$. 
If the system is Liouville integrable with meromorphic first integrals, then necessarily
\begin{equation}
\label{eq:warunki_quartic}
\lambda_k=\lambda_m=0,
\quad\text{i.e.}\quad
A=B\ \ \text{and}\ \ a-b+c=0.
\end{equation}
\end{proposition}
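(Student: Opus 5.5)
The proof will mirror that of Proposition~\ref{th:int_henon_necessary}. With $k=2$ and $m=4$ the definitions~\eqref{eq:n,l} give $n=0$ and $l=(2+6)/(4-2)=4$. The integrability coefficients follow from~\eqref{eq:coeffs}: evaluating at $(q_1,q_2)=(1,\rmi)$ gives $\lambda_k=\tfrac12(A-B)$ and $\lambda_m=a-b+c$, since $q_1^2q_2^2=-1$ and $q_2^4=1$ there. I will then argue by contraposition: assume $(\lambda_k,\lambda_m)\neq(0,0)$ and show that every remaining configuration is excluded by one of Theorems~\ref{th:main_theorem}--\ref{th:main_theorem_k=2}.

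\emph{Case (i): $\lambda_k\lambda_m\neq0$.} I will apply Theorem~\ref{th:main_theorem} and check that each of its alternatives fails for $(n,l)=(0,4)$.
\begin{enumerate}
  \item Item~\ref{it:necs1} fails, because $l=4\geq -1$ is even rather than odd.
  \item In Item~\ref{it:necs2} the requirement $(n+l)(n+l+2)=4\cdot 6\neq0$ holds, so the subitems must be checked.
  \begin{enumerate}
    \item Subitem~(a) fails, because $n=0$ is neither a positive even nor a negative odd integer.
    \item Subitem~\ref{it:necs3} fails: $n+l=4$ is even, but $l=4\geq0$ and $n=0\leq0$ are both even, which is exactly the excluded configuration.
  \end{enumerate}
\end{enumerate}
Hence the system is not integrable in this case.

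\emph{Case (ii): $\lambda_k=0$, $\lambda_m\neq0$.} Here $k=2\in\Q$ and $\abs{m}=4>2$. Theorem~\ref{th:main_theorem_k=2_m} therefore gives non-integrability directly.

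\emph{Case (iii): $\lambda_k\neq0$, $\lambda_m=0$.} Here $k=2$ and $m=4\in\Q$. Theorem~\ref{th:main_theorem_k=2} shows that no meromorphic first integral independent of $H$ exists, so the system is not integrable.

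Since cases (i)--(iii) cover every configuration other than $\lambda_k=\lambda_m=0$, integrability forces $A=B$ and $a-b+c=0$, which is~\eqref{eq:warunki_quartic}.

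There is no real obstacle in this argument. The only points needing care are the correct evaluation of $\lambda_m$, namely the sign of the $q_1^2q_2^2$ term at $(1,\rmi)$, and recognising that $(n,l)=(0,4)$ falls precisely into the exceptional subcase excluded in Item~\ref{it:necs3}.
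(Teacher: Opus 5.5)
Your proposal is correct and matches the paper's proof essentially step for step. You compute $(n,l)=(0,4)$, $\lambda_k=\tfrac12(A-B)$ and $\lambda_m=a-b+c$, and then exclude the three configurations $\lambda_k\lambda_m\neq0$, ($\lambda_k=0,\lambda_m\neq0$) and ($\lambda_k\neq0,\lambda_m=0$) by Theorems~\ref{th:main_theorem}, \ref{th:main_theorem_k=2_m} and \ref{th:main_theorem_k=2} respectively, with the same item-by-item check in case (i).
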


\begin{proof}
We analyse all possible combinations of the coefficients $(\lambda_k,\lambda_m)$.

\emph{Case (i):} $\lambda_k\,\lambda_m\neq 0$.  
Then the assumptions of Theorem~\ref{th:main_theorem} apply.  
For $n=0$ and $l=4$, Item~\ref{it:necs1} requires either $l\ge -1$ odd or $l<-1$ even; 
here $l=4$ is even and $l\ge -1$, so this condition fails.  
For Item~\ref{it:necs2}, $(n+l)(n+l+2)=4\cdot6\neq 0$, 
but $n=0$ is neither a positive even nor a negative odd integer; 
moreover, $n+l=4$ is even and falls into the excluded subcase $\{\,l\ge 0,\ n\le 0\ \text{both even}\,\}$.  
Therefore, the necessary conditions of Theorem~\ref{th:main_theorem} are not satisfied, and the system cannot be integrable in this case.

\emph{Case (ii):} $\lambda_k=0$ and $\lambda_m\neq 0$.  
Here $A=B$, while $a-b+c\neq 0$. 
With $k=2$ and $m=4\in\mathbb{Q}$ (thus $|m|>2$), Theorem~\ref{th:main_theorem_k=2_m} 
directly implies non-integrability.

\emph{Case (iii):} $\lambda_k\neq 0$ and $\lambda_m=0$.  
Here $A\neq B$ and $a-b+c=0$. 
With $k=2$ and $m\in\mathbb{Q}$, Theorem~\ref{th:main_theorem_k=2} ensures that no meromorphic  first integral functionally independent of the Hamiltonian exists.

Since integrability is excluded in all the above configurations, the only remaining possibility for which the system might satisfy the necessary conditions for Liouville integrability is formulated in~\eqref{eq:warunki_quartic}.

\end{proof}
For the remaining values of the parameters~\eqref{eq:warunki_quartic} for which the system formally satisfies the necessary integrability conditions, the dynamics still appears to be non-integrable, as illustrated in Fig.~\ref{fig:gal} showing representative Poincar\'e sections.  
It seems that the only genuinely integrable case occurs when, in addition to condition~\eqref{eq:warunki_quartic}, the relation $a=c$ is also fulfilled.  
In this configuration, the system becomes truly integrable and belongs to the integrable family described earlier in Proposition~\ref{prop:calka}.

		 	\begin{figure*}[t]
			\subfigure[$A=1, B=1, a=1, b=2, c=1$]{ 
		\includegraphics[width=0.32\linewidth]{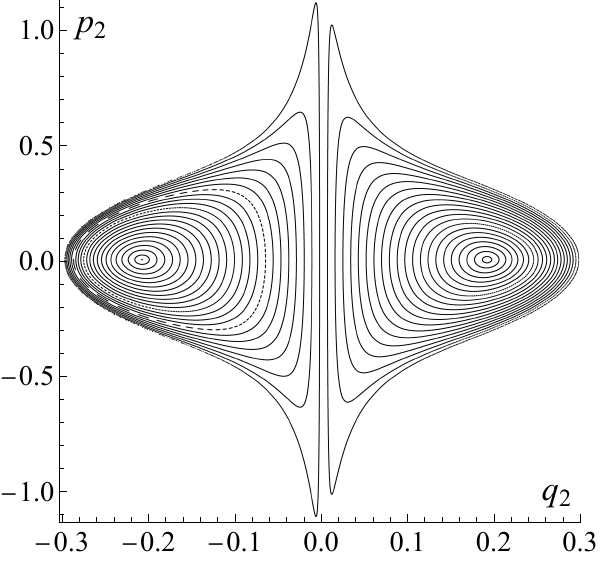}}
			\subfigure[$A=1, B=1, a=1, b=6, c=5$]{ 
			\includegraphics[width=0.32\linewidth]{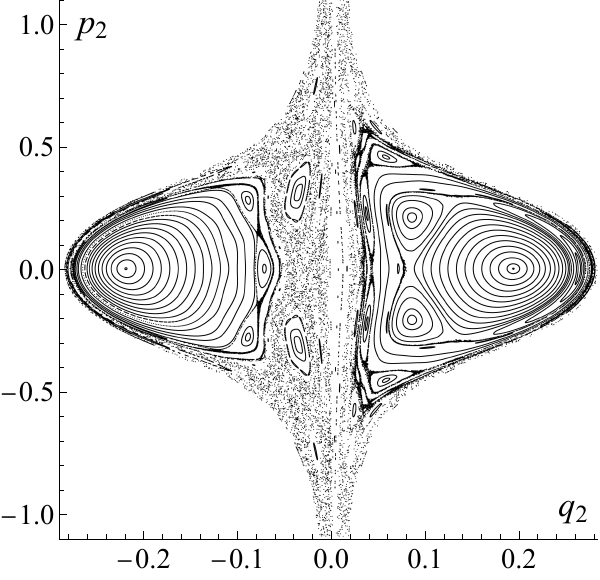}}
						\subfigure[$A=1, B=1, a=1, b=11, c=10$]{ 
				\includegraphics[width=0.32\linewidth]{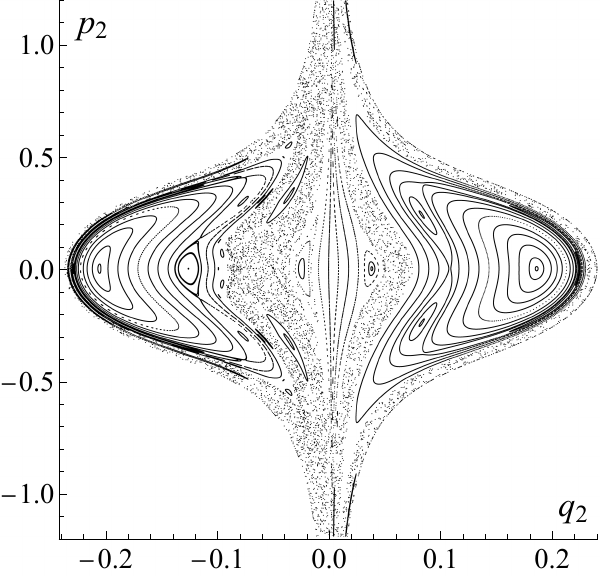}}
\caption{The Poincar\'e sections of system~\eqref{eq:Ham-eq} with the quartic (galactic) potential~\eqref{eq:quartic} were computed for 
$\omega=\tfrac{1}{10}$ and $\mu=\tfrac{1}{100}$ 
at the constant energy level~$E=\tfrac{1}{50}$. 
The remaining parameters $A,B$ and $a,b,c$ were chosen so as to satisfy the necessary integrability conditions formulated in~\eqref{eq:warunki_quartic}.  
The cross-section plane was defined by $q_1=0$, with the direction specified by $p_1>0$.  
Although the necessary integrability conditions are formally satisfied, the figure clearly shows that the system is integrable only for $a=c$, 
while for $a\neq c$ the Poincar\'e sections exhibit the onset of chaos, indicating the loss of integrability.}
\label{fig:gal}
	\end{figure*}

\section{Case $\mu=0$: the exceptional potentials \label{sec:exc}}
Consider the Hamiltonian~\eqref{eq:Hamek} in the absence of the Keplerian term, 
i.e., for $\mu=0$, so that it reduces to~$H_0$.  
In this case, the non-integrability of the regularised Hamiltonian~\eqref{eq:Ku} 
does not necessarily imply the non-integrability of the original Hamiltonian~\eqref{eq:Hamek}.  
Consequently, the integrability obstructions formulated in 
Theorems~\ref{th:main_theorem}--\ref{th:main_theorem_k=2} 
are not applicable here.  

Moreover, in the absence of an appropriate particular solution,
a complete integrability analysis of the Hamiltonian~$H_0$ within the differential Galois framework could not be carried out when the Keplerian term is removed.
Nevertheless, we have obtained an interesting and somewhat unexpected result:
in this setting, the rotating Hamiltonian~$H_0$ with a non-homogeneous exceptional potential turns out to be  super-integrable.

Specifically, we study the rotating Hamiltonian~$H_0$ with the 
non-homogeneous exceptional potential
\begin{equation}
\label{eq:exceptional}
V(q_1,q_2)=\left(q_2-\rmi q_1\right)^k+\left(q_2-\rmi q_1\right)^m,
\end{equation}
see~\cite{Maciejewski_2005_JMP,2023NonDy.111..275P} for the analysis of the homogeneous case.

Introducing the canonical variables 
$(x_1, x_2, y_1, y_2)$ defined by
\begin{align*}
x_1 &= q_2 - \rmi q_1, 
&\quad x_2 &= q_2 + \rmi q_1, \\[0.5em]
y_1 &= \frac{p_2 + \rmi p_1}{2}, 
&\quad y_2 &= \frac{p_2 - \rmi p_1}{2},
\end{align*}
we transform $H_0$ to the simpler form
\begin{equation}
\label{eq:ham_ex}
H_0(\vx,\vy)=2y_1y_2+\widetilde{\omega}\left(x_2y_2-x_1y_1\right)+x_1^k+x_1^m,
\end{equation}
where $\widetilde{\omega}=\rmi\omega$.  
Now we prove the following proposition. 

\begin{proposition}
\label{prop:superint}
Assume that $\widetilde{\omega}\in\C$ and $k,m\in\Q\setminus\{-1\}$.  
Then the Hamiltonian system governed by~\eqref{eq:ham_ex} is super-integrable.  
The corresponding first integrals take the following explicit forms:

\medskip
\noindent
Case (i):  $\widetilde{\omega}=0$.
The system admits two rational first integrals,
\begin{equation}
\begin{split}
\label{eq:II1}
F_1 &= y_2, \\
F_2 &= (x_1 y_1 - x_2 y_2)y_2 
      + \frac{k}{2(k+1)}x_1^{k+1} + \frac{m}{2(m+1)}x_1^{m+1}.
\end{split}
\end{equation}

\medskip
\noindent
Case (ii):  $\widetilde{\omega}\neq 0$.  
The system admits two (generally non-algebraic) first integrals,
\begin{equation}
\label{eq:II2}
\begin{split}
I_1&=2y_2\exp\left(\frac{\widetilde{\omega}\, x_1}{2y_2}\right),\\
I_2&=\widetilde{\omega}\left(x_1 y_1-x_2y_2\right)-\left(\frac{I_1}{k\,\widetilde{\omega}}\right)^k\mathrm{\Gamma}\left(k+1,\frac{k\,\widetilde{\omega}\, x_1}{2y_2}\right)\\
&
-\left(\frac{I_1}{m\,\widetilde{\omega}}\right)^m\mathrm{\Gamma}\left(m+1,\frac{m\,\widetilde{\omega}\, x_1}{2y_2}\right),
\end{split}
\end{equation}
where $\mathrm{\Gamma}(s,u)$ is the incomplete gamma function.\end{proposition}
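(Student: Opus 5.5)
The plan is to verify directly that each proposed function is a first integral of the Hamilton equations generated by~\eqref{eq:ham_ex}. I will then show that, together with $H_0$, they give three functionally independent integrals. For two degrees of freedom this is super-integrability, and Liouville integrability comes for free, since any first integral Poisson-commutes with $H_0$. The equations of motion are
\begin{equation*}
\dot x_1 = 2y_2-\widetilde\omega x_1,\quad \dot x_2 = 2y_1+\widetilde\omega x_2,\quad \dot y_1=\widetilde\omega y_1 - k x_1^{k-1}-m x_1^{m-1},\quad \dot y_2=-\widetilde\omega y_2 .
\end{equation*}
The key intermediate identity, valid for any $\widetilde\omega$, is
\begin{equation*}
\frac{\rmd}{\rmd t}\bigl(x_1y_1-x_2y_2\bigr) = -\bigl(k x_1^k+m x_1^m\bigr).
\end{equation*}
It follows from the equations above because the $y_1y_2$ terms and the $\widetilde\omega$ terms cancel.

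\emph{Case (i).} Here $\dot y_2=0$, so $F_1=y_2$ is an integral. Multiplying the identity by the constant $y_2$ gives $-y_2(kx_1^k+mx_1^m)$. This is cancelled by the derivative of $\tfrac{k}{2(k+1)}x_1^{k+1}+\tfrac{m}{2(m+1)}x_1^{m+1}$, which equals $(k x_1^k+m x_1^m)y_2$ because $\dot x_1=2y_2$. Hence $F_2$ is an integral.

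\emph{Case (ii).} I will introduce $u=\widetilde\omega x_1/(2y_2)$. A one-line computation gives $\dot u=\widetilde\omega$. Then $\dot I_1 = 2\rme^{u}(\dot y_2 + y_2\dot u)=0$.

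For $I_2$, the identity shows that the first term has derivative $-\widetilde\omega(kx_1^k+mx_1^m)$. For the gamma terms I use $\partial_u\Gamma(s,u)=-u^{s-1}\rme^{-u}$ together with the constancy of $I_1$. This gives
\begin{equation*}
\frac{\rmd}{\rmd t}\Bigl[\Bigl(\tfrac{I_1}{k\widetilde\omega}\Bigr)^k\Gamma(k+1,ku)\Bigr] = -\Bigl(\tfrac{2y_2\rme^u}{k\widetilde\omega}\Bigr)^k (ku)^k\rme^{-ku}k\widetilde\omega = -k\widetilde\omega\, x_1^k ,
\end{equation*}
where the last step uses $2y_2u/\widetilde\omega=x_1$. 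The analogous formula holds for $m$. With the minus signs in $I_2$, these contributions cancel the first term exactly. The main care needed is this step: tracking the branches of the powers and of $\Gamma(s,\cdot)$ for rational $k,m$. The computation is local, however, so it holds on any simply connected domain where the branches are fixed. This is exactly why the integrals are only claimed to be "generally non-algebraic". The exclusion $k,m\neq-1$ is needed for $F_2$, whose coefficients contain $1/(k+1)$ and $1/(m+1)$.

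\emph{Independence.} Finally I check that the differentials have rank $3$ at a generic point.

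In case (ii), $\rmd I_1$ lies in the span of $\rmd x_1,\rmd y_2$ and has a nonzero $\rmd x_1$ component. In the $(\rmd x_2,\rmd y_1)$ coordinates, $\rmd H_0$ has components $(\widetilde\omega y_2,\,2y_2-\widetilde\omega x_1)$ and $\rmd I_2$ has components $(-\widetilde\omega y_2,\,\widetilde\omega x_1)$. Their $2\times 2$ minor equals $2\widetilde\omega y_2^2\neq0$, so $\rmd H_0,\rmd I_2$ are independent in these coordinates. Since $\rmd I_1$ has zero components there but a nonzero $\rmd x_1$ component, the rank is $3$.

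In case (i), the $(x_2,y_1)$ minor of $H_0$ and $F_2$ is $2y_2\cdot(-y_2^2)\neq0$. Moreover $\rmd F_1=\rmd y_2$, which gives rank $3$.
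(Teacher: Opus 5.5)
Your proposal is correct and follows the paper's proof essentially step by step. It uses the same identity $\frac{\rmd}{\rmd t}(x_1y_1-x_2y_2)=-(kx_1^k+mx_1^m)$ and the same multiplication by the constant $y_2$ in case (i); in case (ii) it uses the same variable $\widetilde{\omega}x_1/(2y_2)$ with derivative $\widetilde{\omega}$, together with $\partial_u\Gamma(s,u)=-u^{s-1}\rme^{-u}$. Your rank-$3$ independence check goes beyond the paper, which never verifies functional independence explicitly. Your branch remark is sound only if the branches of $(I_1/(k\widetilde{\omega}))^k$, of $\Gamma(k+1,\cdot)$ and of $x_1^k$ are chosen compatibly, not merely fixed; otherwise the cancellation holds only up to a root of unity.
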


\begin{proof}
The equations of motion generated by Hamiltonian~\eqref{eq:ham_ex}, have the form
\begin{equation}
\label{eq:vhn}
\begin{split}
\dot x_1 &=2y_2-\widetilde{\omega}\, x_1, &&\dot y_1=\widetilde{\omega}\, y_1-k\, x_1^{k-1}-m\, x_1^{m-1},\\
\dot x_2 &=2y_1+\widetilde{\omega}\, x_2. && \dot y_2=-\widetilde{\omega}\, y_2.
\end{split}
\end{equation}
We now divide the proof into two separate cases.
\paragraph{\textit{Case (i)}:}   $\widetilde{\omega}=0$.  
In this case, it is straightforward to verify that $x_2$ is a cyclic variable and the corresponding momentum $y_2$ is a constant of motion.  

To reconstruct the integral $F_2$ given in~\eqref{eq:II1}, we first compute
\begin{align*}
\frac{\rmd}{\rmd t}(x_1y_1 - x_2y_2)
&= x_1\dot y_1 + \dot x_1y_1 - x_2\dot y_2 - \dot x_2y_2 \\
&= -k\,x_1^{k} - m\,x_1^{m}.
\end{align*}
Multiplying this expression by the constant $y_2$ yields
\[
\frac{\rmd}{\rmd t}\!\bigl[(x_1y_1 - x_2y_2)\,y_2\bigr]
= -y_2\bigl(k\,x_1^{k} + m\,x_1^{m}\bigr).
\]
On the other hand, using $\dot x_1 = 2y_2$, we obtain
\begin{align*}
&\frac{\rmd}{\rmd t}\!\left(\frac{k}{2(k+1)}x_1^{k+1}\right)=
 \frac{k}{2}\,x_1^{k}\,\dot x_1
 = k\,x_1^{k}\,y_2,\\
&\frac{\rmd}{\rmd t}\!\left(\frac{m}{2(m+1)}x_1^{m+1}\right)
= \frac{m}{2}\,x_1^{m}\,\dot x_1
 = m\,x_1^{m}\,y_2.
\end{align*}
Adding these three derivatives together, we find
\begin{equation*}
\begin{split}
& \frac{\rmd}{\rmd t}\!\bigl[
(x_1y_1 - x_2y_2)\,y_2  
+ \frac{k}{2(k+1)}x_1^{k+1} 
+ \frac{m}{2(m+1)}x_1^{m+1}\, 
\!\bigr]=0,
\end{split}
\end{equation*}
which confirms that $F_2$ defined in~\eqref{eq:II1}
is the first integral of the system for $\widetilde{\omega}=0$. 
\paragraph{\textit{Case (ii)}:}  $\widetilde{\omega}\neq 0$. 
From~\eqref{eq:vhn} we have
\[
\dot x_1 = 2y_2-\widetilde{\omega}\,x_1,\qquad 
\dot y_2 = -\widetilde{\omega}\,y_2 .
\]
Introduce the auxiliary variable
\begin{equation}
\label{eq:aux}
z:=\frac{\widetilde{\omega}\,x_1}{2y_2}.
\end{equation}
A direct calculation gives
\[
\dot z=\frac{\widetilde{\omega}}{2}\!\left(\frac{\dot x_1}{y_2}-\frac{x_1\dot y_2}{y_2^{\,2}}\right)
=\frac{\widetilde{\omega}}{2}\!\left(\frac{2y_2-\widetilde{\omega}x_1}{y_2}+\frac{\widetilde{\omega}x_1}{y_2}\right)
=\widetilde{\omega}.
\]Consider
\begin{equation}
\label{eq:weds}
I_1 : = 2y_2\,\exp\!\left(\frac{\widetilde{\omega}\,x_1}{2y_2}\right)=2y_2\,\mathrm{e}^z.
\end{equation}
Then
\[
\frac{\mathrm{d}}{\mathrm{d}t}\ln I_1
= \frac{\dot y_2}{y_2} + \dot z
= -\widetilde{\omega} + \widetilde{\omega}
= 0,
\]
which shows that $I_1$ is a first integral, i.e.\ $\dot I_1 = 0$.

Next, we define
\[
J := \widetilde{\omega}(x_1y_1 - x_2y_2).
\]
Using the full system~\eqref{eq:vhn}, one finds
\[
\frac{\mathrm{d}}{\mathrm{d}t}J
= \widetilde{\omega}\,\frac{\mathrm{d}}{\mathrm{d}t}(x_1y_1 - x_2y_2)
= -\,\widetilde{\omega}\bigl(k\,x_1^{k} + m\,x_1^{m}\bigr).
\]
To compensate this derivative, we make use of the derivative of the upper incomplete gamma function.  
Recall that
\[
\frac{\mathrm{d}}{\mathrm{d}u}\mathrm{\Gamma}(s,u)
= -u^{\,s-1}\mathrm{e}^{-u},
\]
see  e.g.~\cite[p.~339]{Magnus_1966}.
Introducing the auxiliary variable~\eqref{eq:aux}, for which $\dot z = \widetilde{\omega}$, we define
\begin{equation*}
\label{eq:dd}
\begin{split}
T_k &:= \left(\frac{I_1}{k\,\widetilde{\omega}}\right)^{\!k}
       \mathrm{\Gamma}\!\left(k+1,\,kz\right),\\[0.5em]
T_m &:= \left(\frac{I_1}{m\,\widetilde{\omega}}\right)^{\!m}
       \mathrm{\Gamma}\!\left(m+1,\,mz\right).
\end{split}
\end{equation*}
Since $I_1$ is constant, by the chain rule we have
\begin{align*}
\dot T_k
&= \left(\frac{I_1}{k\,\widetilde{\omega}}\right)^{\!k}
   \frac{\mathrm{d}}{\mathrm{d}t}\mathrm{\Gamma}(k+1, kz)\\
&= \left(\frac{I_1}{k\,\widetilde{\omega}}\right)^{\!k}
   \!\left[-(kz)^{k}\mathrm{e}^{-kz}\right] k\,\dot z\\
&= -k\,\widetilde{\omega}
   \left[(kz)^{k}\mathrm{e}^{-kz}
   \left(\frac{I_1}{k\,\widetilde{\omega}}\right)^{\!k}\right]
= -k\,\widetilde{\omega}\,x_1^{k},
\end{align*}
where relations~\eqref{eq:aux} and~\eqref{eq:weds} were used.  
Analogously,
\[
\dot T_m = -m\,\widetilde{\omega}\,x_1^{m}.
\]
Combining these with~$\dot J$, we get
\[
\frac{\mathrm{d}}{\mathrm{d}t}\bigl(J - T_k - T_m\bigr)
= -\widetilde{\omega}(k\,x_1^{k} + m\,x_1^{m})
+ \widetilde{\omega}(k\,x_1^{k} + m\,x_1^{m})
= 0.
\]
Hence, a  non-trivial function
\begin{equation}
\label{eq:cal}
\begin{split}
I_2 \coloneqq & \, J - T_k - T_m\\ =\,  & \widetilde{\omega}(x_1y_1-x_2y_2)
-\left(\frac{I_1}{k\,\widetilde{\omega}}\right)^{\!k}\mathrm{\Gamma}\!\left(k+1,\frac{k\,\widetilde{\omega}\,x_1}{2y_2}\right)
\\ - &  \left(\frac{I_1}{m\,\widetilde{\omega}}\right)^{\!m}\mathrm{\Gamma}\!\left(m+1,\frac{m\,\widetilde{\omega}\,x_1}{2y_2}\right),
\end{split}
\end{equation}
is a first integral of the system~\eqref{eq:ham_ex} for $\widetilde{\omega}\neq 0$.

\end{proof}
\begin{remark}
For integer values of the parameters $k,m\in\mathbb{N}$, 
the incomplete gamma functions appearing in~\eqref{eq:II2} 
reduce to finite polynomials.  
Indeed, for any integer~$n\geq~0$ one has the known identity
\[
e^z\, \mathrm{\Gamma}(n+1,z) = n!\,\sum_{j=0}^{n}\frac{z^{j}}{j!},
\]
see e.g.~\cite{Magnus_1966}.
Substituting this expression into~\eqref{eq:II2} and using formula for $I_1$~\eqref{eq:weds}, 
each term of the form
\[
\left(\frac{I_1}{n\,\widetilde{\omega}}\right)^{\!n}
\mathrm{\Gamma}\!\left(n+1,\frac{n\,\widetilde{\omega}\,x_1}{2y_2}\right),
\]can be written as
\begin{align*}
\label{eq:poly-exp-final}
\left(\frac{I_1}{n\,\widetilde{\omega}}\right)^{\!n}
\mathrm{\Gamma}\!\left(n+1,\frac{n\,\widetilde{\omega}\,x_1}{2y_2}\right)
&= n!\sum_{j=0}^{n}\frac{1}{j!}
   \left(\frac{2y_2}{n\,\widetilde{\omega}}\right)^{\!n-j}
   x_1^{\,j}.
\end{align*}
Hence,  for integers $k$ and $m$, the first integral $I_2$~\eqref{eq:II2} is a polynomial function of the form
\begin{align}
\begin{split}
 &I_2= \widetilde{\omega}(x_1y_1-x_2y_2) \\[0.2em]
& 
- k!\sum_{j=0}^{k} \frac{1}{j!}
   \left(\frac{2y_2}{k\,\widetilde{\omega}}\right)^{\!k-j} x_1^{\,j}
\;-\;
  m!\sum_{j=0}^{m} \frac{1}{j!}
   \left(\frac{2y_2}{m\,\widetilde{\omega}}\right)^{\!m-j} x_1^{\,j}.
\end{split}
\end{align}
For non-integers $k,m$, the integral $I_2$ is transcendental. 
\end{remark}

\begin{remark}
Hamiltonian~\eqref{eq:ham_ex} with the exceptional potential~\eqref{eq:exceptional}
does not have a customary form encountered in classical Euclidean mechanics.
In particular, potentials with complex coefficients do not admit a direct physical
realization within the standard framework of Newtonian or Hamiltonian mechanics.

The super-integrable case identified in this work, defined by the Hamiltonian~(79),
should therefore be understood primarily as an analytically consistent example
within the complexified Hamiltonian formalism.
After the appropriate complex transformation, the potential \(V\) becomes real and
takes the form of an anisotropic oscillator with exponents \(m\) and \(k\).
However, under this transformation, the gyroscopic term acquires an imaginary
coefficient.
Nevertheless, in the special case \(\omega = 0\), the transformed Hamiltonian
admits a clear and physically meaningful interpretation.

To provide an alternative geometric interpretation of Hamiltonians with exceptional
potentials, let us consider a point mass moving in a plane equipped with a
Lorentzian metric of signature $(+,-)$, that is, in the Minkowski plane.
In this setting, the Lagrangian function is given by
\begin{equation*}
L=\frac{1}{2}(\dot{q}_1^2-\dot{q}_2^2)-U(q_1,q_2),
\end{equation*}
where $U(q_1,q_2)$ denotes the potential energy.
Introducing new coordinates
\begin{equation*}
q_1 = \frac{1}{2}(x_1+x_2),\quad
q_2 = \frac{1}{2}(x_1 - x_2),
\end{equation*}
we obtain
\begin{equation*}
L=2\dot{x}_1\dot{x}_2 -V(x_1,x_2),
\end{equation*}
with
\begin{equation*}
V(x_1,x_2)=U\left(\frac{x_1+x_2}{2},\frac{x_1 - x_2}{2}\right).
\end{equation*}
The corresponding Hamiltonian takes the form
\begin{equation*}
H = 2p_1 p_2 + V(x_1,x_2).
\end{equation*}
We emphasize that exceptional potentials of this type have been extensively studied
in the literature on integrability and super-integrability, see e.g.~\cite{
AcostaHumanez_2021,Llibre_Valls_2015,Nakagawa_Maciejewski_Przybylska_2005}
.
\end{remark}
 \section{Conclusions}
\label{sec:conclusions}
In this paper, we investigate the integrability of a two-dimensional Hamiltonian system that combines a gyroscopic term and a Keplerian part with a non-homogeneous potential composed of two homogeneous components of different degrees. 
By employing a combination of analytical tools — including the Levi–Civita regularisation, the coupling through constant metamorphosis, and the differential Galois theory — we established explicit obstructions to Liouville integrability.  
These obstructions are expressed in terms of the degrees of homogeneity and the coefficients of the potential, providing a compact criterion that can be directly applied to a wide class of  Hamiltonian systems with a gyroscopic coupling.

The obtained results show that the addition of a gyroscopic term, together with the Kepler-type potential, generally destroys the integrable structure of classical non-rotating systems such as the H\'enon–Heiles and Armbruster–Guckenheimer–Kim models.  
Only a few exceptional parameter configurations remain compatible with the necessary integrability conditions.  
The analytical predictions were further supported by numerical studies based on Poincar\'e cross-sections, which clearly illustrate the breakdown of invariant tori and the emergence of chaotic regions as the strength of the gyroscopic and non-homogeneous terms increases.  
Interestingly, in the absence of the Kepler-type potential, a particular non-homogeneous extension of the exceptional potential remains integrable.  
For this model, we obtained explicit analytic expressions for the first integrals, which are generally transcendental, but we show that for integer homogeneity degrees, they reduce to purely polynomial forms.  

The present work thus provides a unified and systematic framework for studying the (non-)integrability of rotating and non-homogeneous Hamiltonian systems.  
It extends several known results from classical galactic and astrophysical dynamics and offers a coherent mathematical explanation for the loss of integrability in the presence of rotation and anisotropy.  
At the same time, it identifies a number of specific parameter domains where integrability may still persist, opening the way to a more detailed classification of exceptional cases.

As a natural continuation of this research, an important open problem is the study of similar models in the relativistic regime.  
Relativistic corrections are known to affect the integrability and stability properties of dynamical systems, as demonstrated in recent comparative studies between classical and relativistic particle dynamics in flat and curved spaces~\cite{Szuminski_2024_NonlinearDyn,Przybylska_2023_DestructiveRelativity,Szuminski_2025_NonlinearDyn}.   
Since the gyroscopic term has a clear interpretation both in electrodynamics and in astrophysical models, its inclusion in a relativistic Hamiltonian framework may reveal new effects on the structure of first integrals and the onset of chaos.  
In particular, a relativistic version of the present model, possibly formulated on spaces of constant curvature, either within a special-relativistic Hamiltonian framework or on fixed curved configuration manifolds, would be especially interesting from both the physical and mathematical viewpoints.  
Such an extension could bridge the gap between the study of integrability in classical rotating systems and modern approaches to relativistic galactic dynamics, where the curvature of space–time and rotational effects play a crucial role.

In conclusion, the results presented here highlight how the combined influence of rotation, non-homogeneity, and central forces governs the transition between order and chaos in two-dimensional Hamiltonian systems.  
They also point towards new directions — especially the relativistic and curved-space generalisations — in which the interplay between gyroscopic effects and geometry may lead to qualitatively new integrability phenomena.

\section*{Funding} 
This research was funded by the National Science Center of Poland under Grant No. 2020/39/D /ST1/01632, and  by the Minister of Science under
the ‘Regional Excellence Initiative’ program, Project No.
RID/SP/0050/2024/1.
For Open Access, the author has applied a CC-BY public copyright license to any Author Accepted Manuscript (AAM) version arising from this submission.

\section*{Author Contributions }
W. Sz. was responsible for the conceptualisation of the study, the integrability analysis, software development, numerical simulations, validation of the results, and funding acquisition.
A.~J.~M. developed the formal analysis and methodology, prepared the initial manuscript draft, and contributed to funding acquisition.
All authors reviewed and approved the final manuscript.

	\section*{Data availability }
	The data that support the findings of this study are available from the corresponding author,  upon reasonable request.
\section*{Declarations}
	\textbf{Conflicts of interest} The authors declare that they have no conflict of interest.\\

\appendix
\section{Criterion}
\label{app:criterion}
\renewcommand\theequation{A\arabic{equation}}
\setcounter{equation}{0}
Let us consider the following system of differential equations
\begin{subequations}
\label{eq:reducedA}
\begin{align}
\label{eq:reduced_A}
X''&=r(z) X,\\ \label{eq:reduced_B}
Y''&=r(z) Y+s(z)X.
\end{align}
\end{subequations}
where $r(z)$ is a rational function and $s(z)$ is an algebraic function. We
assume that equation~\eqref{eq:reduced_A} is reducible and one of its solutions
is algebraic and the second is transcendent. In the language of differential
Galois theory: the identity component of the differential Galois group of the
system~\eqref{eq:reducedA} is the additive subgroup of $\mathrm{SL}(2,\C)$. The
question is whether the identity component of the differential Galois group of
the system~\eqref{eq:reducedA} is Abelian. To answer this question, we have to
analyze all solutions of the system $(x_1,x_2, y_1,y_2)$, where $x_1$ and $x_2$
are solutions of \eqref{eq:reduced_A}, and  $y_1$ and $y_2$ are solutions of
\eqref{eq:reduced_B}. Using the variation of constants method, we find that 
\begin{equation}
  \label{eq:23A} 
  y_i =
  x_i \int \frac{\varphi_i}{x_i^2} ,
  \qquad \varphi_i = \int s(z)x_i^2, \qquad i=1,2.
\end{equation}
A similar question, in more general settings was investigated in
\cite{Duval:09::}. Results of this paper were used in \cite{2023NonDy.111..275P}to derive the necessary and sufficient conditions which guarantee that the identity component of the differential Galois group of
the system~\eqref{eq:reducedA} is Abelian.
For their  formulation we have to introduce the following  functions. Let $x_1(z)$ be an algebraic solution of equation~\eqref{eq:reduced_A}, and 
\begin{equation}
  \label{eq:psiphi}
	\begin{split}
		\psi(z)=&\int \frac{\rmd z}{x_1(z)^2}, \qquad 
	\varphi(z)=\int s(z){x_1(z)^2}\rmd z, \\
	I(z) =&  \int \psi'(z)\varphi(z) \,\rmd z, \qquad x_2(z)=x_1(z)\psi(z).
	\end{split}
\end{equation}
With the above notation, we have the following.
\begin{lemma}
  \label{lem:necsu0}
  If integral $\psi(z)$ is algebraic, then the identity component of the
  differential Galois group of system~\eqref{eq:reducedA} is Abelian.
\end{lemma}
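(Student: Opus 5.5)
If $\psi(z)$ is algebraic, then both $x_1$ and $x_2=x_1\psi$ are algebraic. I would first determine the Galois group of equation~\eqref{eq:reduced_A} alone, then of the whole system~\eqref{eq:reducedA}, and identify its identity component.

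\textbf{Step 1: the first equation.} Let $K=\C(z)$ and let $L_0=K(x_1,x_2,x_1',x_2')$ be the Picard--Vessiot extension of~\eqref{eq:reduced_A}. Since $x_1$, $x_2$ and their derivatives are algebraic over $K$, $L_0/K$ is a finite algebraic extension. Hence the Galois group $G_0$ of~\eqref{eq:reduced_A} is finite.

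\textbf{Step 2: the full system.} The Picard--Vessiot extension $L$ of~\eqref{eq:reducedA} is generated over $L_0$ by $y_1,y_2$ and their derivatives. By~\eqref{eq:23A},
\[
y_i=x_i\int\frac{\varphi_i}{x_i^2}, \qquad \varphi_i=\int s\,x_i^2 ,
\]
where $s x_i^2\in L_0(s)$ is algebraic over $K$. So $L$ is obtained from the algebraic field $L_1=L_0(s)$ by adjoining iterated integrals of elements of $L_1$, namely $\varphi_1,\varphi_2$ and then $\int\varphi_i/x_i^2$. Thus $L\subset E$, where $E$ is a tower over $L_1$ generated by primitives.

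\textbf{Step 3: the identity component.} Let $G^\circ$ be the identity component of $\mathrm{Gal}(L/K)$. It fixes the algebraic closure of $K$ in $L$, which contains $L_1$. Therefore $G^\circ$ embeds in $\mathrm{Gal}(L/L_1)$, which is connected, so $G^\circ=\mathrm{Gal}(L/L_1)$. I would then show this group is Abelian. Adjoining $\varphi_1,\varphi_2$ to $L_1$ gives a vector group $\mathbb{G}_a^{r}$. In the second stage, an automorphism $g$ acts by $\varphi_i\mapsto\varphi_i+c_i$ and by $\int\varphi_i/x_i^2\mapsto \int\varphi_i/x_i^2+c_i\psi_i+d_i$, where $\psi_1=\psi$ and $\psi_2=\int x_2^{-2}$ are algebraic.

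\textbf{Main obstacle.} A tower of primitive extensions can in general have a non-Abelian unipotent Galois group, of Heisenberg type, so commutativity must be checked directly. Here the coefficients $\psi_i$ are fixed by $G^\circ$, because they lie in $L_1$ (possibly after enlarging $L_1$ by them). Composing two such maps $(c,d)$ and $(c',d')$ then gives $(c+c',\,d+d')$ in either order. So $G^\circ$ is a subgroup of a vector group $\mathbb{G}_a^{N}$ and is Abelian.

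Alternatively, one can quote Theorem~3.2 of~\cite{Duval:09::}. The Galois group of~\eqref{eq:reduced_A} is finite, hence its monodromy is contained in a diagonal (finite) group, and that theorem yields Abelianity as in the proof of Proposition~\ref{prop:reduced}. I would present the direct argument and cite~\cite{Duval:09::} for the general setting.
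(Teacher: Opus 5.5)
Your proof is correct, but it takes a different route from the paper. The paper gives no argument of its own: it derives the lemma directly from Theorem~3.1 of \cite{Duval:09::}, a general result on triangular systems of the form~\eqref{eq:reducedA}.

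You instead compute inside the Picard--Vessiot extension. Once $x_1$, $x_2=x_1\psi$ and $\psi_2=\int x_2^{-2}$ lie in the algebraic part $L_1$, every $\sigma\in\mathrm{Gal}(L/L_1)$ is determined by a vector $(c_1,c_2,d_1,d_2)\in\C^4$, and composition of automorphisms is addition of these vectors. Since $G^\circ\subseteq\mathrm{Gal}(L/L_1)$, this makes $G^\circ$ a subgroup of a vector group. Equivalently, integrating by parts, $u_i=\varphi_i\psi_i-\int s\,x_i^2\psi_i$, so $L$ is generated over $L_1$ by primitives of elements of $L_1$.

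What your route buys is a self-contained proof with a sharper conclusion: $G^\circ$ is unipotent and commutative, not merely Abelian. It also shows exactly where algebraicity of $\psi$ enters. The coefficient $c_i\psi_i$ is fixed by $\sigma$, so the Heisenberg-type non-commutativity you flag cannot occur. The paper's citation buys brevity, and it places the lemma in the same framework from which Lemmas~\ref{lem:necsu1} and~\ref{lem:necsu2} are taken.

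A few points to tidy:
\begin{itemize}
\item Since $s$ is only algebraic, take the base field to be $\C(z,s)$ rather than $\C(z)$. This is harmless, because the identity component is unchanged by a finite extension of the base.
\item No enlargement of $L_1$ is needed. The normalization $x_1^2\psi'=1$ gives $x_1x_2'-x_1'x_2=1$, hence $\psi_2=-x_1/x_2+\mathrm{const}=-1/\psi+\mathrm{const}\in L_0$ automatically.
\item Connectedness of $\mathrm{Gal}(L/L_1)$ does hold (each adjoined primitive is either trivial or transcendental), but you do not need it: the inclusion $G^\circ\subseteq\mathrm{Gal}(L/L_1)$ suffices.
\item In your alternative remark, a finite subgroup of $\mathrm{SL}(2,\C)$ need not be diagonal in general. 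Here it is diagonalizable only because the equation is reducible and finite groups are completely reducible.
\end{itemize}
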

This lemma follows from Theorem~3.1 of \cite{Duval:09::}.
\begin{lemma}
	\label{lem:necsu1}
	Assume that equation~\eqref{eq:reduced_A} has one algebraic solution $x_1(z)$
	and one transcendent solution $x_2(z)$. If the identity component of the
	differential Galois group of system~\eqref{eq:reducedA} is Abelian then the
	function 
\begin{equation}
	\label{eq:gzA1} 
	g(z) = \varphi(z)+\lambda\psi(z),  
\end{equation}
is algebraic for a certain $\lambda\in\C$. 
\end{lemma}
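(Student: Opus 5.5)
The strategy is to read off the Galois group of system~\eqref{eq:reducedA} explicitly from the fundamental matrix, in the triangular form produced by variation of constants, and then use that under the Abelian hypothesis the identity component must be a commutative algebraic group. Because equation~\eqref{eq:reduced_A} is reducible with $x_1$ algebraic and $x_2=x_1\psi$ transcendental, $\psi$ is transcendental, and after a finite algebraic extension $K$ of $\C(z)$ (adjoining $x_1$, $s$ and the algebraic parts) the identity component of the Galois group of~\eqref{eq:reduced_A} is $\mathbb{G}_a$, acting by $\psi\mapsto\psi+c$. The full solution space is spanned by $x_1,\ x_2,\ y_1=x_1\int\varphi/x_1^2$ and $y_2$. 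Integration by parts puts $y_1$ and $y_2$ in terms of $\psi$, $\varphi$ and $I=\int\psi'\varphi$: we get $y_1=x_1\int\psi'\varphi = x_1 I$, and a similar expression for $y_2$ involving $\varphi_2=\int s x_1^2\psi^2$, which reduces through $\varphi$, $\psi$ and $I$ via $\varphi_2=\psi^2\varphi-2\int\psi\psi'\varphi$. So I would work over $K$ with the Picard--Vessiot extension $K(\psi,\varphi,I,J)$, where $J$ denotes the remaining iterated integral. Since each of these is a primitive over the field generated by the previous ones, the group is unipotent.

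Next I compute how an element of the identity component $G^0$ acts. Write $\sigma(\psi)=\psi+a$, and $\sigma(\varphi)=\varphi+b$ (take $b=0$ if $\varphi\in K$). Then $\sigma(I)=\int\psi'(\varphi+b)=I+b\psi+c$. Pushing the same computation to the last integral, in the basis $(x_1,x_2,y_1,y_2)$ the matrix of $\sigma$ is upper unitriangular and its entries are polynomials in $a,b,c,\dots$. In the generic case $G^0$ contains the full group of such matrices, and that group is a non-commutative Heisenberg-type group. The reason is that the commutator of two elements $(a,b,c)$ and $(a',b',c')$ has a nonzero entry proportional to $ab'-a'b$, which comes from the coupling term $b\psi$ in $\sigma(I)$.

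So under the Abelian hypothesis, the image of $G^0$ in the $(a,b)$-plane has to be isotropic for the form $ab'-a'b$, i.e.\ contained in a line. Since $a$ ranges over all of $\C$ (because $\psi$ is transcendental), that line has the form $b=\mu a$ for some $\mu\in\C$, possibly $\mu=0$. This says $\sigma(\varphi-\mu\psi)=\varphi-\mu\psi$ for every $\sigma\in G^0$. By the Galois correspondence, $\varphi-\mu\psi$ is then fixed by $G^0$ and hence algebraic over $\C(z)$. Setting $\lambda=-\mu$ gives that $g=\varphi+\lambda\psi$ is algebraic, which is the claim.

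I expect the main obstacle to be making the commutator computation rigorous: showing that the $(a,b)$-parametrisation really captures $G^0$, and that the entry $ab'-a'b$ is nonzero in the commutator unless the constraint $b=\mu a$ holds, rather than being cancelled by contributions from $c$ or from the last integral. My first attempt would be to restrict to the subsystem spanned by $x_1,x_2,y_1$, where only $\psi,\varphi,I$ appear. That quotient is a $3\times3$ unitriangular representation, and its commutator is exactly $ab'-a'b$ in the corner entry. Any quotient of an Abelian group is Abelian, so this avoids $y_2$ altogether. Alternatively, one can simply invoke the general result of~\cite{Duval:09::} as in~\cite{2023NonDy.111..275P}.
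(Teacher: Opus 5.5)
Your proof is correct, and it takes a genuinely different route from the paper. The paper gives no argument for this lemma. It imports it, together with Lemma~\ref{lem:necsu0} and Lemma~\ref{lem:necsu2}, from the Jordan-obstruction results of \cite{Duval:09::}, in the form used in \cite{2023NonDy.111..275P}.

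You instead compute $G^0$ by hand:
\begin{itemize}
\item Take $K$ to be the algebraic closure of $\C(z)$ in the Picard--Vessiot field $L$. It contains $x_1$, and also $s=(y_1''-ry_1)/x_1$. Then $\mathrm{Gal}(L/K)=G^0$, and $L=K(\psi,\varphi,I,J)$ is a tower of primitives.
\item On the $G^0$-invariant span of $(0,x_1),(0,x_2),(x_1,y_1)$, the group acts through Heisenberg matrices in $(a,b,c)$, and commutativity forces $ab'=a'b$.
\item Surjectivity of $G^0\to\mathrm{Gal}(K(\psi)/K)\cong\mathbb{G}_a$ then puts the $(a,b)$-image on a line $b=\mu a$.
\item So $\varphi-\mu\psi$ is $G^0$-invariant, hence algebraic.
\end{itemize}

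What your route buys is self-containment and transparency. It identifies $\lambda$ as minus the slope of that line, and it shows that $y_2$ plays no role in this necessary condition. The same bookkeeping on the full $4\times4$ block, with $J=\int\psi' I$, also gives the necessity half of Lemma~\ref{lem:necsu2}: when $b=0$, the corner entry of the commutator becomes $2(ac'-a'c)$.

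What the citation route buys is generality, since it covers far more general reducible situations. It also supplies the sufficiency statements, namely Lemma~\ref{lem:necsu0} and the ``if'' part of Lemma~\ref{lem:necsu2}. Your computation does not reach those, because it only extracts constraints from commutativity rather than showing that the constrained group is actually Abelian.
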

This is only a necessary condition. If $\varphi(z)$ is algebraic then we can
take $\lambda=0$ and it is fulfilled. 

If $\varphi(z)$ is not algebraic, then we have to use stronger condition.
\begin{lemma}
	\label{lem:necsu2}
	Assume that equation~\eqref{eq:reduced_A} has one algebraic solution $x_1(z)
	$, the second one $x_2(z)$ is transcendent, and $\varphi(z)$ is
	algebraic. Then the identity component of the differential Galois group of system~\eqref{eq:reducedA} is Abelian if and only if function
\begin{equation}
	\label{eq:gzA2} 
	g(z) = \lambda\psi(z) +  I(z),  
\end{equation}
is algebraic for a certain $\lambda\in\C$. 
\end{lemma}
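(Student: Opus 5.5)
We reduce the statement to the framework of \cite{Duval:09::} as adapted in \cite{2023NonDy.111..275P}. Since \eqref{eq:reduced_A} has the algebraic solution $x_1$, its Galois group $G_0$ is triangular, and the assumption that $x_2=x_1\psi$ is transcendental means $G_0^\circ$ is the additive group $\mathbb{G}_a$. A full solution matrix of \eqref{eq:reducedA} is generated by $x_1,x_2$ and, through \eqref{eq:23A}, by $y_1=x_1\int\varphi/x_1^2$ and $y_2$. The Picard--Vessiot extension of the whole system is therefore obtained from the algebraic field $K=\overline{\C(z)}$ by adjoining $\psi$, then $\varphi$ (already in $K$ by hypothesis), and then the primitives needed to express $y_1$ and $y_2$. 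Since $x_1^{-2}=\psi'$, we have $y_1=x_1\int\varphi\psi'=x_1 I$. A similar computation for $y_2$, using $\varphi_2=\int s x_1^2\psi^2$ and integration by parts, expresses it through $\psi$, $\varphi$, $I$ and algebraic functions. Consequently the extension $L$ is generated over $K$ by the two primitives $\psi$ and $I$, where $\psi'\in K$ and $I'=\varphi\psi'\in K$.

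The plan then uses the standard description of iterated primitive extensions. The group $\mathrm{Gal}(L/K)$ is the identity component $G^\circ$, and it embeds into the unipotent group $\{(\psi,I)\mapsto(\psi+a,\,I+b)\}$, possibly with an extra term involving $\psi$ if $I$ is only a primitive over $K(\psi)$. More precisely, $I'=\varphi\psi'\in K$, so $I$ is itself a primitive over $K$. By the Kolchin--Ostrowski theorem, $\psi$ and $I$ are either algebraically independent over $K$, giving $G^\circ\simeq\mathbb{G}_a^2$, or there is a linear relation $\lambda\psi+\mu I\in K$. In both cases the group acting on $(\psi,I)$ is a vector group, but the Galois group of the \emph{system} acts on the solution vector $(x_1,x_2,y_1,y_2)$. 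So I would write the matrix of a general $\sigma\in G^\circ$ explicitly in the basis $(x_1,x_2,y_1,y_2)$. Here $\sigma$ sends $\psi\mapsto\psi+a$ and $I\mapsto I+b$, and $y_2$ transforms with a cross term $a\,y_1$ coming from $\varphi_2$ involving $\psi^2$ (after integration by parts the quantity $\psi I$ appears). Computing the commutator of two such elements $(a,b)$ and $(a',b')$ then gives an entry proportional to $ab'-a'b$.

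Hence $G^\circ$ is Abelian if and only if the image $\{(a,b)\}$ is contained in a line, which holds if and only if it is at most one-dimensional. By Kolchin--Ostrowski, this happens exactly when $\lambda\psi+\mu I$ is algebraic for some $(\lambda,\mu)\neq(0,0)$. The case $\mu=0$ is excluded because $\psi$ is transcendental, so we may normalise $\mu=1$, which gives $g=\lambda\psi+I$ as in \eqref{eq:gzA2}. This yields both implications.

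The main obstacle is the bookkeeping for $y_2$. One must verify that the cross term really is a nonzero multiple of $a$ times a solution, which relies on $\varphi$ being algebraic and nonconstant (note $s\neq0$). It must also produce a commutator entry of the form $ab'-a'b$ rather than something that vanishes identically. I would check this by a direct computation of $\varphi_2$ via integration by parts: $\varphi_2=\varphi\psi^2-2\int\varphi\psi\psi'$ and $\int\varphi\psi\psi'=\psi I-\int I\psi'$. This leaves one further primitive, $\int I\psi'$, whose transformation under $\sigma$ is determined up to a constant by those of $\psi$ and $I$, so no new independent parameters enter $G^\circ$ beyond unipotent constants that commute.
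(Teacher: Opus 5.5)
Your argument is correct. It is a genuinely different route from the paper's, which gives no proof of this lemma at all: the paper imports it from Przybylska--Maciejewski, who derived it from the general results of Duval--Maciejewski on Galois groups of such triangular systems. You instead compute $G^\circ$ by hand.

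Carried out carefully, put $J:=\int I\psi'$. Then $y_1=x_1I$ and $y_2=x_1(\psi I-2J)$, modulo homogeneous solutions. For $\sigma\in G^\circ$ with $\sigma(\psi)=\psi+a$, $\sigma(I)=I+b$ and $\sigma(J)=J+b\psi+c$, one gets
\[
\sigma(x_2,y_2)=(x_2,y_2)+a\,(x_1,y_1)-b\,(0,x_2)+d\,(0,x_1),\qquad d=ab-2c .
\]
So $G^\circ$ sits in a Heisenberg-type group with law $(a,b,d)(a',b',d')=(a+a',\,b+b',\,d+d'+a'b-ab')$. The $d$-direction is central, and the commutator is the element $(0,0,2(a'b-ab'))$, exactly as you predicted.

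This makes transparent why precisely $\lambda\psi+I$ appears: commutativity is the vanishing of the alternating form $a'b-ab'$ on the $(a,b)$-image, i.e.\ that image being a line. It also shows that for the direction ``Abelian $\Rightarrow$ $g$ algebraic'' you do not need Kolchin--Ostrowski. The image is a line $b=-\lambda a$; not all $a$ vanish, because $\psi$ is transcendental. Hence $\lambda\psi+I$ is $G^\circ$-invariant, and therefore algebraic by the Galois correspondence. The citation route is shorter and places the lemma in a general framework; yours is self-contained and exposes the mechanism.

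Two points in your write-up need fixing.

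First, your opening paragraph claims that $L$ is generated over $K$ by $\psi$ and $I$ alone, and that algebraic independence of $\psi,I$ gives $G^\circ\simeq\mathbb{G}_a^2$. Both claims are false. In that case the commutator of two elements with non-proportional $(a,b)$ is a nontrivial element with $a=b=0$ and $c\neq 0$. So $J$ is transcendental over $K(\psi,I)$, and $G^\circ$ is the three-dimensional, non-commutative Heisenberg group; only its image in the $(a,b)$-plane is $\mathbb{G}_a^2$. Your last paragraph in effect corrects this, so the earlier sentences should be replaced by the statement about the image.

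Second, the nonconstancy of $\varphi$ plays no role in producing the cross term, which is purely structural. The hypothesis that $\varphi$ is algebraic is used only to make $I$ a primitive over $K$, so that $\sigma$ translates $I$ by a constant.
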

Notice, that if $I(z)$ is algebraic, then we can take $\lambda =0$ and get algebraic $g(z)$. 
\section{Monodromy group of the hypergeometric equation}
\label{app:monodromy}

The Gauss hypergeometric function $F(\alpha,\beta;\gamma;z)=$ 
  ${}_2F_1(\alpha,\beta;\gamma;z) $ is defined by the following
series
\begin{equation}
  \label{eq:2}
  \begin{split}
    F(\alpha,\beta;\gamma &;z)=
    \sum_{n=0}^{\infty}\frac{{\left(\alpha\right)_{n}}{\left(\beta
          \right)_{n}}}{{\left(\gamma\right)_{n}}n!}z^n\\
    =&1+\frac{\alpha\beta}{\gamma}z+\frac{\alpha(\alpha+1)\beta(\beta+1)}
    {\gamma(\gamma+1)2!}z^{2}+\cdots\\ =&
    \frac{\mathrm{\Gamma}\left(\gamma\right)}{\mathrm{\Gamma}\left(\alpha\right)\mathrm{\Gamma}
      \left(\beta\right)}
    \sum_{n=0}^{\infty}\frac{\mathrm{\Gamma}\left(\alpha+n\right)\mathrm{\Gamma}\left(\beta+n
      \right)}{\mathrm{\Gamma}\left(\gamma+n\right)n!}z^{n},
  \end{split}
\end{equation}
where $ (x)_n $
and $ \mathrm{\Gamma}(x) $ denote the Pochhammer symbol and the Euler  gamma function,
respectively. It is a solution of the Gauss hypergeometric
equation%
\begin{equation}
  \label{eq:33}
  z(1-z)\frac{{\mathrm{d}}^{2}w}{{\mathrm{d}z}^{2}}+
  \left(\gamma-(\alpha+\beta+1)z\right)
  \frac{\mathrm{d}w}{\mathrm{d}z}-\alpha\beta w=0,
\end{equation}
which is holomorphic in
the disk $ \abs{z}<1 $. This equation has three regular singularities at $ z
  \in\{ 0,1,\infty\} $, with the corresponding exponent pairs
\begin{equation}
  \label{eq:32} (0, 1 - \gamma), \qquad ( 0, \gamma - \alpha - \beta), \qquad
  (\alpha, \beta ).
\end{equation}
We assume none of $\gamma$,
$\gamma-\alpha-\beta$, $\alpha-\beta$ is an integer.

The effect of analytical continuation of function $ f(z) $ holomorphic in a
domain $U$ along a loop $ \sigma $ with a base point in $ U $ is a function $
\widetilde{f}(z)$. We will denote $\widetilde{f}(z)=\cM_{\sigma}(f(z))$, and $
\cM_{\sigma} $ will be called the monodromy operator. If $ f_1(z) $ and $ f_2(z)
$ span two-dimensional vector space and $ \cM_{\sigma}(f_i(z))=f_1(z)m_{1i} +
f_2(z)m_{2i} $ for $ i=1,2 $. In matrix notation we write 
\begin{equation}
  \label{eq:monodromy_matrix}
  \cM_{\sigma}(\vf(z))= \vf(z)M_{\sigma}, \quad\text{where}\quad
  \vf(z)=[f_1(z),f_2(z)],
\end{equation}
where 
\begin{equation}
\vf(z)=[f_1(z),f_2(z)], \quad\text{and}\quad
  M_{\sigma}=
  \begin{bmatrix}
    m_{11} & m_{12}\\ m_{21} & m_{22}
  \end{bmatrix}.
\end{equation}
Matrix $M_{\sigma}  \in\mathrm{GL}(2,\C) $ is called the monodromy matrix. The is called the monodromy matrix. 
\begin{figure}
  \centering
  \includegraphics[width=0.4\textwidth]{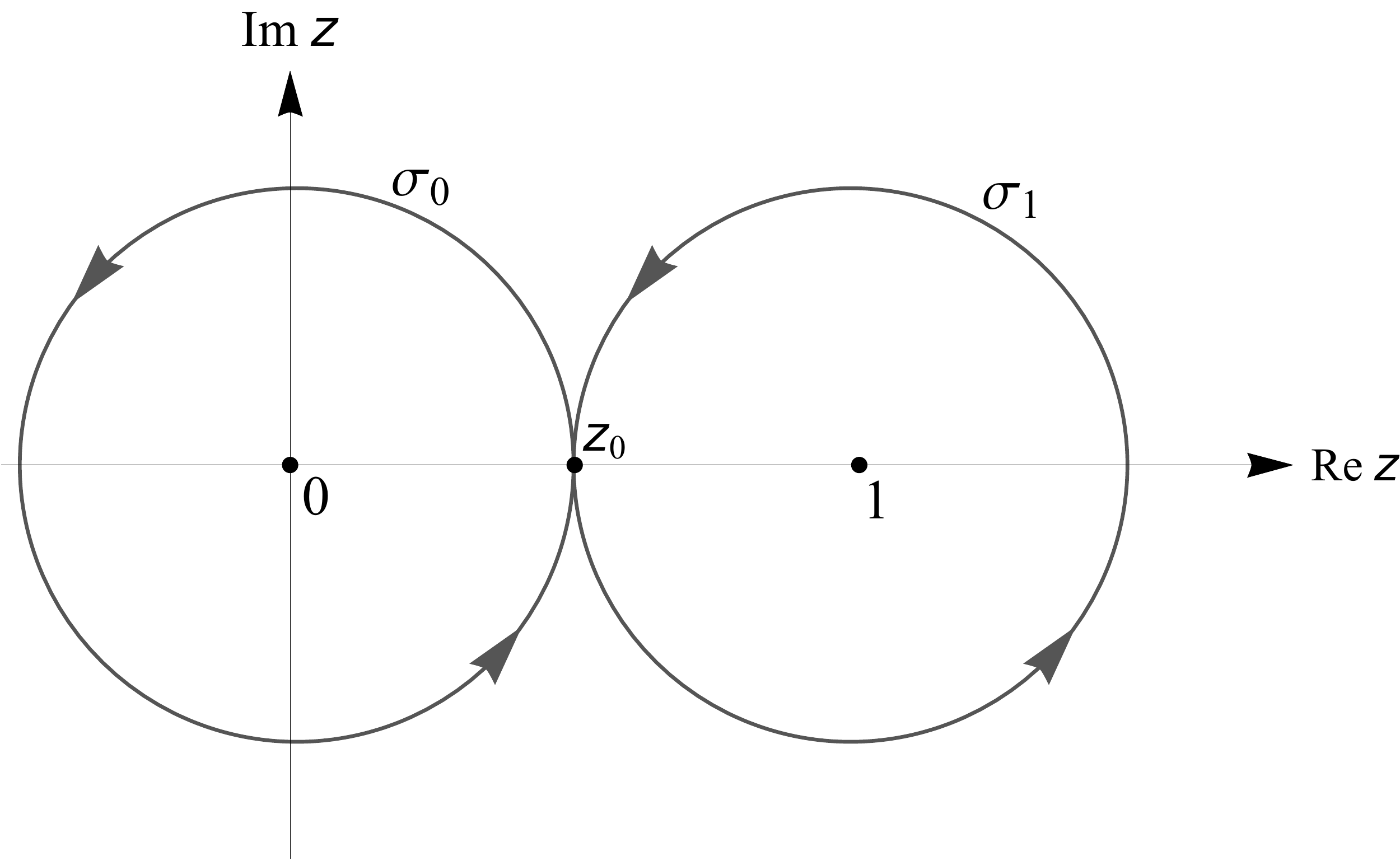}
  \caption{Loops $\sigma_0$ and $\sigma_1$, around singularities $z=0$ and $z=1$.}
  \label{fig:mono}
\end{figure}

We consider two loops $ \sigma_0 $ and $\sigma_1$ with one common point
$z_0\in(0,1)$ that encircle singular points $z=0$ and $z=1$ counter-clockwise,
respectively,  see Figure~\ref{fig:mono}. The third loop $\sigma_{\infty}$
encircles clockwise both singularities $z=0$ and $z=1$. Then in a neighborhood
of each singularity, one can select two independent solutions  of the
hypergeometric equation. They form the fundamental matrices, and their
analytical continuations determine the local monodromy matrices. Appropriate
expressions for all singularities are as follows. 
\begin{itemize}
  \item
        Singularity $z=0$
        \begin{equation}
          \label{eq:sol_zero}
          \begin{split}
            u_1(z)
            :=&F\left(\alpha,\beta;\gamma;z\right), \\ u_2(z):=
            &{z^{1-\gamma}}F\left({\alpha-\gamma+1,\beta-\gamma+1; 2-\gamma};z\right).
          \end{split}
        \end{equation}
        The local
        monodromy matrix at this singularity is%
        \begin{equation}
          \label{eq:M_zero} M_{\sigma_0} =
          \begin{bmatrix}
            1 & 0 \\ 0 &
               c^{-1}
          \end{bmatrix}.
        \end{equation}
  \item
        Singularity $z=1$ %
        \begin{equation}
          \label{eq:sol_one}
          \begin{split}
            v_{1}(z):=&F\left({\alpha,\beta;
              \alpha+\beta-\gamma+1};1-z\right), \\
            v_{2}(z):=&(1-z)^{\gamma-\alpha-\beta}F\left({\gamma-\alpha,\gamma-\beta;
              \gamma-\alpha-\beta+1};1-z\right).
          \end{split}
        \end{equation}
        The local monodromy matrix at
        this singularity is%
        \begin{equation}
          \label{eq:M_one}
          \widetilde{M}_{\sigma_1} =
          \begin{bmatrix}
            1 & 0 \\ 0 & \frac{c}{a b}
          \end{bmatrix}.
        \end{equation}

  \item Singularity $z=\infty$ %
        \begin{equation}
          \label{eq:6}
          \begin{split}
            w_{1}(z):=&{\mathrm{e}}^{\alpha\pi\mathrm{i}}z^{-\alpha}\*F\left({\alpha,\alpha-\gamma+1;
              \alpha-\beta+1}; \tfrac{1}{z}\right),\\%
            w_{2}(z):=&{\mathrm{e}}^{\beta\pi\mathrm{i}}z^{-\beta}F\left({\beta,\beta-\gamma+1;
              \beta-\alpha+1};\tfrac{ 1}{z}\right).%
          \end{split}
        \end{equation}
        The local
        monodromy matrix at this singularity is%
        \begin{equation}
          \label{9} \widetilde{M}_{\sigma_\infty}=
          \begin{bmatrix}
            a & 0
            \\ 0 & b
          \end{bmatrix}.
        \end{equation}
\end{itemize}
In the above formulae we denoted $ a:={\mathrm{e}}^{2\pi\mathrm{i} \alpha} $, $
  b:={\mathrm{e}}^{2\pi\mathrm{i} \beta} $ and $ c:={\mathrm{e}}^{2\pi\mathrm{i}
  \gamma} $.

To determine global monodromy group $\scM$ we fix the basis solutions
$[u_1(z), u_2(z)]$. Then we express all the monodromy matrices with respect to
this basis. Clearly, $M_{\sigma_0}$ is given by~\eqref{eq:M_zero}.  In order to
calculate the monodromy matrix $M_{\sigma_1}$ we need the connection formula
\begin{equation}
  \label{eq:8} 
  \vu(z)=[u_1(z), u_2(z)] = [v_1(z), v_2(z)]\vP = \vv(z)\vP,
\end{equation}
where $\vP$ is the connection matrix
\begin{equation}
  \label{eq:9}
  \vP =\begin{bmatrix}
 \dfrac{\mathrm{\Gamma} (\gamma ) \mathrm{\Gamma} (-\alpha -\beta +\gamma )}{\mathrm{\Gamma} (\gamma
   -\alpha ) \mathrm{\Gamma} (\gamma -\beta )} & \dfrac{\mathrm{\Gamma} (2-\gamma ) \mathrm{\Gamma}
   (-\alpha -\beta +\gamma )}{\mathrm{\Gamma} (1-\alpha ) \mathrm{\Gamma} (1-\beta )} \\[1em]
 \dfrac{\mathrm{\Gamma} (\gamma ) \mathrm{\Gamma} (\alpha +\beta -\gamma )}{\mathrm{\Gamma} (\alpha )
   \mathrm{\Gamma} (\beta )} & \dfrac{\mathrm{\Gamma} (2-\gamma ) \mathrm{\Gamma} (\alpha +\beta
   -\gamma )}{\mathrm{\Gamma} (\alpha -\gamma +1) \mathrm{\Gamma} (\beta -\gamma +1)} 
  \end{bmatrix},
\end{equation}
see \cite[Ch.~2.10]{Erdelyi:81V1::}, or \cite{MR4414300}. Then we have
\begin{equation}
  \label{eq:10}
   \begin{split}
    \cM_{\sigma_1}(\vu(z))&= \cM_{\sigma_1}(\vv(z)\vP)=\cM_{\sigma_1}(\vv(z))\vP\\
    &=
     \vv(z)\widetilde{M}_{\sigma_1}\vP= \vu(z)\vP^{-1} \widetilde{M}_{\sigma_1}\vP(z),
  \end{split}
\end{equation}
and hence
\begin{equation}
  \label{eq:10} M_{\sigma_1}:=
  \vP^{-1}\widetilde{M}_{\sigma_1}\vP.
\end{equation}
The loop around the infinity $\sigma_{\infty}$ is chosen such that
$\sigma_0\sigma_1\sigma_{\infty}=\id$. Then,
the monodromy matrix $M_{\sigma_\infty}$ is given by
\begin{equation}
  \label{eq:11A}
  M_{\sigma_\infty} = M_{\sigma_0}^{-1} M_{\sigma_1}^{-1}
\end{equation}

Now, we consider the special case when parameters $(\alpha,\beta,\gamma)$ are given by
\begin{equation}
\label{eq:case1}
(\alpha,\beta,\gamma) = \left(-\frac{1}{2},1+\frac{l}{2 },\frac{1}{2}\right).
\end{equation} 
In our analysis we follow results given in~\cite{MR4414300}. As 
\begin{equation}
  \label{eq:13}
  \gamma -\alpha -\beta = -\frac{l}{2}, \qquad \alpha-\beta = -\frac{l+3}{2},       
\end{equation}
the above formulae for local solutions and local monodromy matrices are valid
except the case when $l$ is an even integer. 
If $l=2l'\geq 0$ is an integer
$l'$, then local solutions  the local monodromy at $z=0$ are given
by~\eqref{eq:sol_zero} and by~\eqref{eq:M_zero}, respectively. However, in this
case  singularity at $z=1$ is logarithmic and local solutions given by
by~\eqref{eq:sol_one} coincide. Thus, as new basis we take $v_1(z)$ and
$\widetilde{v}_2(z)$ which we determine  using formulae (3.14) and (3.15)
from~\cite{MR4414300}. It has the form
\begin{equation}
  \label{eq:log}
\widetilde{v}_2(z) =\sqrt{z}\ln(1-z) + h(z)
\end{equation}
where $h(z)$ is an  function holomorphic at $z=1$. In this basis the local
monodromy is 
\begin{equation}
  \label{eq:15}
  \widetilde{M}_{\sigma_1} =
  \begin{bmatrix}
    1 & 0 \\ 2\pi \rmi& 1
  \end{bmatrix}.
\end{equation}
The connection matrix can be derive from formula 4.6 from~\cite{MR4414300} and has the form
\begin{equation}
  \label{eq:16}
  \vP =\begin{bmatrix}
    (2-\ln(4)) p_{21}& 1\\
p_{21}& 0
  \end{bmatrix}, \qquad p_{21}= -\dfrac{(-1)^{l'}\sqrt{\pi } }{l'! \mathrm{\Gamma} \left(-l'-\frac{1}{2}\right)}
\end{equation}

For parameters
\begin{equation}
\label{eq:case2}
(\alpha,\beta,\gamma)=(\widehat\alpha,\widehat\beta,\widehat\gamma) = \left(\frac{1}{2},1-\frac{n}{2},\frac{3}{2}\right),
\end{equation}
we have
\begin{equation}
  \label{eq:14}
  \widehat\gamma -\widehat\alpha -\widehat\beta = \frac{n}{2}, \qquad
  \widehat\alpha-\widehat\beta = \frac{n-1}{2}.
\end{equation}
Hence, if $n$ is not an even integer, then local solutions and local monodromy
matrices are given by~\eqref{eq:sol_zero}, \eqref{eq:M_zero},
\eqref{eq:sol_one}, and by~\eqref{eq:M_one}, respectively. Moreover, in this
case also the connection matrix is given by~\eqref{eq:9}. 

If $n=-2n'\leq 0$ is an integer then   local solutions at $z=1$ are
$(v_1(z),\widetilde{v}_2(z))$, where $v_1(z)$ is define by
\eqref{eq:sol_one}where and $\widetilde{v}_2(z)$ has the form~\eqref{eq:log}.
Hence, the local  monodromy matrix at $z=1$ are given by~\eqref{eq:15}. Using
formula 4.6 from~\cite{MR4414300} we can derive the connection matrix. It   has
the form
\begin{equation}
  \label{eq:17}
  \vP =\begin{bmatrix}
    -2 \ln(2) p_{21}& 1\\
p_{21}& 0
  \end{bmatrix}, \qquad p_{21}= -\dfrac{(-1)^{n'}\sqrt{\pi } }{2 n'! \mathrm{\Gamma} \left(\frac{1}{2}-n'\right)}.
\end{equation}


\begin{thebibliography}{10}

\bibitem{SzuminskiMaciejewski2026}
Wojciech Szumiński and Andrzej~J. Maciejewski.
\newblock Integrability of non-homogeneous Hamiltonian systems with gyroscopic coupling.
\newblock {\em Nonlinear Dyn.}, 114:435, 2026.

\bibitem{Morales:00::}
J.~J. Morales-Ruiz.
\newblock Kovalevskaya, Liapounov, Painlev\'e, Ziglin and the differential Galois theory.
\newblock {\em Regul. Chaotic Dyn.} 5(3):251--272, 2000.

\bibitem{Maciejewski:04::}
A.~J. Maciejewski and M.~Przybylska.
\newblock All meromorphically integrable 2D Hamiltonian systems with homogeneous potential of degree~3.
\newblock {\em Phys. Lett. A} 327:461--473, 2004.

\bibitem{Szuminski:15::}
W.~Szumiński, A.~J. Maciejewski, and M.~Przybylska.
\newblock Note on integrability of certain homogeneous Hamiltonian systems.
\newblock {\em Phys. Lett. A} 379(45--46):2970--2976, 2015.

\bibitem{Yoshida:88::}
H.~Yoshida.
\newblock Nonintegrability of the truncated Toda lattice Hamiltonian at any order.
\newblock {\em Commun. Math. Phys.} 116(4):529--538, 1988.

\bibitem{Mondejar:99::}
F.~Mondejar, S.~Ferrer, and A.~Vigueras.
\newblock On the non-integrability of Hamiltonian systems with sum of homogeneous potentials.
\newblock Technical Report, Univ. Politécnica de Cartagena, 1999.

\bibitem{2023NonDy.111..275P}
M.~Przybylska and A.~J. Maciejewski.
\newblock Integrability of Hamiltonian systems with gyroscopic term.
\newblock {\em Nonlinear Dyn.} 111(1):275--287, 2023.

\bibitem{Caranicolas_1989}
N.~D. Caranicolas.
\newblock A mapping for the study of the 1:1 resonance in a galactic type Hamiltonian.
\newblock {\em Celest. Mech. Dyn. Astron.} 47:87--96, 1989.

\bibitem{Caranicolas_1990a}
N.~D. Caranicolas.
\newblock Exact periodic orbits and chaos in polynomial potentials.
\newblock {\em Astrophys. Space Sci.} 167:305--313, 1990.

\bibitem{Kubu_2024}
O.~Kubu, A.~Marchesiello, and L.~{\v{S}}nobl.
\newblock Integrable systems in magnetic fields: the generalized parabolic cylindrical case.
\newblock {\em J. Phys. A} 57:235203, 2024.

\bibitem{Innanen_1985}
K.~A. Innanen.
\newblock The threshold of chaos for H\'enon--Heiles and related potentials.
\newblock {\em Astron. J.} 90:2377--2380, 1985.

\bibitem{Lanchares_2021}
V.~Lanchares, A.~I. Pascual, M.~In{\~a}rrea, and D.~Farrelly.
\newblock Reeb's theorem and periodic orbits for a rotating H\'enon--Heiles potential.
\newblock {\em J. Dyn. Differ. Equ.} 33:445--461, 2021.

\bibitem{Contopoulos_2002}
G.~Contopoulos.
\newblock {\em Order and Chaos in Dynamical Astronomy}.
\newblock Springer, 2002.

\bibitem{Hill_1905}
G.~W. Hill.
\newblock Researches in the lunar theory.
\newblock In {\em Collected Mathematical Works}, vol.~1, pp.~284--335. Carnegie Inst., Washington, 1905.



\bibitem{Vallado_2007}
D.~A. Vallado.
\newblock {\em Fundamentals of Astrodynamics and Applications}.
\newblock Microcosm Press / Springer, 3rd ed., 2007.

\bibitem{Giacaglia_2019}
G.~E.~O. Giacaglia and W.~Q. Lamas.
\newblock {\em Introduction to Artificial Satellites Dynamics}.
\newblock Independently published, 2019.

\bibitem{Ito_1987}
H.~Ito.
\newblock A criterion for non-integrability of Hamiltonian systems with nonhomogeneous potentials.
\newblock {\em Z. Angew. Math. Phys.} 38:459--476, 1987.

\bibitem{Royer_2011}
A.~Royer.
\newblock Why is the magnetic force similar to a Coriolis force?
\newblock {\em arXiv:1109.3624}, 2011.

\bibitem{Brandao_2015}
J.~E. Brand{\~a}o, F.~Moraes, M.~M. Cunha, J.~R.~F. Lima, and C.~Filgueiras.
\newblock Inertial--Hall effect: influence of rotation on Hall conductivity.
\newblock {\em Results Phys.} 5:55--59, 2015.

\bibitem{Combot_2022}
T.~Combot, A.~J. Maciejewski, and M.~Przybylska.
\newblock Integrability of the generalised Hill problem.
\newblock {\em Nonlinear Dyn.} 107:1989--2002, 2022.

\bibitem{Inarrea_2015}
M.~In{\~a}rrea, V.~Lanchares, J.~F. Palaci\'an, A.~I. Pascual, J.~P. Salas, and P.~Yanguas.
\newblock Lyapunov stability for a generalized H\'enon--Heiles system.
\newblock {\em Appl. Math. Comput.} 253:159--171, 2015.

\bibitem{Binney_Tremaine_2008}
J.~Binney and S.~Tremaine.
\newblock {\em Galactic Dynamics}.
\newblock Princeton Univ. Press, 2nd ed., 2008.

\bibitem{Elmandouh2016}
A.~A. Elmandouh.
\newblock On the dynamics of Armbruster--Guckenheimer--Kim galactic potential.
\newblock {\em Astrophys. Space Sci.} 361:182, 2016.

\bibitem{Salas_2022}
J.~P. Salas, V.~Lanchares, M.~In{\~a}rrea, and D.~Farrelly.
\newblock Coriolis coupling in a H\'enon--Heiles system.
\newblock {\em Commun. Nonlinear Sci. Numer. Simul.} 111:106484, 2022.

\bibitem{Lacomba_2012}
E.~A. Lacomba and J.~Llibre.
\newblock Dynamics of a galactic Hamiltonian system.
\newblock {\em J. Math. Phys.} 53:072901, 2012.

\bibitem{Henon:64::}
M.~H\'enon and C.~Heiles.
\newblock The applicability of the third integral of motion.
\newblock {\em Astron. J.} 69:73--79, 1964.

\bibitem{Armbruster_1989}
D.~Armbruster, J.~Guckenheimer, and S.~Kim.
\newblock Chaotic dynamics in systems with square symmetry.
\newblock {\em Phys. Lett. A} 140:416--420, 1989.

\bibitem{Beletsky_2001}
V.~V. Beletsky.
\newblock {\em Essays on the Motion of Celestial Bodies}.
\newblock Birkhäuser, Basel, 2001.

\bibitem{Chauvineau_1990}
B.~Chauvineau and F.~Mignard.
\newblock Generalized Hill's problem: Lagrangian Hill's case.
\newblock {\em Celest. Mech. Dyn. Astron.} 47:123--144, 1990.

\bibitem{MR2647643}
J.~J. Morales-Ruiz and J.~P. Ramis.
\newblock Integrability of dynamical systems through differential Galois theory.
\newblock {\em Contemp. Math.} 509:143--220, 2010.

\bibitem{Morales:99::}
J.~J. Morales-Ruiz.
\newblock {\em Differential Galois Theory and Non-Integrability of Hamiltonian Systems}.
\newblock Birkhäuser, Basel, 1999.

\bibitem{Put:03::}
M.~van~der Put and M.~F. Singer.
\newblock {\em Galois Theory of Linear Differential Equations}.
\newblock Springer, Berlin, 2003.

\bibitem{Audin:08::}
M.~Audin.
\newblock {\em Hamiltonian Systems and Their Integrability}.
\newblock AMS, 2008.

\bibitem{Maciejewski:03::a}
A.~J. Maciejewski and M.~Przybylska.
\newblock Non-integrability of a rigid satellite.
\newblock {\em Celest. Mech. Dyn. Astron.} 87:317--351, 2004.

\bibitem{Boucher:03::}
D.~Boucher and J.~A. Weil.
\newblock Application of Morales–Ramis theory to the planar three-body problem.
\newblock {\em IRMA Lect. Math. Theor. Phys.} 3:163--177, 2003.

\bibitem{Maciejewski:10::}
A.~J. Maciejewski and M.~Przybylska.
\newblock Partial integrability of Hamiltonian systems with homogeneous potential.
\newblock {\em Regul. Chaotic Dyn.} 15:551--563, 2010.

\bibitem{Maciejewski:11::}
A.~J. Maciejewski and M.~Przybylska.
\newblock Non-integrability of the three-body problem.
\newblock {\em Celest. Mech. Dyn. Astron.} 110:17--300, 2011.

\bibitem{Maciejewski_2025}
A.~J. Maciejewski, M.~Przybylska, and T.~Combot.
\newblock Non-integrability of the $n$-body problem.
\newblock {\em J. Eur. Math. Soc.}, 2025.

\bibitem{Acosta_2018}
P.~Acosta-Hum\'anez, M.~Alvarez-Ram\'irez, and T.~J. Stuchi.
\newblock Nonintegrability of the AGK quartic Hamiltonian.
\newblock {\em SIAM J. Appl. Dyn. Syst.} 17:78--96, 2018.

\bibitem{Yagasaki:18::}
K.~Yagasaki.
\newblock Nonintegrability of the unfolding of the fold–Hopf bifurcation.
\newblock {\em Nonlinearity} 31:341--365, 2018.

\bibitem{Huang:18::}
K.~Huang, S.~Shi, and W.~Li.
\newblock Meromorphic and formal first integrals for the Lorenz system.
\newblock {\em J. Nonlinear Math. Phys.} 25:106--121, 2018.

\bibitem{Combot:18::}
T.~Combot.
\newblock Integrability of the one-dimensional Schrödinger equation.
\newblock {\em J. Math. Phys.} 59:022105, 2018.

\bibitem{Mnasri:18::}
C.~Mnasri and A.~A. Elmandouh.
\newblock Plane motion under potential forces in a magnetic field.
\newblock {\em Results Phys.} 9:825--831, 2018.

\bibitem{Shibayama:18::}
M.~Shibayama.
\newblock Non-integrability of the spatial $n$-center problem.
\newblock {\em J. Differ. Equ.} 264:6891--6909, 2018.

\bibitem{Elmandouh:18::}
A.~A. Elmandouh.
\newblock Integrability of 2D Hamiltonians with variable Gaussian curvature.
\newblock {\em Nonlinear Dyn.} 93:933--943, 2018.

\bibitem{Szuminski:18a::}
W.~Szumiński.
\newblock Integrable and superintegrable weight-homogeneous systems.
\newblock {\em Commun. Nonlinear Sci. Numer. Simul.} 67:600--616, 2019.

\bibitem{Maciejewski:18::}
A.~J. Maciejewski and W.~Szumiński.
\newblock Non-integrability of semiclassical Jaynes–Cummings models.
\newblock {\em Appl. Math. Lett.} 82:132--139, 2018.

\bibitem{Maciejewski:17::}
A.~J. Maciejewski, M.~Przybylska, and W.~Szumiński.
\newblock Anisotropic Kepler and two fixed centres.
\newblock {\em Celest. Mech. Dyn. Astron.} 127:163--184, 2017.

\bibitem{Szuminski:16::}
W.~Szumiński and T.~Stachowiak.
\newblock Analysis of a constrained two-body problem.
\newblock {\em Springer Proc. Math. Stat.} 182:361--372, 2016.

\bibitem{Szuminski:24::}
W.~Szumiński.
\newblock A model of variable-length coupled pendulums.
\newblock {\em J. Sound Vib.} in press, 2024.

\bibitem{Szuminski_2025_JSV}
W.~Szumiński and T.~Kapitaniak.
\newblock Dynamics and non-integrability of the variable-length double pendulum.
\newblock {\em J. Sound Vib.} 611:119099, 2025.

\bibitem{Combot:13::}
T.~Combot.
\newblock Integrability conditions at order~2 for potentials of degree~$-1$.
\newblock {\em Nonlinearity} 26:95--120, 2013.

\bibitem{Maciejewski:16::}
A.~J. Maciejewski and M.~Przybylska.
\newblock Integrability of Hamiltonian systems with algebraic potentials.
\newblock {\em Phys. Lett. A} 380:76--82, 2016.

\bibitem{Hietarinta:87::}
J.~Hietarinta.
\newblock Direct methods for the search of a second invariant.
\newblock {\em Phys. Rep.} 147:87--154, 1987.

\bibitem{Post:2010::}
S.~Post.
\newblock Coupling constant metamorphosis and superintegrability.
\newblock {\em AIP Conf. Proc.} 1323:265--274, 2010.

\bibitem{Sergyeyev:12::}
A.~Sergyeyev.
\newblock Coupling constant metamorphosis for finite-dimensional systems.
\newblock {\em Phys. Lett. A} 376:2015--2022, 2012.

\bibitem{Iwasaki:91::}
K.~Iwasaki, H.~Kimura, S.~Shimomura, and M.~Yoshida.
\newblock {\em From Gauss to Painlev\'e}.
\newblock Vieweg, 1991.

\bibitem{Duval:09::}
G.~Duval and A.~J. Maciejewski.
\newblock Jordan obstruction to integrability.
\newblock {\em Ann. Inst. Fourier} 59:2839--2890, 2009.

\bibitem{Steklain_2006}
A.~F. Steklain and P.~S. Letelier.
\newblock Newtonian and pseudo-Newtonian Hill problem.
\newblock {\em Phys. Lett. A} 352:398--403, 2006.

\bibitem{Steklain_2009}
A.~F. Steklain and P.~S. Letelier.
\newblock Stability of orbits around a spinning body.
\newblock {\em Phys. Lett. A} 373:188--194, 2009.

\bibitem{Zotos_2019}
E.~E. Zotos and A.~F. Steklain.
\newblock Motion in the pseudo-Newtonian Hill system.
\newblock {\em Astrophys. Space Sci.} 364:184, 2019.

\bibitem{Kanavos_2002}
S.~S. Kanavos, V.~V. Markellos, E.~A. Perdios, et~al.
\newblock Photogravitational Hill problem.
\newblock {\em Earth Moon Planets} 91:223--241, 2002.

\bibitem{Markellos_2000}
V.~Markellos, A.~Roy, M.~Velgakis, et~al.
\newblock Radiation effects on Hill stability.
\newblock {\em Astrophys. Space Sci.} 271:293--301, 2000.

\bibitem{Heggie_2001}
D.~C. Heggie.
\newblock Escape in Hill’s problem.
\newblock In {\em The Restless Universe}, pp.~109--128, 2001.

\bibitem{Deng_2021}
Y.~Deng, S.~Ibrahim, and E.~E. Zotos.
\newblock Hill-type lunar problem with homogeneous potential.
\newblock {\em Meccanica} 56:2183--2195, 2021.

\bibitem{Combot:18b::}
T.~Combot and C.~Sanabria.
\newblock A symplectic Kovacic algorithm in dimension~4.
\newblock In {\em ISSAC 2018}, pp.~143--150, ACM, 2018.

\bibitem{Ford:73::}
J.~Ford.
\newblock Transition from analytic dynamics to statistical mechanics.
\newblock {\em Adv. Chem. Phys.} 24:155--183, 1973.

\bibitem{Mattheakis:22::}
M.~Mattheakis, D.~Sondak, A.~S. Dogra, and P.~Protopapas.
\newblock Hamiltonian neural networks for ODEs.
\newblock {\em Phys. Rev. E} 105:065305, 2022.

\bibitem{Chang:82::}
Y.~F. Chang, M.~Tabor, and J.~Weiss.
\newblock Analytic structure of the H\'enon--Heiles Hamiltonian.
\newblock {\em J. Math. Phys.} 23:531--538, 1982.

\bibitem{Grammaticos:83::}
B.~Grammaticos, B.~Dorizzi, and A.~Ramani.
\newblock Integrability of Hamiltonians with cubic and quartic potentials.
\newblock {\em J. Math. Phys.} 24:2289--2295, 1983.

\bibitem{Ito:85::}
H.~Ito.
\newblock Non-integrability of H\'enon--Heiles.
\newblock {\em Kodai Math. J.} 8:120--138, 1985.

\bibitem{Li:11::}
W.~Li, S.~Shi, and B.~Liu.
\newblock Non-integrability of a class of Hamiltonian systems.
\newblock {\em J. Math. Phys.} 52:112702, 2011.

\bibitem{Zotos_2020_HHsingular}
E.~E. Zotos, W.~Chen, J.~F. Navarro, and T.~Saeed.
\newblock H\'enon--Heiles potential with singular terms.
\newblock {\em Int. J. Bifurcation Chaos} 30:2050197, 2020.

\bibitem{Navarro_2021_EPJPlus}
J.~F. Navarro.
\newblock Surface of section for H\'enon--Heiles with singular terms.
\newblock {\em Eur. Phys. J. Plus} 136:573, 2021.

\bibitem{Llibre_2019}
J.~Llibre and C.~Valls.
\newblock Global dynamics of the integrable AGK potential.
\newblock {\em Astrophys. Space Sci.} 364:130, 2019.

\bibitem{Kasperczuk_1994}
S.~Kasperczuk.
\newblock Integrability of the Yang--Mills Hamiltonian system.
\newblock {\em Celest. Mech. Dyn. Astron.} 58:387--391, 1994.

\bibitem{JimenezLara_2011_JMP}
L.~Jim\'enez-Lara and J.~Llibre.
\newblock Periodic orbits and nonintegrability of Yang--Mills systems.
\newblock {\em J. Math. Phys.} 52:032901, 2011.

\bibitem{Maciejewski_2005_JMP}
A.~J. Maciejewski and M.~Przybylska.
\newblock Darboux points and integrability of homogeneous polynomial potentials.
\newblock {\em J. Math. Phys.} 46:062901, 2005.


\bibitem{Magnus_1966}
W.~Magnus, F.~Oberhettinger, and R.~P. Soni.
\newblock {\em Formulas and Theorems for the Special Functions of Mathematical Physics}.
\newblock Springer, 1966.

\bibitem{AcostaHumanez_2021}
P.~B.~Acosta-Hum\'anez, M.~\'Alvarez-Ram\'irez, and T.~J.~Stuchi.
\newblock A note on the integrability of exceptional potentials via polynomial
bi-homogeneous potentials.
\newblock {\em Bull. Comput. Appl. Math.} 9(2):59--75, 2021.

\bibitem{Llibre_Valls_2015}
J.~Llibre and C.~Valls.
\newblock Analytic integrability of Hamiltonian systems with exceptional potentials.
\newblock {\em Phys. Lett. A} 379(38):2295--2299, 2015.

\bibitem{Nakagawa_Maciejewski_Przybylska_2005}
K.~Nakagawa, A.~J.~Maciejewski, and M.~Przybylska.
\newblock New integrable Hamiltonian system with first integral quartic in momenta.
\newblock {\em Phys. Lett. A} 343(1--3):171--173, 2005.


\bibitem{Szuminski_2024_NonlinearDyn}
W.~Szumiński, M.~Przybylska, and A.~J. Maciejewski.
\newblock Chaos and integrability of relativistic homogeneous potentials.
\newblock {\em Nonlinear Dyn.} 112:4879--4898, 2024.

\bibitem{Przybylska_2023_DestructiveRelativity}
M.~Przybylska, W.~Szumiński, and A.~J. Maciejewski.
\newblock Destructive relativity.
\newblock {\em Chaos} 33:063156, 2023.

\bibitem{Szuminski_2025_NonlinearDyn}
W.~Szumiński and M.~Przybylska.
\newblock Integrability and chaos of relativistic systems on 2D manifolds.
\newblock {\em Nonlinear Dyn.} 113:30057--30085, 2025.

\bibitem{Erdelyi:81V1::}
A.~Erd\'elyi and W.~Magnus.
\newblock {\em Higher Transcendental Functions, Vol.~I}.
\newblock Krieger, 1981.

\bibitem{MR4414300}
Y.~Haraoka.
\newblock Connection relations for the Gauss hypergeometric equation.
\newblock {\em Kumamoto J. Math.} 35:1--60, 2022.

\end{thebibliography}
\end{document}